\definecolor{darkred}{rgb}{0.5,0,0}
\titleformat*{\paragraph}{\sc}
\newcolumntype{C}[1]{>{\centering\arraybackslash}p{#1}}
\theoremstyle{plain}
\crefname{asn}{Assumption}{Assumptions}
\crefname{lem}{Lemma}{Lemmas}
\newtheorem{prop}{Proposition}
\newtheorem*{claim*}{Claim}
\newtheorem*{cor*}{Corollary}
\theoremstyle{definition}
\DeclareMathOperator*{\tr}{trace}
\DeclareMathOperator*{\diag}{diag}
\DeclareMathOperator*{\var}{Var}
\DeclareMathOperator{\avar}{aVar}
\DeclareMathOperator{\abias}{aBias}
\newcommand{\alert}[1]{#1}
\crefname{sappsec}{Supplemental Appendix}{Supplemental Appendices}
\crefname{sappsubsec}{Supplemental Appendix}{Supplemental Appendices}
\crefname{sappsubsubsec}{Supplemental Appendix}{Supplemental Appendices}
\crefname{appsec}{Appendix}{Appendices}
\begin{document}

\title{\texorpdfstring{\vspace{-1.5\baselineskip}}{} Local Projections vs. VARs: \\ Lessons From Thousands of DGPs\thanks{Email: {\tt dakel@twosigma.com}, {\tt mikkelpm@princeton.edu}, and {\tt ckwolf@mit.edu}. We received helpful comments from Isaiah Andrews, R{\'e}gis Barnichon, Gabe Chodorow-Reich, Viet Hoang Dinh, \`{O}scar Jord\`{a}, Helmut L\"{u}tkepohl, Massimiliano Marcellino, Pepe Montiel Olea, Ulrich M\"{u}ller, Emi Nakamura, Frank Schorfheide, Chris Sims, Lumi Stevens, Jim Stock, Mark Watson, Andrei Zeleneev, several anonymous referees, and numerous seminar and conference participants. Samya Aboutajdine, Tom{\'a}s Caravello, Chun-Beng Leow, and Eric Qian provided excellent research assistance. Plagborg-M{\o}ller acknowledges that this material is based upon work supported by the NSF under Grant {\#}2238049, and Wolf does the same for Grant {\#}2314736. Any opinions, findings, and conclusions or recommendations expressed in this material are those of the authors and do not necessarily reflect the views of the NSF. In addition, the views expressed herein are solely the views of the authors and are not necessarily the views of Two Sigma Investments, LP or any of its affiliates. They are not intended to provide, and should not be relied upon for, investment advice. \alert{This research was mostly conducted while Dake Li was affiliated with Princeton University.}}}
\author{\begin{tabular}{ccc}
Dake Li & Mikkel Plagborg-M{\o}ller & Christian K. Wolf \\
{\small Two Sigma Investments, LP} & {\small Princeton University} & {\small MIT \& NBER}
\end{tabular}}
\date{\texorpdfstring{\bigskip}{ }\today}
\maketitle

\vspace{-1em}
\begin{abstract}
We conduct a simulation study of Local Projection (LP) and Vector Autoregression (VAR) estimators of structural impulse responses across thousands of data generating processes, designed to mimic the properties of the universe of U.S. macroeconomic data. Our analysis considers various identification schemes and several variants of LP and VAR estimators, employing bias correction, shrinkage, or model averaging. A clear bias-variance trade-off emerges: LP estimators have lower bias than VAR estimators, but they also have substantially higher variance at intermediate and long horizons. Bias-corrected LP is the preferred method if and only if the researcher overwhelmingly prioritizes bias. For researchers who also care about precision, VAR methods are the most attractive---Bayesian VARs at short and long horizons, and least-squares VARs at intermediate and long horizons.
\end{abstract}
\emph{Keywords:} external instrument, impulse response function, local projection, proxy variable, structural vector autoregression. \emph{JEL codes:} C32, C36.

\clearpage

\section{Introduction}
\label{sec:intro}

Since \citet{Jorda2005} introduced the popular local projection (LP) impulse response estimator, there has been a debate about its benefits and drawbacks relative to Vector Autoregression (VAR) estimation \citep{Sims1980}. Recently, \citet{Plagborg2020} proved that these two methods in fact estimate precisely the same impulse responses asymptotically, provided that the lag length used for estimation tends to infinity. This result holds regardless of identification scheme and regardless of the underlying data generating process (DGP). Nevertheless, the question of which estimator to choose in finite samples remains open. It is also an urgent question, since researchers have remarked that LPs and VARs can give conflicting results when applied to central economic questions such as the effects of monetary or fiscal stimulus \citep[e.g.,][]{Ramey2016,Nakamura2018}.

Whereas the LP estimator utilizes the sample autocovariances flexibly by directly projecting an outcome at the future horizon $h$ on current covariates, a VAR($p$) estimator instead extrapolates longer-run impulse responses from the first $p$ sample autocovariances. Hence, though the estimates from the two methods agree approximately at horizons $h \leq p$, they can disagree substantially at intermediate and long horizons.\footnote{See \citet[Proposition 2]{Plagborg2020} for a formal result.} Intuitively, the extrapolation employed by VARs should yield a lower variance but potentially a higher bias than for LPs, perfectly analogous to the trade-off between direct and iterated reduced-form forecasts \citep{Schorfheide2005,Kilian2017}.\footnote{The trade-off is also conceptually similar to the relationship between polynomial series estimators and kernel estimators in cross-sectional nonparametric regression.} How much more should one care about bias than variance to optimally choose the LP estimator over the VAR estimator in realistic sample sizes? And how does the trade-off depend on the DGP? Unfortunately, these questions are challenging to answer analytically, due to the dynamic and nonlinear nature of the time series estimators, as well as the breadth of DGPs encountered in applied practice.

In this paper we illuminate the bias-variance trade-off in impulse response estimation through a comprehensive simulation study, applying LP and VAR methods to thousands of empirically relevant DGPs. Our goal is to identify which estimators perform well \emph{on average} across many DGPs and thus may serve as practical default procedures. Rather than insisting on the usual binary distinction between ``local projections'' and ``VARs'', we furthermore consider an entire menu of related estimation approaches that employ bias correction, shrinkage, or model-averaging. We find that the usual least-squares LP estimator tends to have lower bias than the least-squares VAR estimator, as expected, but also that this bias reduction comes at the cost of substantially higher variance. Out of all the procedures we analyze, bias-corrected LP is the most attractive estimator if \emph{and only if} the researcher overwhelmingly prioritizes bias. If, however, the researcher also cares about precision (as in the conventional mean squared error criterion), then VAR methods are the most attractive; in particular, Bayesian VARs perform well at short and long horizons, while it is difficult to beat the least-squares VAR estimator at intermediate and long horizons.

Our simulation study considers an extensive array of DGPs, obtained by drawing specifications at random from a large-scale, empirically calibrated dynamic factor model (DFM).\footnote{Our overall approach is inspired by \citet{Lazarus2018}, who are instead interested in the question of how to select among different long-run variance estimators.} We fit the DFM to the data set of \citet{Stock2016}, which contains a large number of quarterly U.S. macroeconomic time series spanning a wide variety of variable categories. As emphasized by \citeauthor{Stock2016}, such DFMs can accurately capture the joint co-movements of conventional macroeconomic data, and so our simulation results will be informative about the universe of standard U.S. time series. This estimated DFM exhibits realistic and complex dynamics in the short and long run, including cointegrating relationships among the latent factors. From the encompassing 207-variable DFM we then draw 6,000 random subsets of five variables (subject to constraints that emulate applied practice); all results reported below are essentially unchanged if instead we limit attention to 17 of the most commonly used macro series out of the 207. The randomly drawn subsets of time series constitute the set of DGPs that we consider for our simulation study. As the calibrated DFM is known to us, we can compute the true impulse responses, and therefore also estimator biases and mean squared errors. Importantly, none of these many DGPs can be exactly represented as a finite-order VAR model, yielding a non-trivial bias-variance trade-off between the LP and VAR estimators. Moreover, our DGPs exhibit substantial heterogeneity in how well they can be approximated by VAR models, in persistence and shape of impulse response functions, and in the invertibility of the structural shocks, consistent with the heterogeneity faced by applied researchers. While our results inevitably depend on the specification of the encompassing model, we believe that an estimation method that works well across our multitude of empirically calibrated DGPs has substantial promise as a default procedure.

We study the ability of several variants of LP and VAR methods to accurately estimate impulse response functions. Consistent with the majority of applied work, the estimators are applied to data in levels, rather than transforming to stationarity prior to estimation. Since VARs with very large lag lengths are asymptotically equivalent to LPs \citep{Plagborg2020,Xu2023}, we focus on VAR estimators with moderate lag length choices, as conventionally found in the literature. In addition to the popular least-squares LP and VAR estimators, we further enrich the bias-variance possibility frontier by considering: (i) small-sample bias correction of the VAR coefficients \citep{Pope1990,Kilian1998} and LP impulse response estimates \citep{Herbst2021}; (ii) penalized LP \citep{Barnichon2019}, which smooths out impulse response functions; (iii) Bayesian VAR estimation, with priors selected as in \citet{Giannone2015}; and (iv) model averaging of univariate and multivariate VAR models of various lag lengths \citep{Hansen2016}. For each estimation method, we consider three oft-used structural identification schemes: observed shocks, instrumental variables (IVs)/proxies, and recursive identification. For IV identification, we further distinguish between internal IV methods \citep{Ramey2011,Plagborg2020} and external IV methods \citep{Stock2008,Stock2012bpea,Mertens2013}. We then evaluate the performance of these estimators through the lens of loss functions with varying weights on bias and variance.

Applying the estimation methods to simulated data from the thousands of DGPs, a clear and unavoidable bias-variance trade-off emerges. We highlight four main lessons:

\begin{enumerate}[1.]

\item Though they perform similarly at short horizons, least-squares LP and VAR estimators lie on opposite ends of the bias-variance spectrum at intermediate and long horizons: small bias and large variance for LPs, and large bias and small variance for VARs. Strictly speaking, this statement is only true after applying the small-sample bias correction procedures of \citet{Pope1990}, \citet{Kilian1998}, and \citet{Herbst2021}, which partially ameliorate the deleterious effects of the high persistence of our DGPs on the biases of the respective estimators. We find such bias correction to be particularly important for LPs.

\item Out of all the estimators we consider, bias-corrected LP is the preferred option if \emph{and only if} the loss function almost exclusively puts weight on bias (at the expense of variance). This is because the lower bias of LP relative to VAR comes at the cost of substantially higher variance, especially at longer horizons.

\item If the loss function attaches at least moderate weight to variance (in addition to bias), such as in the case of mean squared error loss, VAR methods are attractive. But the optimal VAR method depends on the horizon: Bayesian VARs tend to perform well at short horizons, least-squares VARs at intermediate horizons, and the two methods are comparable at long horizons.

\item In the case of IV identification, the SVAR-IV estimator is heavily median-biased, but provides substantial reduction in dispersion, measured by the interquartile range. Depending on the weight attached to bias, it may therefore be justifiable to use external IV methods despite their lack of robustness to non-invertibility (unlike internal IV methods).

\end{enumerate}

Our findings provide a novel perspective on recent work emphasizing the potential dangers of VAR model mis-specification \citep{Ramey2016,Nakamura2018}. We consider DGPs that do not admit finite-order VAR representations, so VAR methods indeed suffer from larger bias, as cautioned there. Reducing that bias via direct projection, however, tends to incur a steep cost in terms of increased sampling variance at intermediate and long horizons. Researchers who prefer to employ LP estimators should therefore be prepared to pay that price, and furthermore should apply the \citet{Herbst2021} bias correction procedure when their data is persistent, as is usually the case.

\paragraph{Literature.}
Our simulation study is inspired by the seminal work of \citet{Marcellino2006} on direct and iterated multi-step forecasts, though we focus instead on structural impulse responses. While simulation studies in the forecasting literature often analyze low-dimensional specifications, we consider multi-variable systems, consistent with standard practice in the applied structural macroeconometrics literature. The structural perspective also requires us to contend with issues such as the variety of different popular shock identification schemes, normalization of impulse responses, and the special role of external instrumental variables.

Our large-scale model set-up differs from prior simulation studies of LP and VAR methods, which have considered at most a handful of DGPs. Examples here include \citet{Jorda2005}, \citet{Meier2005}, \citet{Kilian2011}, \citet{Brugnolini2018}, \citet{Choi2019}, \citet{Austin2020}, and \citet{Bruns2021}. These papers either obtain their DGPs from stylized, low-dimensional VARMA models, calibrated DSGE models, and/or a few empirically calibrated VAR models. Our encompassing DGP is instead designed to closely mimic applied practice: we consider a non-stationary DFM with rich common and idiosyncratic dynamics that accurately captures key properties of the kinds of aggregate time series typically used in standard macroeconometric analyses. Our analysis also differs in the following respects: we consider shrinkage estimation procedures as competitors to the least-squares estimators; we study several popular structural identification schemes; and we examine how our conclusions vary with the impulse response horizon and the researcher's loss function. All these features are essential to the above-mentioned main lessons that we draw from our results.

Even though the simulation results are at the heart of our analysis, we start off by illustrating the bias-variance trade-off through an analytical example that builds on \citet{Schorfheide2005}. That paper develops a general theory of the asymptotic bias and variance of direct and iterated (reduced-form) forecasts under local mis-specification. While these theoretical results are valuable for analytically distilling the forces at work, they do not by themselves resolve the bias-variance trade-off faced by practitioners, as this trade-off invariably depends in a complicated fashion on many features of the DGP.

Finally, we stress that our paper focuses solely on point estimation, as opposed to inference or hypothesis testing. See \citet{Inoue2020}, \citet{MontielOlea2020}, and \citet{Xu2023} for theoretical as well as simulation results on VAR and LP confidence interval procedures. Moreover, we focus exclusively on impulse response estimands, rather than variance decompositions or historical decompositions.

\paragraph{Outline.}
\cref{sec:biasvar} illustrates the bias-variance trade-off for LP and VAR estimators using a simple analytical example. \cref{sec:dgp} describes the empirically calibrated dynamic factor model that we use to generate our many DGPs. \cref{sec:estim} defines the menu of LP- and VAR-based estimation procedures. \cref{sec:results} contains our main simulation results and robustness checks. \cref{sec:conclusion} summarizes the lessons for applied researchers and then offers guidance for future research. The appendix contains implementation details. A supplemental appendix with proofs and further simulation results as well as a Matlab code suite are available online.\footnote{\url{https://github.com/dake-li/lp_var_simul} \label{fn:github}}

\section{The bias-variance trade-off}
\label{sec:biasvar}

This section motivates our simulation study with an analytical discussion of the bias-variance trade-off between LP and VAR impulse response estimators. \cref{subsec:biasvar_illustration} analyzes these estimators in the context of a simple toy model that cleanly illustrates the trade-off, and \cref{subsec:biasvar_outlook} connects this analytical discussion to the rest of the paper.

\subsection{Illustrative example}
\label{subsec:biasvar_illustration}

\citet{Plagborg2020} show that the impulse response estimands of VAR and LP estimators with $p$ lags generally differ at horizons $h > p$: the VAR extrapolates from the first $p$ sample autocovariances, while LP exploits all autocovariances out to horizon $h+p$. This observation suggests the presence of a bias-variance trade-off whenever the true DGP is not a finite-order VAR, perfectly analogous to the choice between ``direct'' and ``iterated'' predictions in multi-step forecasting \citep{Marcellino2006}. We here formalize this basic intuition by extending the arguments of \citet{Schorfheide2005} to structural impulse response estimation in a simple, albeit non-stationary DGP.

\paragraph{Model.}
Consider a simple sequence of drifting DGPs for the scalar time series $y_t$:
\begin{equation}
	y_t = y_{t-1} + \varepsilon_{1,t} + \tau\varepsilon_{1,t-1} + \frac{\alpha}{\sqrt{T}}\varepsilon_{1,t-2} + \varepsilon_{2,t}, \label{eq:DGP_simple}
\end{equation}
where $\varepsilon_t \equiv (\varepsilon_{1,t},\varepsilon_{2,t})'$ is an i.i.d. white noise process with $\var(\varepsilon_t) = \diag(1,\sigma_2^2)$, and $y_0=0$. We assume that the researcher observes $w_t \equiv (\varepsilon_{1,t},y_t)'$, i.e., she observes the shock $\varepsilon_{1,t}$ but not $\varepsilon_{2,t}$. The above DGP drifts towards a unit-root VAR(1) process in $w_t$ at rate $T^{-1/2}$, where $T$ is the sample size. We show below that this ensures a non-trivial bias-variance trade-off in the limit $T\to\infty$. The DGP captures the notion that finite-order autoregressive models are often a good---but not exact---approximation to the true underlying DGP. The degree of autoregressive mis-specification is governed by the parameter $\alpha$.\footnote{To interpret its units, consider a distributed lag regression of $\Delta y_t$ on $\varepsilon_{1,t}$, $\varepsilon_{1,t-1}$, and $\varepsilon_{1,t-2}$. Then it is standard to show that the t-statistic for significance of the second lag converges in distribution to $N(\alpha/\sigma_2,1)$.} 

We are interested in the impulse responses of $y_t$ with respect to a unit impulse in $\varepsilon_{1,t}$. The true impulse response function implied by the model \eqref{eq:DGP_simple} equals $\theta_{h,T} \equiv 1+\tau \mathbbm{1}(h \geq 1) + \alpha T^{-1/2}\mathbbm{1}(h \geq 2)$ at horizon $h$. This impulse response function reflects---in stark fashion---the common empirical finding that signal-to-noise ratios are especially low at longer horizons, here $h \geq 2$, in the sense that the increment $\theta_{2,T}-\theta_{1,T}$ is of the same asymptotic order as the standard errors of the LP and VAR estimators, as shown formally below.

\paragraph{Estimators.}
For now, we consider two estimators of $\theta_{h,T}$.

\begin{enumerate}[1.]
	
	\item {\bf LP.} The least-squares local projection estimator $\hat{\beta}_h$ is obtained from the OLS regression
	\begin{equation}
		y_{t+h} = \hat{\beta}_h \varepsilon_{1,t} + \hat{\zeta}_h' w_{t-1} + \text{residual}_{t,h} \label{eq:LP_simple},
	\end{equation}
	at each horizon $h$. Notice that this LP specification controls for one lag of the data.
	
	\item {\bf VAR.} We consider a recursive VAR specification in $w_t=(\varepsilon_{1,t},y_t)'$, again with one lag. Define the usual least-squares coefficient estimator $\hat{A} \equiv (\sum_{t=2}^T w_t w_{t-1}')(\sum_{t=2}^T w_{t-1} w_{t-1}')^{-1}$ and residual covariance matrix $\hat{\Sigma} \equiv T^{-1}\sum_{t=2}^T \hat{u}_t\hat{u}_t'$, where $\hat{u}_t \equiv w_t - \hat{A} w_{t-1}$. Define the lower triangular Cholesky factor $\hat{C}$, where $\hat{C}\hat{C}' = \hat{\Sigma}$. The un-normalized VAR impulse responses with respect to the first orthogonalized shock at horizon $h$ are given by $\hat{A}^h \hat{C}e_1$, where $e_j$ is the $j$-th unit vector of dimension 2, $j=1,2$. To facilitate comparison with LP, we normalize the impact response of the first variable in the VAR (i.e., $\varepsilon_{1,t}$) with respect to the first shock to be 1. This yields the estimator $\hat{\delta}_h \equiv e_2'\hat{A}^h \hat{\gamma}$, where  $\hat{\gamma} \equiv (1, \hat{\kappa})'$ and $\hat{\kappa} \equiv \hat{\Sigma}_{21}/\hat{\Sigma}_{11}$.\footnote{We have $\hat{C} = \begin{psmallmatrix}
			\sqrt{\hat{\Sigma}_{11}} & 0 \\
			\hat{\Sigma}_{21}/\sqrt{\hat{\Sigma}_{11}} & \hat{\Sigma}_{22} - \hat{\Sigma}_{21}^2/\hat{\Sigma}_{11}
		\end{psmallmatrix}$. We therefore achieve the desired normalization of the impact effect of the shock by dividing $\hat{C}e_1$ by $\sqrt{\hat{\Sigma}_{11}}$. This gives the normalized impulse responses $\hat{A}^h \hat{\gamma}$.}
	
\end{enumerate}

\paragraph{Trade-off.}
Along the stated asymptote, the researcher faces a clear bias-variance trade-off between the LP and VAR impulse response estimators:

\begin{prop}
	\label{prop:biasvar_simple}
	
	Consider the model \eqref{eq:DGP_simple}, and fix $h \geq 0$, $\tau \in \mathbb{R}$, $\sigma_2 > 0$, and $\alpha \in \mathbb{R}$. Assume $E(\varepsilon_{j,t}^4)<\infty$ for $j=1,2$. Then, as $T \rightarrow \infty$,
	\begin{equation}
		\sqrt{T}(\hat{\beta}_h-\theta_{h,T}) \stackrel{d}{\to} N(\abias_{\text{LP},h}, \avar_{\text{LP},h}),\quad \sqrt{T}(\hat{\delta}_h-\theta_{h,T}) \stackrel{d}{\to} N(\abias_{\text{VAR},h}, \avar_{\text{VAR},h}),
	\end{equation}
	where for all $h \geq 0$,
	\[\abias_{\text{LP},h} \equiv 0, \quad \avar_{\text{LP},h} \equiv \lbrace 1+(h-1)(1+\tau)^2\rbrace \mathbbm{1}(h \geq 1) + (h+1)\sigma_2^2.\]
	For $h \in \lbrace 0,1\rbrace$, we have $\abias_{\text{VAR},h}=\abias_{\text{LP},h}=0$ and $\avar_{\text{VAR},h}=\avar_{\text{LP},h}$. For $h \geq 2$,
	\[\abias_{\text{VAR},h} \equiv -\alpha, \quad \avar_{\text{VAR},h} \equiv (1+\tau)^2+2\sigma_2^2.\]
\end{prop}

\begin{proof}
	Please see \cref{app:proofs}.
\end{proof}

At horizons $h \in \lbrace 0,1\rbrace$, there is no bias-variance trade-off: on impact, the two estimators are numerically equivalent; at $h=1$, both are asymptotically unbiased with identical asymptotic variance, consistent with \citet{Plagborg2020}. Intuitively, the equivalence at $h=1$ reflects the fact that the VAR(1) estimator does not extrapolate, instead reporting the direct projection of $y_{t+1}$ on $w_t$, exactly as LP does \citep{Plagborg2020}.

At horizons $h \geq 2$ (i.e., exceeding the lag length used for estimation), the bias-variance trade-off is non-trivial. Specifically, the asymptotic biases satisfy $|\abias_{\text{VAR},h}|=|\alpha| > 0 =|\abias_{\text{LP},h}|$ whenever $\alpha \neq 0$, while the asymptotic variances satisfy $\avar_{\text{LP},h}-\avar_{\text{VAR},h} = 1+\sigma_2^2+(h-2)[(1+\tau)^2+\sigma_2^2]>0$. Intuitively, LP directly projects $y_{t+h}$ on the shock $\varepsilon_{1,t}$, which is uncorrelated with any lagged controls, so the asymptotic bias is always zero. In contrast, the VAR(1) estimator extrapolates the response at horizon $h$: the model's structure implies that the precisely estimated autocovariances at lag 1 suffice to compute impulse responses at longer horizons. Though this tight parametric extrapolation yields a low variance relative to LP, it incurs a bias due to dynamic mis-specification when $\alpha \neq 0$. In the simple DGP \eqref{eq:DGP_simple}, the asymptotic bias of the VAR estimator could be eliminated by simply increasing the lag length to 2 or higher, but in practice it may be difficult to determine the appropriate lag length, as we demonstrate below in \cref{sec:results_selection}. The fact that both the asymptotic bias and variance of the VAR impulse response estimator are constant at horizons $h \geq 2$ is a special feature of the stylized DGP \eqref{eq:DGP_simple}.\footnote{In this DGP, the VAR coefficients on lagged $y_t$ are estimated super-consistently due to the unit root, so to first order, estimation uncertainty arises only from the coefficients on lagged $\varepsilon_{1,t}$.} Nevertheless, our simulation study below will demonstrate the robustness of the qualitative predictions that (i) the bias of VAR is high at intermediate and long horizons relative to LP, and (ii) the difference between the variance of LP and that of VAR tends to increase as a function of the horizon.\footnote{We derived similar analytical results for a stationary DGP in a previous working paper version of this article \citep{Li2022}.}

How does the optimal choice of estimator depend on the researcher's preferences concerning bias and variance? To evaluate the performance of a given estimator $\hat{\theta}_h$ of $\theta_{h,T}$, we will throughout this paper consider loss functions of the form\footnote{The objective function \eqref{eq:loss_simple} is not a loss function in the usual decision theoretic sense (which would call it a risk function when $\omega=\frac{1}{2}$). We proceed with the non-standard terminology for ease of exposition.}
\begin{equation}
	\mathcal{L}_\omega(\theta_{h,T}, \hat{\theta}_h) = \omega \times \left(\mathbb{E}[\hat{\theta}_h - \theta_{h,T}] \right )^2 + (1-\omega) \times \var(\hat{\theta}_h). \label{eq:loss_simple}
\end{equation}
For $\omega = \frac{1}{2}$, this is proportional to the mean squared error (MSE). For $\omega > \frac{1}{2}$, the researcher is more concerned about (squared) bias than variance, and for $\omega=1$ the researcher exclusively cares about bias. Substituting the asymptotic bias and variance expressions in \cref{prop:biasvar_simple} into the above loss function, we find that LP is preferred over VAR (asymptotically) if and only if the researcher prioritizes bias sufficiently heavily at the expense of variance, namely when $\omega \geq \omega_h^* \equiv 1-\alpha^2/(\alpha^2 + \avar_{\text{LP},h}-\avar_{\text{VAR},h}) \in (0,1)$ (focusing here on the interesting case $h \geq 2$ and $\alpha \neq 0$). We remark that---even in the very simple DGP \eqref{eq:DGP_simple}---the indifference weight $\omega_h^*$ depends sensitively on all the model parameters and the horizon $h$.

\subsection{Outlook}
\label{subsec:biasvar_outlook}

Because analytical bias-variance calculations will invariably end up depending in complicated ways on a multitude of parameters, we will in the rest of this paper use simulations to explore the nature of the bias-variance trade-off across a rich and empirically relevant set of DGPs. In the language of \cref{subsec:biasvar_illustration}, these DGPs will inform us about empirically plausible degrees of mis-specification $\alpha$, impulse response function shapes $\tau$, and relative shock importances $\sigma_2^2$, and therefore about the practically relevant bias weight $\omega_h^*$ necessary to justify the use of one linear projection technique over another one. Moreover, we will also consider several variants of the standard least-squares LP and VAR estimators, thus allowing us to further trace out the bias-variance possibility frontier.

\section{Data generating processes}
\label{sec:dgp}

This section presents our DGPs. We define the empirically calibrated encompassing model in \cref{sec:dgp_model}, from which we draw thousands of DGPs with corresponding structural impulse response estimands, as described in \cref{sec:dgp_estimand}. We discuss implementation details in \cref{sec:dgp_implementation}, and provide summary statistics for the DGPs in \cref{sec:dgp_summ_stat}. Various modifications to this baseline set of DGPs are  considered later in \cref{sec:results_robustness}.

\subsection{Encompassing model}
\label{sec:dgp_model}

We construct our simulation DGPs from an encompassing model that is known to accurately capture the time series properties of many U.S. macroeconomic time series: a dynamic factor model (DFM) fitted to the well-known \citet{Stock2016} data set. Because we seek to follow applied practice in using data in levels rather than first differences, we employ a \emph{non-stationary} variant of the DFM estimated by \citeauthor{Stock2016}.

The DFM postulates that a large-dimensional $n_X \times 1$ vector $X_t$ of observed macroeconomic time series is driven by a low-dimensional $n_f \times 1$ vector $f_t$ of latent factors, as well as an $n_X \times 1$ vector $v_t$ of idiosyncratic components. The latent factors are assumed to follow a non-stationary Vector Error Correction Model (VECM) with VAR($p_f$) representation
\begin{equation}
f_t = \Phi(L) f_{t-1} + H\varepsilon_t, \label{eq:factors}
\end{equation}
where $\varepsilon_t=(\varepsilon_{1,t},\dots,\varepsilon_{n_f,t})'$ is an $n_f \times 1$ vector of aggregate shocks, which are i.i.d. and mutually uncorrelated, with $\var(\varepsilon_t) = I_{n_f}$. The $n_f \times n_f$ matrix $H$ determines the impact impulse responses of the factors with respect to the aggregate shocks. The observed macroeconomic series $X_t$ are given by
\begin{equation}
X_t = \Lambda f_t + v_t, \label{eq:observables}
\end{equation}
where the idiosyncratic component $v_{i,t}$ for macro observable $X_{i,t}$ follows the potentially non-stationary AR($p_v$) process
\begin{equation}
v_{i,t} = \Gamma_i(L) v_{i,t-1} + \Xi_i \xi_{i,t}, \label{eq:idio_errors}
\end{equation}
with $\xi_{i,t}$ i.i.d. across $t$ and $i$. We assume that all shocks and innovations are jointly normal and homoskedastic. We will next in \cref{sec:dgp_estimand} describe how we construct our many lower-dimensional DGPs from this encompassing large-scale DFM; \cref{sec:dgp_implementation} then follows up with implementation details, including in particular a discussion of how the parameters of this non-stationary DFM are calibrated to the \citet{Stock2016} data set.

\subsection{DGPs and impulse response estimands}
\label{sec:dgp_estimand}

We use the encompassing model \eqref{eq:factors}--\eqref{eq:idio_errors} to build thousands of lower-dimensional DGPs for our simulation study. Specifically, for each DGP, we draw a random subset of $n_{\bar{w}}$ variables $\bar{w}_t$ from the large vector $X_t$, i.e., $\bar{w}_t \subset X_t$. The variables $\bar{w}_t$ follow the time series process implied by the encompassing model \eqref{eq:factors}--\eqref{eq:idio_errors}. In particular, $\bar{w}_t$ is driven by some combination of aggregate structural shocks $\varepsilon_t$ and idiosyncratic components $v_t$. We draw thousands of such random combinations of variables, thus yielding thousands of lower-dimensional DGPs. The details of how we select the variable combinations are postponed until \cref{sec:dgp_implementation}.

For each DGP drawn in this way, we consider three types of structural impulse response estimands, chosen to mimic as closely as possible popular schemes for identifying the effects of policy shocks in applied macroeconometrics \citep{Ramey2016,Stock2016}. In the following, $y_t \in \bar{w}_t$ denotes a response variable of interest in the DGP, $i_t \in \bar{w}_t$ is a policy variable used to normalize the scale of the shock (if applicable), $z_t$ is an external instrument (if applicable), and $w_t$ denotes the vector of all observed time series in the DGP.

\begin{enumerate}[1.]

\item {\bf Observed shock identification.} In this identification scheme we assume that the econometrician observes both the endogenous variables $\bar{w}_t$ and the first structural shock $\varepsilon_{1,t}$, so the full vector of observables is $w_t = (\varepsilon_{1,t}, \bar{w}_t)'$. The objects of interest are the impulse responses of an outcome variable $y_t$ with respect to a one standard deviation (i.e., one unit) innovation to $\varepsilon_{1,t}$:
\begin{equation}
\theta_h \equiv \bar{\Lambda}_{\iota_y, \bullet} \Theta_{\bullet, 1, h}^f, \quad h=0,1,2,\dots, \label{eq:IRF_obsshock}
\end{equation}
where $\Theta^f(L)$ are the impulse responses of the factors $f_t$ to the structural shocks $\varepsilon_t$ implied by \eqref{eq:factors}, while $\bar{\Lambda}$ are those rows of $\Lambda$ that correspond to the observables $\bar{w}_t$. The index $\iota_y$ corresponds to the location of $y_t$ in the vector $\bar{w}_t$.

This set-up captures those empirical studies in which the researcher has constructed a plausible direct measure of the shock of interest. Examples include the monetary shock series of \citet{Romer2004} or the fiscal shock series of \citet{Ramey2011}. While one may worry about measurement error in practice, it is common in applied work to treat shocks as known, so we include this identification approach as a useful baseline. Measurement error is introduced in the next identification scheme.

\item {\bf IV/proxy identification.} In this scheme, instead of directly observing the structural shock $\varepsilon_{1,t}$, the econometrician observes the noisy proxy
\begin{equation}
z_t = \rho_z z_{t-1} + \varepsilon_{1,t} + \nu_t, \label{eq:IV}
\end{equation}
where $\nu_t$ is an i.i.d. process (independent of all shocks and innovations in the DFM) with $\var(\nu_t) = \sigma_\nu^2$. The full vector of observables is thus $w_t = (z_t, \bar{w}_t)'$. As is standard in IV applications, we here adopt the ``unit effect'' normalization of \citet{Stock2016}, so the object of interest becomes
\begin{equation}
\theta_h \equiv \frac{\bar{\Lambda}_{\iota_y, \bullet} \Theta_{\bullet, 1, h}^f}{\bar{\Lambda}_{\iota_i, \bullet} \Theta_{\bullet, 1, 0}^f}, \quad h=0,1,2,\dots, \label{eq:IRF_IV}
\end{equation}
where the index $\iota_i$ corresponds to the location of a policy variable $i_t$ in the vector $\bar{w}_t$. The above unit effect normalization defines the magnitude of the shock $\varepsilon_{1,t}$ such that it raises the policy variable $i_t$ by one unit on impact.

One example of an IV $z_t$ is the high-frequency change in futures prices around monetary policy announcements employed by \citet{Gertler2015} to identify the effects of monetary policy shocks.

\item {\bf Recursive identification.} The final identification scheme is recursive (Cholesky) shock identification \citep[e.g.,][]{Christiano1999}. Because it turns out that the simulation results for such shocks are qualitatively similar to the results when the shock is directly observed, we relegate discussion of recursive identification to a robustness check in \cref{sec:results_robustness}, with technical definitions in \cref{app:recursive}.

\end{enumerate}

\subsection{Implementation}
\label{sec:dgp_implementation}

This section first discusses how we estimate the DFM and then specifies the particular DGPs and structural impulse responses that we consider in the simulation study.

\paragraph{DFM parameters.}
We parametrize the DFM \eqref{eq:factors}--\eqref{eq:idio_errors} by estimating the model on the \citet{Stock2016} data set. Recall that we model the variables in levels rather than first differences, unlike \citeauthor{Stock2016}. We provide a brief overview of our approach here, with details in \cref{app:dfm_estim}.

We begin with the vector of observables $X_t$. As in \citet{Stock2016}, that vector contains quarterly observations on 207 time series for 1959Q1--2014Q4, mostly consisting of real activity variables, price measures, interest rates, asset and wealth variables, and productivity series.\footnote{Table 1 and the Data Appendix of \citet{Stock2016} list all variables and their categories.} Each series is seasonally adjusted as in \citet{Stock2016}. However, unlike those authors we do not transform the series to stationarity. Instead, variables that they transform to (non-log or log) first differences, we now keep in (non-log or log) levels; and variables that they transform to log second differences (which are mostly price indices), we only transform to log first differences. \alert{All estimation procedures mentioned below control for series-specific and common deterministic linear time trends.}

Our estimation approach is intended to allow for rich long- and short-run dynamics, opening the door for meaningful mis-specification of short-lag VARs. Following \citet{Bai2004} and \citet{Barigozzi2021}, we estimate the non-stationary DFM by extracting factors from differenced \alert{(and subsequently de-meaned)} data, cumulating these factors, and then fitting a VECM to the cumulated factors; the aforementioned papers show that this estimation strategy consistently estimates the true VECM parameters under weak conditions that allow for cointegration. We set the number of factors $n_f$ equal to 6 as in \citet{Stock2016}. \alert{The VECM is estimated by quasi-maximum-likelihood without restricting the adjustment coefficients or cointegrating relations.}\footnote{\alert{The reason we fit a VECM rather than an unrestricted VAR in levels to the factors is that the VAR estimator may underestimate persistence in finite samples, as is well known. While bias correction procedures exist, they may not always work well in practice. The VECM approach instead errs on the side of overstating the role of permanent shocks, consistent with our goal of allowing for rich long-run dynamics.}} The cointegration rank of the VECM is selected by the \citet{Johansen1995} maximum eigenvalue test, which indicates that the latent factors are driven by four common stochastic trends. As in \citet{Stock2016}, we fit AR($p_v$) processes by OLS to each idiosyncratic residual after removing the estimated factors, separately for each $i$. We use lag lengths $p_f=p_v = 4$ for both the factor process and the idiosyncratic component processes, which is at the upper end of what is preferred by the Akaike Information Criterion, consistent with our goal of allowing rich dynamics. The above-mentioned estimation procedure pins down all parameters of the DFM except for the structural impact response matrix $H$; we discuss below how we construct that matrix.

While the estimated encompassing DFM assumes a cointegrated VECM for the latent factors, the lower-dimensional DGPs that we subsequently extract from the DFM will not satisfy exact finite-order VECM or VAR processes, and will not be exactly cointegrated. This follows from the presence of the idiosyncratic components and the mismatch in dimensions between the latent factors and the observable series (as specified below). Indeed, we show below that most of the lower-dimensional DGPs feature a combination of exact unit roots (imposed in the factor VECM), several roots near unity (owing partly to the factor process and partly to the idiosyncratic component processes), as well as smaller roots that induce transitory dynamics. Our DGPs are therefore consistent with the common empirical finding that there is often substantial ambiguity about the appropriate VAR lag lengths, the exact magnitude of roots, and the presence or absence of cointegrating relationships.

\paragraph{DGP and estimand selection.}
To provide a comprehensive picture of the bias-variance trade-off, we select thousands of different sets of observables $\bar{w}_t \subset X_t$. We consider two protocols for selecting these observables---one aimed at mimicking monetary policy shock applications, and one aimed at fiscal policy shock applications. Specifically, for each type of policy shock, we randomly draw 3,000 configurations of $n_{\bar{w}} = 5$ macroeconomic observables $\bar{w}_t$. Thus, we end up with a total of 6,000 DGPs. For the monetary policy DGPs we restrict $\bar{w}_t$ to always contain the federal funds rate, while for the fiscal policy DGPs we restrict $\bar{w}_t$ to contain federal government spending. These two series are chosen as the policy variables $i_t$ for the IV and recursive estimands. The remaining four variables in $\bar{w}_t$ are then selected uniformly at random from $X_t$, except we impose that at least one variable should be a measure of real activity, and at least one other variable a measure of prices.\footnote{Real activity series correspond to categories 1--3 in the classification in Table 1 of \citet{Stock2016}, while price series correspond to category 6.} The impulse response variable $y_t$ is selected uniformly at random from the four series (other than $i_t$).

For each of the DGPs, we implement the structural impulse response estimands as follows:

\begin{enumerate}[1.]

\item {\bf Observed shock.} We select the structural impact response matrix $H$ in the factor equation \eqref{eq:factors} so as to maximize the impact effect of the shock $\varepsilon_{1,t}$ on the federal funds rate (for monetary shocks) and government spending (for fiscal shocks), subject to the constraint that $H$ is consistent with our estimate of the reduced-form innovation variance-covariance matrix for the factors. This ensures that monetary and fiscal shocks account for substantial short-run variation in nominal interest rates and government spending, respectively. Additionally, we avoid issues related to division by near-zeros when normalizing the impulse responses for the IV estimand. See \cref{app:subsec:dgp_select} for further details.

\item {\bf IV.} The matrix $H$ is defined just as in the ``observed shock'' case. Next, turning to the IV parameters in equation \eqref{eq:IV}, we draw $\rho_z$ uniformly at random from the set $\{ 0, 0.25, 0.5 \}$.\footnote{The external IVs used in empirical practice tend to have low to moderate autocorrelation \citep{Ramey2016}, consistent with our assumptions on $\rho_z$.} To ensure an empirically plausible signal-to-noise ratio, we calibrate $\sigma_\nu^2$ to three different values that yield population IV first-stage F-statistics between 10 and 30, roughly in line with heterogeneity in applied practice. See \cref{app:subsec:iv_calibration} for details.

\item {\bf Recursive identification.} Implementation details are in \cref{app:recursive}.

\end{enumerate}

\subsection{Summary statistics}
\label{sec:dgp_summ_stat}
Consistent with the experience of applied researchers, our DGPs exhibit substantial heterogeneity along several dimensions. \cref{tab:dgp_summ} displays the distribution of various population parameters across our 6,000 DGPs. The table focuses on impulse responses with respect to directly observed monetary policy and government spending shocks, though results for recursively defined shocks are similar, as shown in \cref{app:results_recursive}.

\begin{table}[t]
\centering
\textsc{DGP summary statistics} \\[0.5\baselineskip]
\renewcommand{\arraystretch}{1.2}
\begin{tabular}{l|ccccccc}
Percentile	&	min	&	10	&	25	&	50	&	75	&	90	&	max	\\
\hline\hline
& \multicolumn{6}{l}{ } \\[-2ex]
\emph{Data and shocks} & \multicolumn{6}{l}{} \\[0.5ex]
trace(long-run var)/trace(var)	&	0.03	&	0.27	&	0.54	&	1.02	&	1.98	&	3.54	&	23.73	\\
Fraction of VAR coef's $\ell \geq 5$	&	0.07	&	0.14	&	0.17	&	0.23	&	0.29	&	0.37	&	0.81	\\
Degree of shock invertibility	&	0.24	&	0.30	&	0.34	&	0.39	&	0.44	&	0.49	&	0.65	\\
IV first stage F-statistic	&	7.18	&	7.91	&	10.55	&	21.13	&	24.20	&	33.29	&	33.97	\\
\hline
& \multicolumn{6}{l}{ } \\[-2ex]
\emph{Impulse responses up to $h=20$} & \multicolumn{6}{l}{} \\[0.5ex]
No. of interior local extrema	&	0	&	1	&	2	&	2	&	3	&	3	&	5	\\
Horizon of max abs. value	&	0	&	0	&	1	&	4	&	8	&	19	&	20	\\
Average/(max abs. value) &	-0.87 	&	-0.67 	&	-0.48	 &	0.01  	&	0.33  	&	0.64  	&	0.89	\\
$R^2$ in regression on quadratic	&	0.04	&	0.46	&	0.70	&	0.85	&	0.95	&	0.98	&	1.00	 \\
\hline
\end{tabular}
\caption{Quantiles of various population parameters across the DGPs for observed shock and IV identification. ``long-run var'': long-run variance of differenced series. ``var'': variance of differenced series. ``Fraction of VAR coef's $\ell \geq 5$'': $\sum_{\ell=5}^{1000} \|A_\ell^w\|/\sum_{\ell=1}^{1000} \|A_\ell^w\|$, where $A_\ell^w$ are the population VAR($\infty$) coefficient matrices and $\|\cdot\|$ is the Frobenius norm. ``Degree of shock invertibility'': $R^2$ in a projection of $\varepsilon_{1,t}$ on $\lbrace \bar{w}_{t-\ell} \rbrace_{\ell=0}^\infty$. ``IV first stage F-statistic'': $T \times R^2/(1-R^2)$, where $T=200$ and $R^2$ is the population $R^2$ in a projection of $i_t$ on $z_t$, controlling for $\lbrace w_{t-\ell} \rbrace_{\ell=1}^\infty$. ``Average/(max abs. value)'': $(\frac{1}{21}\sum_{h=0}^{20}\theta_h)/\max_h \lbrace |\theta_h|\rbrace$. ``$R^2$ in regression on quadratic'': $R^2$ from a regression of the impulse response function $\lbrace \theta_h \rbrace_{h=0}^{20}$ on a quadratic polynomial in $h$.}
\label{tab:dgp_summ}
\end{table}

First of all, the DGPs feature varying degrees of persistence. All DGPs have unit roots by construction; nevertheless, the DGPs differ in how heavily they load on the various non-stationary and stationary linear combinations of the latent factors. \cref{tab:dgp_summ} reports the ratio of the traces of the long-run variance matrix and variance matrix applied to differenced data, $\tr(\mathit{LRV}(\Delta \bar{w}_t))/\tr(\var(\Delta \bar{w}_t))$. This measure varies widely across the DGPs, with median equal to 1.02 (as when all series are simple random walks), and the 90th percentile equal to 3.54 (consistent with strong positive autocorrelation of the first differences). We will consider an alternative set of moderately persistent, stationary DGPs in one of our main robustness checks in \cref{sec:results_robustness}.

Second, the DGPs are heterogeneous in terms of how well they can be approximated by a low-order VAR. \cref{tab:dgp_summ} reports the ratio $\sum_{\ell=5}^{1000} \|A_\ell^w\|/\sum_{\ell=1}^{1000} \|A_\ell^w\|$, which measures the relative magnitude of the coefficient matrices $\lbrace A_\ell^w \rbrace_{\ell}$ in the VAR($\infty$) representation for $\lbrace \bar{w}_t \rbrace$ at or after lag 5 (with $\|\cdot\|$ here denoting the Frobenius matrix norm). The 10th and 90th percentiles equal 0.14 and 0.37, respectively. Hence, the analysis in \cref{sec:biasvar} suggests that the bias of low-order VAR procedures will vary substantially across the various DGPs that we consider in our simulations.

Third, for the IV specifications, we note that our DGPs differ in terms of shock invertibility and IV strength. The degree of invertibility is defined as the R-squared in a population projection of the shock of interest on current and lagged macro observables $\lbrace \bar{w}_{t-\ell} \rbrace_{\ell=0}^\infty$.\footnote{Projections on infinite collections of lagged variables are defined as the limit when the lag length tends to $\infty$, using a diffuse initialization of the Kalman filter.} The bias of some SVAR-based external instrument procedures depends on how far below 1 this measure is, as discussed further in \cref{sec:estim}. The table shows that 90\% of the DGPs have degrees of invertibility below 49\%, i.e., substantial non-invertibility.\footnote{\citet{Leeper2013} argue that adding forward-looking variables to a VAR ameliorates the invertibility problem. However, if we restrict attention to the 1,457 DGPs that contain at least one time series in the ``Asset Price \& Sentiment'' category (see \cref{app:results_cat}), the 90th percentile of the degree of invertibility increases only marginally to 51\%.} This is not surprising: the DFM \eqref{eq:factors}--\eqref{eq:idio_errors} features a realistic amount of idiosyncratic noise $v_t$, making it challenging to accurately back out the aggregate shock $\varepsilon_{1,t}$ from a small number of time series $\bar{w}_t$. The strength of the IV is by construction borderline weak to moderate, as the population first stage F-statistic (from a regression of the policy variable $i_t$ on the IV $z_t$, controlling for lagged data) is calibrated to vary between approximately 10 and 30, given sample size $T=200$.

\begin{figure}[tp]
\centering
\textsc{Selected impulse response function estimands} \\[0.5\baselineskip]
\includegraphics[width=0.9\linewidth]{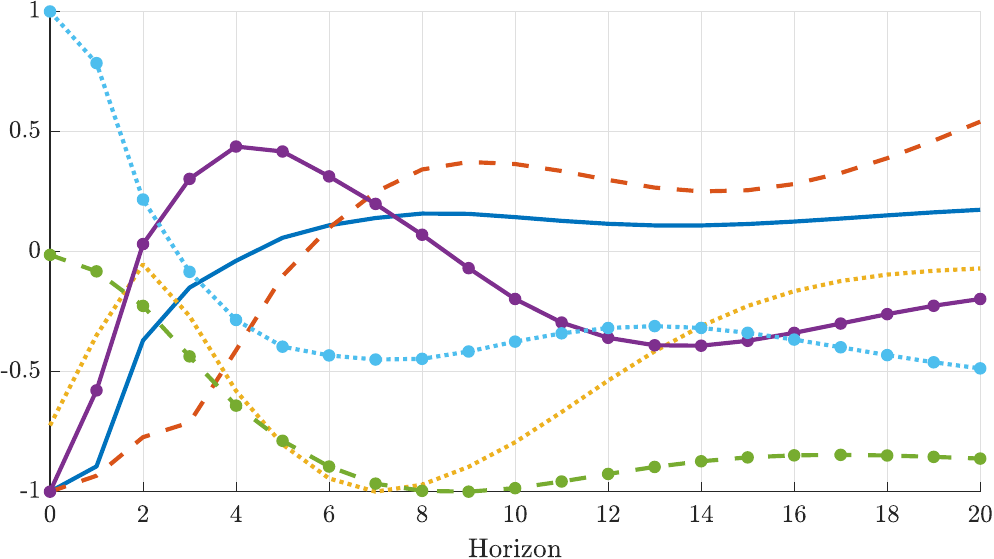}
\caption{Selected impulse responses of macro observables to monetary and fiscal policy shocks. Here the impulse response functions are normalized to have a maximum value of $1$ or $-1$.}
\label{fig:dgp_irfs}
\end{figure}

Finally, the true impulse response estimands exhibit a wide variety of shapes. \cref{tab:dgp_summ} shows that the impulse response functions peak at very different horizons and are typically not simple monotonically decaying or even hump-shaped functions: the median number of interior local extrema of the impulse response functions is 2 (a monotonic function would have 0; a hump-shaped function would have 1). Many impulse response functions change sign at some horizon, as evidenced by the average response (across horizons) typically being much smaller than the maximal response. Finally, the smoothness of the impulse response functions varies substantially: the R-squared value in a regression of the impulse responses $\lbrace \theta_h \rbrace_{h=0}^{20}$ on a quadratic polynomial $b_0+b_1 \times h + b_2 \times h^2$ has 10th and 90th percentiles given by 0.46 and 0.98, respectively. For further illustration, \cref{fig:dgp_irfs} displays the true values of six impulse response functions, providing a representative picture of the heterogeneity. The figure illustrates that, while some impulse response functions approximately return to 0 at long horizons, many do not, and some have the largest response even beyond horizon $h = 20$.

\section{Estimation methods}
\label{sec:estim}

We now give a brief overview of the different VAR- and LP-based estimation methods that we consider in the simulation study.\footnote{To visualize the various estimation methods, \cref{app:results_irfs} plots the estimated impulse response functions in a few data sets simulated from a single DGP.} Though all these methods aim at estimating the same population impulse responses defined in \cref{sec:dgp_estimand}, they differ in terms of their bias-variance properties, and in terms of their robustness to non-invertibility. Further implementation details are relegated to \cref{app:estimators}. All estimators include an intercept.

\paragraph{Local projection approaches.}
The basic idea behind local projections, as proposed by \citet{Jorda2005}, is to estimate the impulse responses separately at each horizon by a direct regression of the future outcome on current covariates. We consider three such approaches:
\begin{enumerate}
\item \textbf{Least-squares LP.} OLS regression of the response variable $y_{t+h}$ on some innovation variable $x_t$, controlling for $p$ lags of all data series $w_t$. The innovation variable equals $x_t=\varepsilon_{1,t}$ for ``observed shock'' identification. For recursive identification, $x_t$ equals the policy variable $i_t$, and we additionally control for the contemporaneous values of the variables that are ordered before $i_t$ in the system \citep{Plagborg2020}. For IV identification, we set $x_t=i_t$ and instrument for this variable using the IV $z_t$ (this is the LP-IV estimator of \citealp{Stock2018}). Since least-squares LP does not mechanically impose any functional form on the relationship between impulse responses at different horizons $h$, it does not suffer from extrapolation bias. However, these estimated impulse response functions tend to look jagged in finite samples and be estimated with high variance at longer horizons.
\item \textbf{Bias-corrected LP} (abbreviated ``BC LP''). \citet{Herbst2021} propose a bias-corrected version of LP, which partially removes the bias that is due to high persistence in the data. Though this bias is theoretically of order $T^{-1}$ (where $T$ is the sample size) and thus asymptotically negligible relative to the standard deviation, \citeauthor{Herbst2021} demonstrate that the bias can be sizable in sample sizes typical in the applied macroeconometrics literature.
\item \textbf{Penalized LP} (abbreviated ``Pen LP''). To lower the variance of least-squares LP at the expense of potentially increasing the bias, \citet{Barnichon2019} propose a penalized regression modification of LP. The estimator minimizes the sum of squared forecast residuals (across both horizons and time) plus a penalty term that encourages the estimation of smooth impulse responses. This is a type of shrinkage estimation: the unrestricted least-squares estimate is pushed in the direction of a smooth quadratic function of the horizon. The degree of shrinkage is chosen by cross-validation.
\end{enumerate}

\paragraph{VAR approaches.}
Like local projections, a VAR with lag length $p$ flexibly estimates the impulse responses out to horizon $p$; however, the VAR extrapolates the responses at longer horizons $h>p$ using only the sample autocovariances out to lag $p$. As suggested by the analysis in \cref{sec:biasvar}, this tends to generate impulse response estimates with lower variance but higher bias than LP estimates at intermediate and long horizons. We consider four such VAR-based approaches:
\begin{enumerate}
\item \textbf{Least-squares VAR.} Standard VAR impulse response estimates based on equation-by-equation OLS estimates of the reduced-form coefficients.
\item \textbf{Bias-corrected VAR} (abbreviated ``BC VAR''). As above, but follows \citet{Kilian1998} in using the formula in \citet{Pope1990} to analytically correct the order-$T^{-1}$ bias of the reduced-form coefficients caused by persistent data.\footnote{\citet[Chapter 12.3]{Kilian2017} argue that this analytical bias correction yields similar results to more computationally intensive bootstrap bias correction methods.}
\item \textbf{Bayesian VAR} (abbreviated ``BVAR''). As above, but where the reduced-form coefficients are estimated from a Bayesian VAR with automatic prior selection as in \citet{Giannone2015}. We report the posterior means of the impulse responses calculated from 100 draws. The prior specification follows the popular Minnesota prior, but with modifications that allow for cointegration. The prior variance hyper-parameters (and thus the degree of shrinkage) are chosen in a data-dependent way by maximizing the marginal likelihood.

\item \textbf{VAR model averaging} (abbreviated ``VAR Avg''). \citet{Hansen2016} develops a data-driven method for averaging across the impulse response estimates produced by several different VAR specifications. We construct a weighted average of 40 different specifications, each of which is estimated by OLS: univariate AR(1) to AR(20) models, and multivariate VAR(1) to VAR(20) models. The weights are chosen to minimize an empirical estimate of the final impulse response estimator's MSE.

The VAR model averaging estimator effectively includes LP among the list of candidate estimators (as in the related approach of \citealp{MirandaAgrippino2021}). This is because the candidate VAR(20) model gives results similar to LP with several lagged controls, at all horizons considered in our study \citep{Plagborg2020}.
\end{enumerate}
Observed shock identification is carried out by simply ordering the shock first in the recursive VAR. Recursive identification is implemented as usual in the VAR literature. We consider two different approaches to IV estimation:
\begin{enumerate}[i)]
\item \textbf{Internal instruments.} Proceed as if the IV were equal to the true shock of interest, i.e., order the IV first in the VAR and compute responses to the first orthogonalized innovation \citep{Ramey2011}. \citet{Plagborg2020} prove that this approach consistently estimates the normalized structural impulse responses \eqref{eq:IRF_IV} even if the IV is contaminated with measurement error as in \eqref{eq:IV}, and even if the shock is non-invertible.
\item \textbf{SVAR-IV} (also known as proxy-SVAR). Exclude the IV from the reduced-form VAR, and estimate the structural shock by projecting the IV on the reduced-form VAR innovations  \citep{Stock2008,Stock2012bpea,Mertens2013,Gertler2015}. This estimator is consistent if the shock of interest is invertible, but not otherwise \citep{Forni2019,Plagborg2020_var_decomp,MirandaAgrippino_invert}. We shall see that the SVAR-IV estimator tends to exhibit lower dispersion than the ``internal instruments'' estimator due to the smaller dimension of the VAR system.
\end{enumerate}
We implement the ``internal instruments'' approach using all four types of VAR estimation techniques described earlier. For brevity, we only consider the least-squares version of the ``external instrument'' SVAR-IV estimator.

\paragraph{Lag length selection.}
As a baseline, the LP and VAR estimators use $p=4$ lags for estimation (except of course VAR model averaging, which uses many different lag lengths). In our DGPs, the Akaike Information Criterion almost always selects very short lag lengths $\hat{p}_{AIC}$, as we discuss further in \cref{sec:results_selection} below. Thus, for all intents and purposes, our results may be interpreted as having been generated by the lag length selection rule $p = \max\lbrace \hat{p}_{AIC}, 4\rbrace$. Our reading of applied practice is that researchers typically include at least 4 lags in quarterly data. Results for $p=8$ are discussed in \cref{sec:results_robustness}.

\section{Results}
\label{sec:results}

This section presents our simulation results. We summarize the results through four lessons, presented in separate subsections. The first three lessons focus on observed shock identification. The fourth lesson is concerned with IV identification. We show in \cref{sec:results_robustness} that these conclusions are qualitatively robust to several alterations of our baseline simulation specification (including less persistent DGPs and recursive identification). Finally, in \cref{sec:results_selection}, we justify our focus on the \emph{average} performance of estimators across DGPs, by arguing that there is limited scope for selecting among estimators in a data-dependent way.

Throughout this section we present results for our 6,000 monetary and fiscal policy shock DGPs considered jointly rather than separately. For each DGP, we simulate time series of length $T=200$ quarters and approximate the population bias and variance of the estimators by averaging across 5,000 Monte Carlo simulations. The main results (excluding robustness checks) take about one week to produce in Matlab on a research computing cluster with 300 parallel cores.

\subsection{There is a clear bias-variance trade-off between LP and VAR}
\label{sec:results_1}

Our first takeaway is that researchers invariably face a bias-variance trade-off: because most of our DGPs are not well approximated by finite-order VAR models, least-squares LPs tend to have lower bias, while least-squares VAR estimators tend to have lower variance, consistent with the simple analytical example provided in \cref{sec:biasvar}. Strictly speaking, these statements are only exactly true for the bias-corrected versions of the estimators \citep{Herbst2021,Pope1990,Kilian1998}, as the high persistence of our DGPs imparts a sizable finite-sample bias in the estimators at intermediate and long horizons, particularly for LPs. This bias correction, however, is not a free lunch, as it increases variance.

\begin{figure}[tp]
\centering
\textsc{Observed shock: Bias of estimators} \\[0.5\baselineskip]
\includegraphics[width=0.85\linewidth]{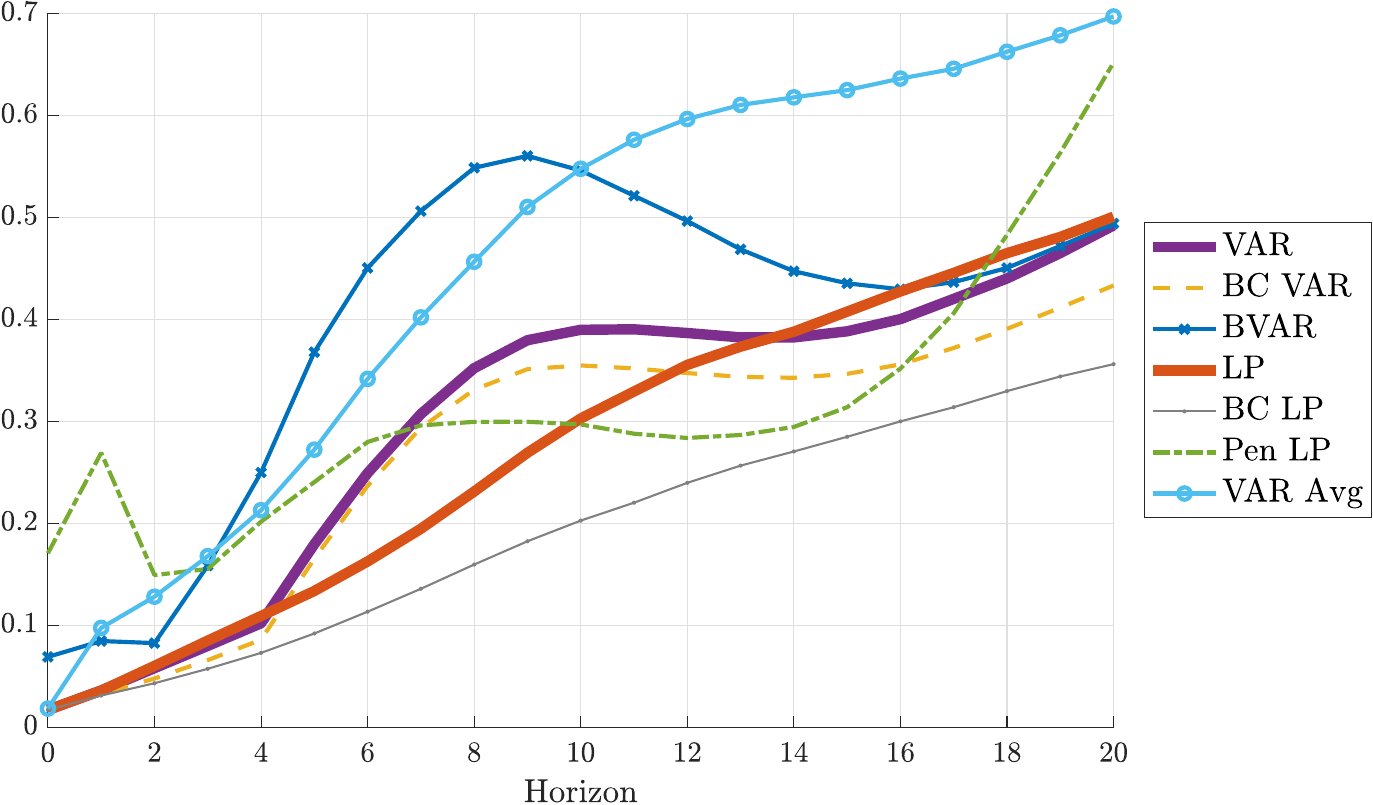}
\caption{Median (across DGPs) of absolute bias $|E(\hat{\theta}_h-\theta_h)|$ of the different estimation procedures, relative to $\sqrt{\frac{1}{21}\sum_{h=0}^{20}\theta_h^2}$.}
\label{fig:bias_obsshock}

\vspace*{\floatsep}

\centering
\textsc{Observed shock: Standard deviation of estimators} \\[0.5\baselineskip]
\includegraphics[width=0.85\linewidth]{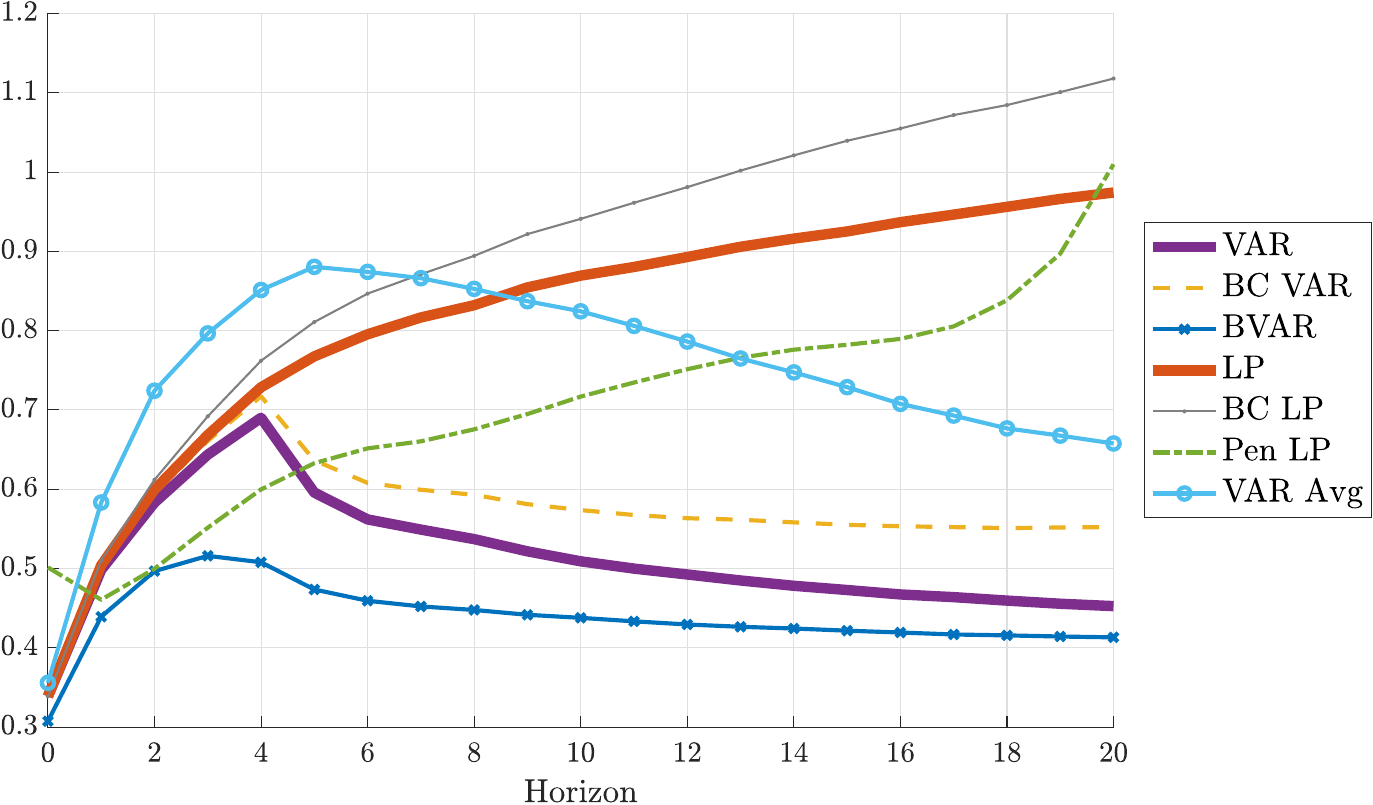}
\caption{Median (across DGPs) of standard deviation $\sqrt{\var(\hat{\theta}_h)}$ of the different estimation procedures, relative to $\sqrt{\frac{1}{21}\sum_{h=0}^{20}\theta_h^2}$.}
\label{fig:std_obsshock}
\end{figure}

\cref{fig:bias_obsshock,fig:std_obsshock} depict the bias-variance trade-off at various horizons. These figures show the median (across our 6,000 DGPs) of the absolute bias $|E(\hat{\theta}_h-\theta_h)|$ or the standard deviation $\sqrt{\var(\hat{\theta}_h)}$, respectively, as a function of the horizon. The different lines correspond to different estimators $\hat{\theta}_h$, with least-squares LP and VAR being the thick lines. Before taking the median, we cancel out the units of the response variables by dividing the bias and standard deviation by $\sqrt{\frac{1}{21}\sum_{h=0}^{20}\theta_h^2}$, i.e., the root mean squared value of the \emph{true} impulse response function out to horizon 20. Note that the scale of the vertical axis differs between the bias and standard deviation plots.

The figures show that least-squares LP and VAR estimators have similar bias and variance at horizons $h \leq p = 4$, but not at longer horizons $h > p$. The median biases then generally increase with the horizon, with the bias of VAR exceeding that of LP, except at long horizons.\footnote{\citet{Kilian2011} find in simulations that LP does not have lower bias than VAR estimators, but they consider a different variant of LP that uses an auxiliary VAR to identify the structural shocks.} While the median standard deviation of LP is increasing in the horizon, that of VAR instead displays a hump-shaped pattern. At long horizons, the median standard deviation of LP is about double that of VAR. These observations are broadly consistent with the asymptotic results in \cref{sec:biasvar}, \citet{Schorfheide2005}, and \citet{Plagborg2020}.

Our results also show that the bias correction procedure of \citet{Herbst2021} is critical to achieving uniformly low bias for the LP approach. Though the \emph{asymptotic} bias of LP is zero when the shock is observed, as discussed in \cref{sec:biasvar}, the high persistence of our DGPs implies that the small-sample bias of least-squares LP is non-negligible at intermediate and long horizons, especially the latter. The bias-corrected version of LP proposed by \citeauthor{Herbst2021} (thin line with small dots in the figures) eliminates about a third of the bias at all horizons. In comparison with the LP case, bias correction is not as critical for VAR estimation, though the bias-corrected VAR estimator (dashed line) does have a somewhat lower bias than the least-squares VAR estimator at long horizons. After bias correction, LP has lower (median) bias than VAR at all horizons, as predicted by asymptotic theory. We further show in \cref{sec:results_robustness} below that such bias correction is not nearly as important in less persistent, stationary DGPs.

\begin{figure}[p]
	\centering
	\textsc{Observed shock: Least-squares LP vs. Bias-Corrected LP} \\[0.5\baselineskip]
	\includegraphics[width=0.7\linewidth]{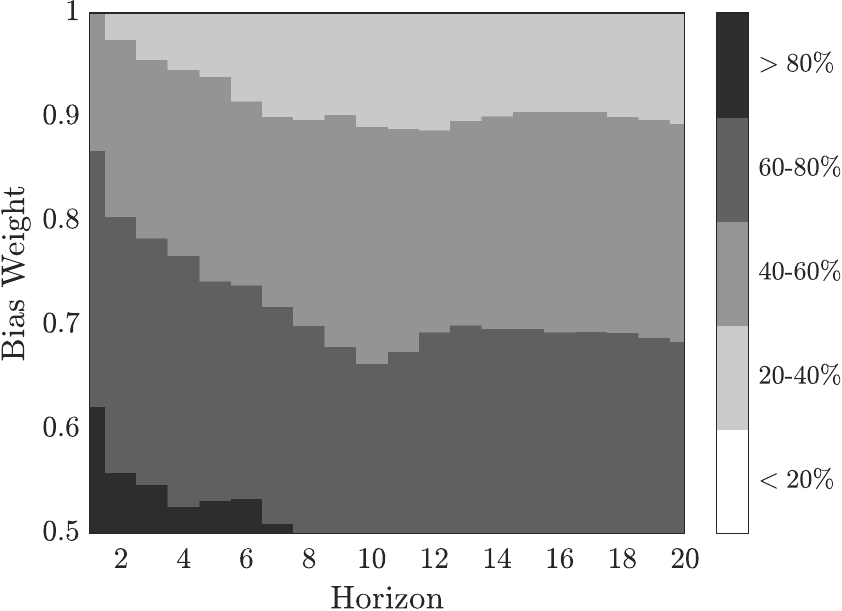}
	\caption{Fraction of DGPs for which the least-squares LP estimator has a lower loss than bias-corrected LP. The darker the region, the higher the fraction of DGPs for which least-squares LP is preferred. Horizontal axis: impulse response horizon $h$. Vertical axis: weight $\omega$ on squared bias in the loss function \eqref{eq:loss_simple}. The loss function is normalized by the scale of the true impulse response function, as in \cref{fig:bias_obsshock,fig:std_obsshock}. The impact horizon $h=0$ is omitted due to numerical equivalence between the estimators.}
	\label{fig:bclpvslp_obsshock}
	
	\vspace*{\floatsep}
	
	\centering
	\textsc{Observed shock: Least-squares VAR vs. Bias-Corrected VAR} \\[0.5\baselineskip]
	\includegraphics[width=0.7\linewidth]{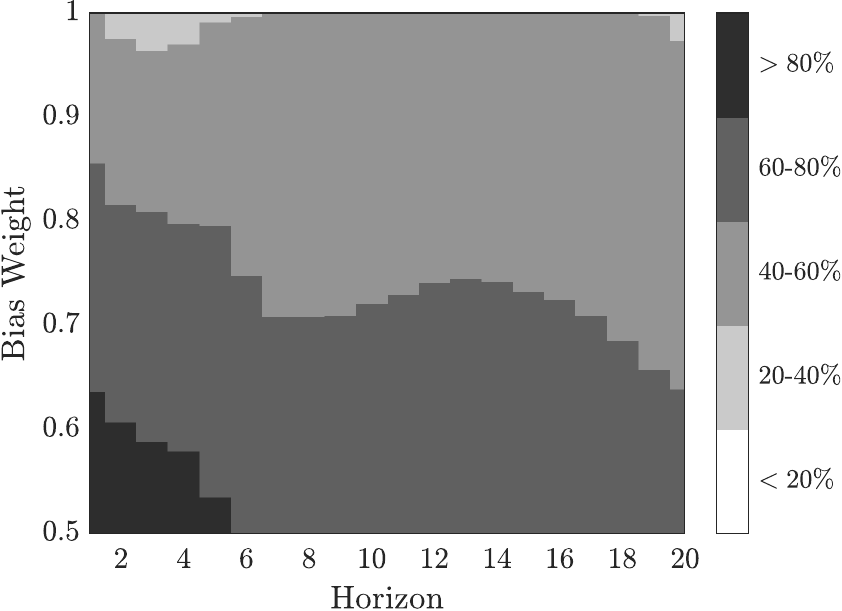}
	\caption{Fraction of DGPs for which the least-squares VAR estimator has a lower loss than the bias-corrected VAR estimator. The darker the region, the higher the fraction of DGPs for which least-squares VAR is preferred. See caption for \cref{fig:bclpvslp_obsshock}. The impact horizon $h=0$ is omitted due to numerical equivalence between the estimators.}
	\label{fig:bcvarvsvar_obsshock}
\end{figure}

Bias correction is not a free lunch, however, as it is associated with a substantial increase in variance. \cref{fig:std_obsshock} shows that the bias-corrected LP and VAR estimators have uniformly higher median standard deviation than the uncorrected estimators. In fact, bias-corrected LP has not only the uniformly lowest median bias among the methods we consider, it also has the uniformly highest standard deviation. \cref{fig:bclpvslp_obsshock,fig:bcvarvsvar_obsshock} show head-to-head comparisons of the least-squares and bias-corrected estimators for the LP and VAR cases, respectively. The figures show the fraction of DGPs for which the least-squares estimator achieves a lower loss \eqref{eq:loss_simple} than the bias-corrected estimator, as a function of the horizon $h$ and the weight $\omega$ attached to squared bias in the loss function; to interpret the figures, recall that $\omega=0.5$ corresponds to MSE loss, while $\omega=1$ corresponds to an exclusive focus on bias at the expense of variance. The darker the plot, the more often is the least-squares estimator preferred over the bias-corrected one. Evidently, one has to attach a very high weight $\omega$ to bias in the loss function to prefer the bias-corrected estimator in more than 60\% of DGPs; furthermore, a researcher with MSE loss would usually prefer the uncorrected estimators.

\subsection{Bias-corrected LP is the best estimator if and only if the researcher overwhelmingly prioritizes bias}
\label{sec:results_2}

Our second takeaway is that bias-corrected LP is the single best estimator in our choice set if \emph{and only if} the researcher's loss function overwhelmingly prioritizes bias. In contrast, uncorrected LP is \emph{never} the best option if the goal is to minimize average loss across our DGPs. Under MSE loss, penalized LP typically outperforms the other LP procedures as it has substantially lower variance, though at the expense of a moderate increase in bias.

\begin{figure}[t]
	\centering
	\textsc{Observed shock: Optimal estimation method} \\
	\includegraphics[width=0.9\linewidth,clip=true,trim=0 0.5em 0 2.4em]{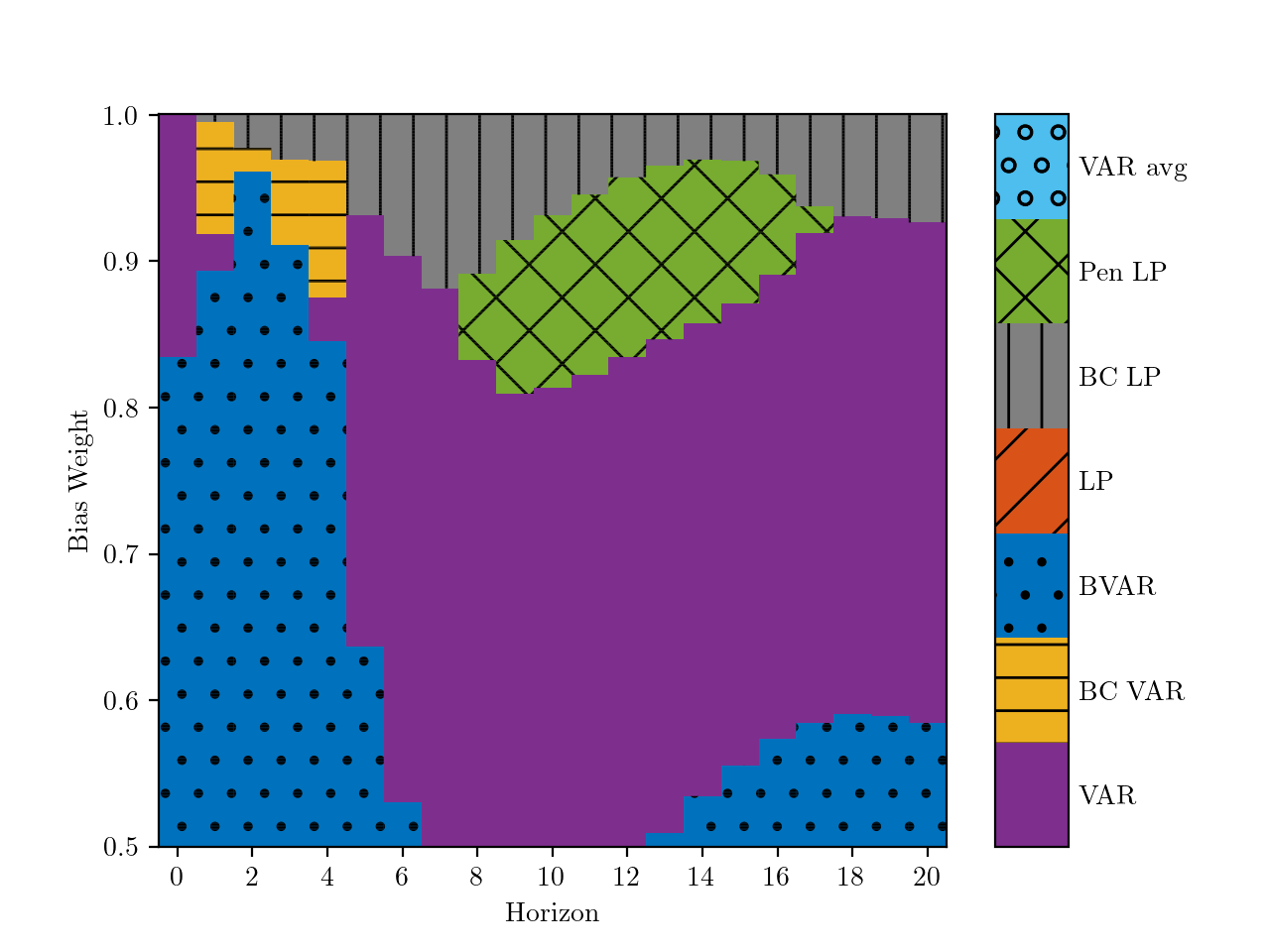}
	\caption{Method that minimizes the average (across DGPs) loss function \eqref{eq:loss_simple}. Horizontal axis: impulse response horizon. Vertical axis: weight on squared bias in loss function. The loss function is normalized by the scale of the impulse response function, as in \cref{fig:bias_obsshock,fig:std_obsshock}. At $h=0$, VAR and LP are numerically identical; we break the tie in favor of VAR.}
	\label{fig:bestmethod_obsshock}
\end{figure}

\cref{fig:bestmethod_obsshock} shows the optimal estimation method as a function of the horizon $h$ and the bias weight $\omega$. The colors and patterns indicate the estimation method that minimizes the \emph{average} loss \eqref{eq:loss_simple} across DGPs, after normalizing the loss to cancel out units as in \cref{fig:bias_obsshock,fig:std_obsshock}. In this subsection we focus on the top part of \cref{fig:bestmethod_obsshock}, i.e., where the weight $\omega$ on bias in the loss function is high. Bias-corrected LP emerges as the best estimator at most horizons in this case, as is to be expected given its excellent bias properties in \cref{fig:bias_obsshock}; nevertheless, the figure shows that the optimality of bias-corrected LP is predicated on $\omega$ exceeding roughly 0.9, or even higher at some horizons, corresponding to an overwhelming focus on minimizing bias rather than variance. In contrast, uncorrected least-squares LP is essentially dominated: it has greater bias than bias-corrected LP (notably at longer horizons), yet materially higher variance than least-squares VAR or other shrinkage methods, and so no part of \cref{fig:bestmethod_obsshock} is orange with diagonal lines. We discuss the rest of \cref{fig:bestmethod_obsshock} in the next subsection.

\begin{figure}[p]
	\centering
	\textsc{Observed shock: Bias-corrected LP vs. Bias-corrected VAR} \\[0.5\baselineskip]
	\includegraphics[width=0.7\linewidth]{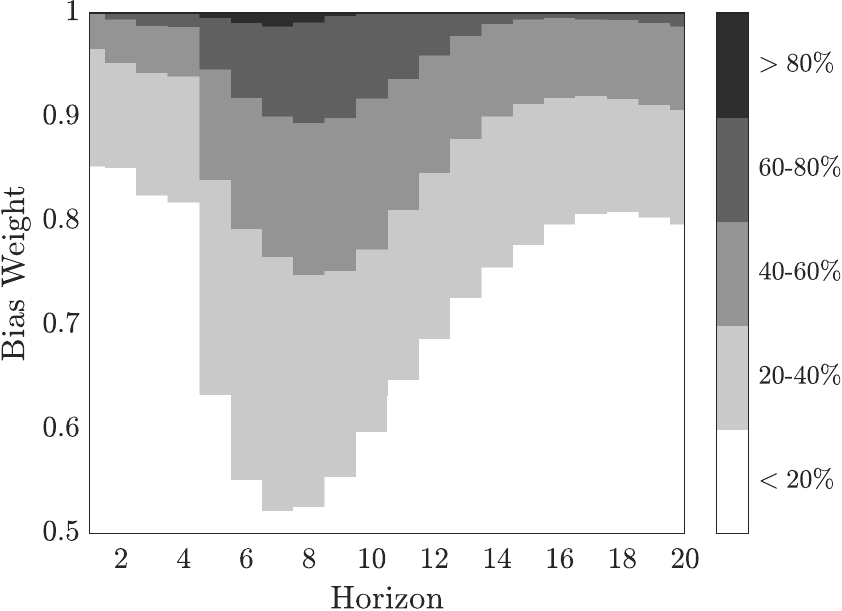}
	\caption{Fraction of DGPs for which the bias-corrected LP estimator has a lower loss than bias-corrected VAR. The darker the region, the higher the fraction of DGPs for which bias-corrected LP is preferred. Horizontal axis: impulse response horizon $h$. Vertical axis: weight $\omega$ on squared bias in the loss function \eqref{eq:loss_simple}. The loss function is normalized by the scale of the true impulse response function, as in \cref{fig:bias_obsshock,fig:std_obsshock}. The impact horizon $h=0$ is omitted due to numerical equivalence between the estimators.}
	\label{fig:bclpvsbcvar_obsshock}
	
	\vspace*{\floatsep}
	
	\centering
	\textsc{Observed shock: Bias-corrected LP vs. Penalized LP} \\[0.5\baselineskip]
	\includegraphics[width=0.7\linewidth]{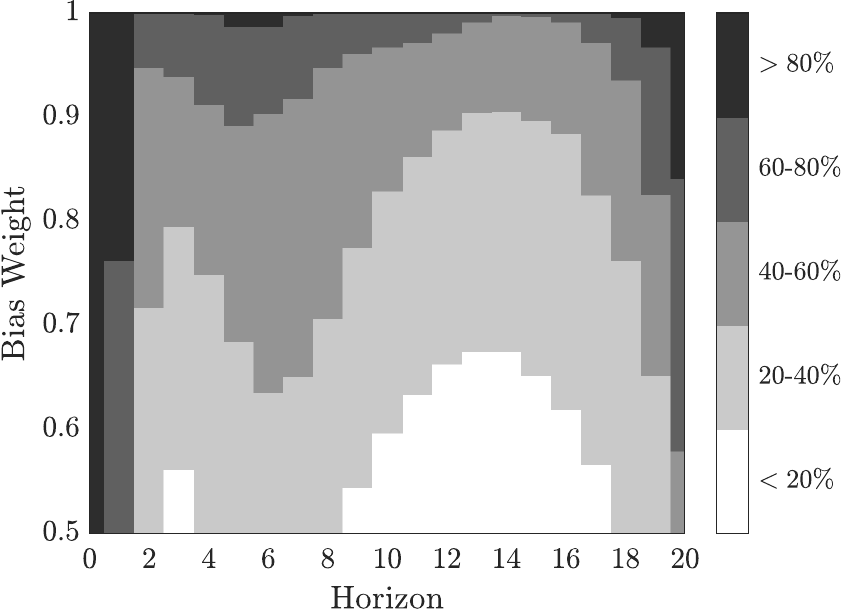}
	\caption{Fraction of DGPs for which the bias-corrected LP estimator has a lower loss than the penalized LP estimator. The darker the region, the higher the fraction of DGPs for which bias-corrected LP is preferred. See caption for \cref{fig:bclpvslp_obsshock}.}
	\label{fig:bclpvspenlp_obsshock}
\end{figure}

\cref{fig:bclpvsbcvar_obsshock,fig:bclpvspenlp_obsshock} compare bias-corrected LP to bias-corrected VAR and to penalized LP, respectively. The former figure shows that bias-corrected LP is only preferred to bias-corrected VAR in at least 60\% of DGPs when $\omega \geq 0.9$. In the latter figure, we see that the smoothing of impulse responses across horizons that the penalized LP estimator performs is usually attractive whenever $\omega \leq 0.9$, except at very short and very long horizons. By ``betting on smoothness'', penalized LP achieves a substantial variance reduction relative to the un-penalized LP procedures, at the expense of a moderate increase in bias, see \cref{fig:bias_obsshock,fig:std_obsshock}. In fact, there is a region of \cref{fig:bestmethod_obsshock} with intermediate horizons and moderately high weight on bias where penalized LP (green with diagonal cross-hatching) is the single best estimator. These findings underscore our conclusion that, across the majority of the DGPs, the use of bias-corrected LP can only be justified by committing to a nearly exclusive focus on minimizing bias, with little regard for precision.

\subsection{VARs are attractive if there is some concern for precision}
\label{sec:results_3}

Our third takeaway is that VAR estimators are attractive to researchers who place at least moderate weight on variance in their loss function. But the choice of VAR method depends on the horizon: Bayesian VARs perform well at short horizons, least-squares VARs at intermediate horizons, and at long horizons the two are comparable. VAR model averaging, on the other hand, performs poorly regardless of bias-variance preferences.

Returning to \cref{fig:bestmethod_obsshock}, we see that for bias weights $\omega$ below 0.9, the optimal estimation method is almost always either least-squares VAR (purple areas) or BVAR (solid-dotted blue). The key attractive property of BVAR is that it has the lowest (median) standard deviation at all horizons among the methods we consider, as seen in \cref{fig:std_obsshock}, though it also has high bias relative to least-squares VAR at intermediate horizons, as shown in \cref{fig:bias_obsshock}. The relatively high bias at intermediate horizons is possibly due to the fact that its prior specification, which is conventional in the literature, is motivated by one-step-ahead and long-run forecasting properties, as opposed to medium-run properties.\footnote{Moreover, the \citet{Giannone2015} approach of choosing the prior hyper-parameters to maximize the marginal likelihood implicitly targets one-step-ahead forecasts (see Equation 5 in their paper).}

\begin{figure}[t]
	\centering
	\textsc{Observed shock: Least-squares VAR vs. Bayesian VAR} \\[0.5\baselineskip]
	\includegraphics[width=0.7\linewidth]{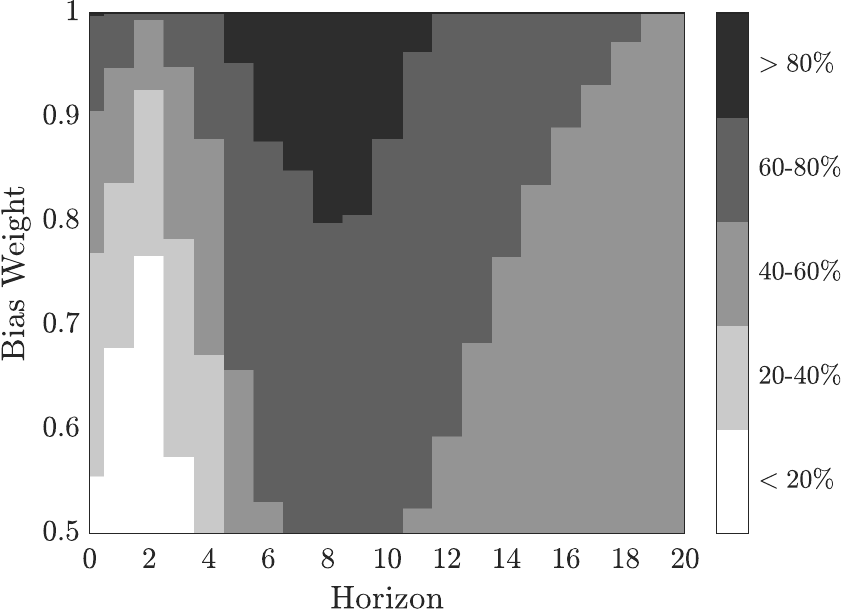}
	\caption{Fraction of DGPs for which the least-squares VAR estimator has a lower loss than the BVAR estimator. The darker the region, the higher the fraction of DGPs for which least-squares VAR is preferred. See caption for \cref{fig:bclpvslp_obsshock}.}
	\label{fig:bvarvsvar_obsshock}
\end{figure}

\cref{fig:bvarvsvar_obsshock} shows that the head-to-head performance of least-squares VAR vs. Bayesian VAR depends on the horizon. At short horizons $h \leq 4$, BVAR is preferred in the majority of DGPs, and indeed it is the overall best estimator for most loss functions that place non-trivial weight on variance (see \cref{fig:bestmethod_obsshock}). However, at intermediate horizons $h \in [5,12]$, least-squares VAR is preferred over BVAR in the clear majority of DGPs for most loss functions, and the former estimator is the overall preferred method for loss functions with $\omega \leq 0.8$. At long horizons $h \geq 13$, the two VAR methods are comparable and outperform all other methods, unless the weight on bias in the loss function is high.

Finally, we remark that bias-corrected VAR and VAR model averaging are rarely, if ever, optimal. Bias-corrected VAR (yellow with horizontal lines in \cref{fig:bestmethod_obsshock}) can be rationalized at short horizons if the concern for bias is high, but the difference compared to least-squares VAR is small at these horizons, as discussed in \cref{sec:results_2}. VAR model averaging performs poorly regardless of loss function and horizon, as it has substantial bias as well as a high standard deviation relative to other VAR-based estimators (see \cref{fig:bias_obsshock,fig:std_obsshock}). Closer inspection reveals that the high standard deviation is a consequence of a very fat-tailed sampling distribution, with a non-negligible probability of erratic estimates.\footnote{We use \citeauthor{Hansen2016}'s (\citeyear{Hansen2016}) code off the shelf. It would be interesting to investigate whether the procedure could be modified to avoid erratic estimates, perhaps by regularizing the averaging weights.}

\subsection{SVAR-IV is heavily biased, but has relatively low dispersion}
\label{sec:results_4}

Our last takeaway is concerned with IV/proxy identification. Among the invertibility-robust ``internal instruments'' estimators, the bias-variance trade-off is very similar to that already discussed above for the case of an observed shock. The alternative ``external instruments'' SVAR-IV procedure, however, contributes starkly to the trade-off: it can be severely biased due to its lack of robustness to non-invertibility, but at the same time it also has substantially lower dispersion than the ``internal instruments'' procedures.

Since first and second moments of IV estimators may not exist theoretically \citep{Sawa1972}, we in this subsection report median bias (i.e., in each DGP, the median of the estimation error) instead of (mean) bias, and the interquartile range instead of the standard deviation.\footnote{For completeness, (mean) bias and standard deviation are reported in \cref{app:results_iv}.} We refer to the latter as ``dispersion.''

\begin{figure}[tp]
\centering
\textsc{IV: Median bias of estimators} \\[0.5\baselineskip]
\includegraphics[width=0.85\linewidth]{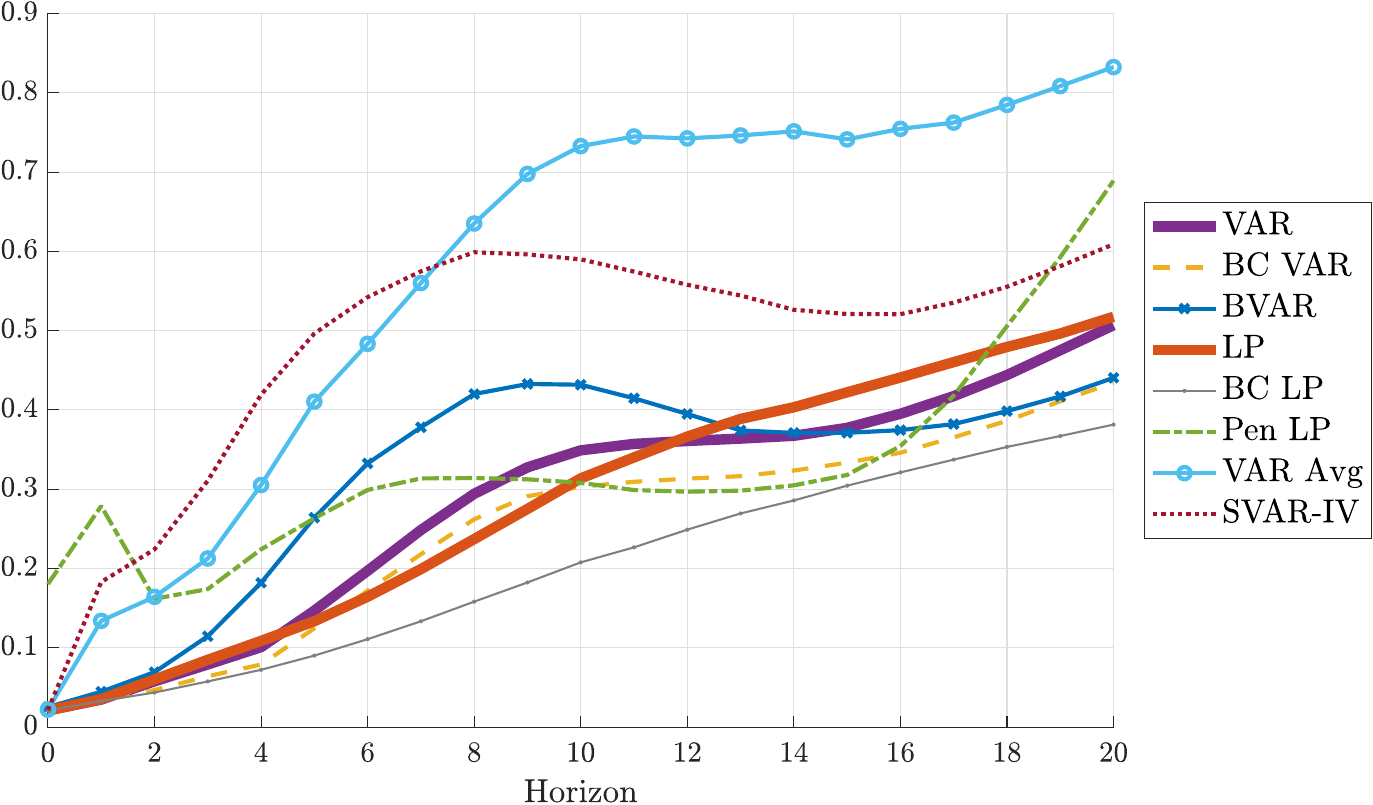}
\caption{Median (across DGPs) of absolute median bias of the different estimation procedures, relative to $\sqrt{\frac{1}{21}\sum_{h=0}^{20}\theta_h^2}$. The first seven estimators listed in the figure legend include the proxy/IV directly in the observed data vector, see \cref{sec:estim}.}
\label{fig:bias_iv}

\vspace*{\floatsep}

\centering
\textsc{IV: Interquartile range of estimators} \\[0.5\baselineskip]
\includegraphics[width=0.85\linewidth]{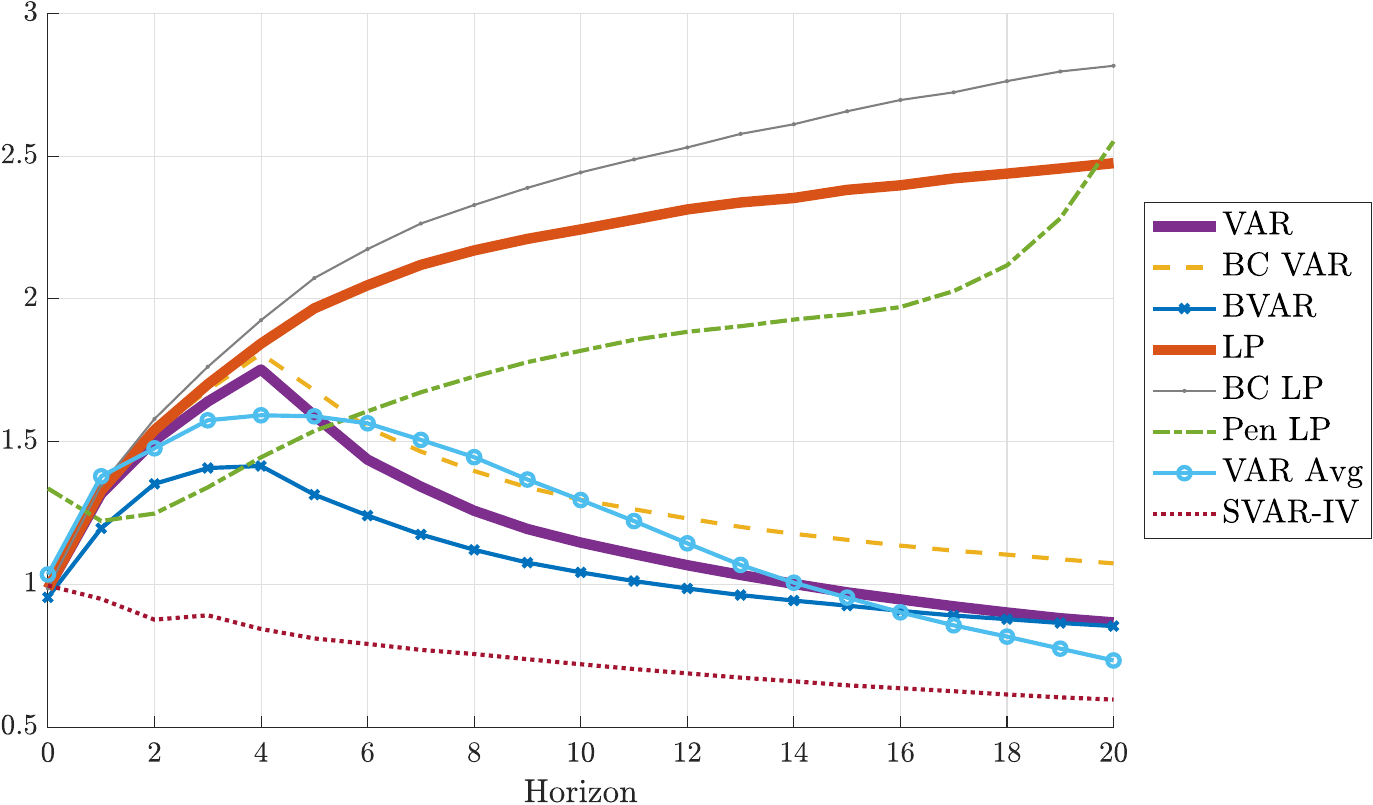}
\caption{Median (across DGPs) of interquartile range of the different estimation procedures, relative to $\sqrt{\frac{1}{21}\sum_{h=0}^{20}\theta_h^2}$. See caption for \cref{fig:bias_iv}.}
\label{fig:std_iv}
\end{figure}

\cref{fig:bias_iv,fig:std_iv} show the median bias and interquartile range of the various IV estimators.  If we ignore the dotted line representing SVAR-IV, these figures are qualitatively similar to those presented in \cref{sec:results_1}. However, SVAR-IV stands out by exhibiting especially high median bias and especially low interquartile range at all horizons. This is consistent with the existing theoretical work referenced in \cref{sec:estim}: unlike the ``internal instruments'' procedures, SVAR-IV is asymptotically biased when the shock is not invertible, and we saw in \cref{sec:dgp_summ_stat} that the degree of invertibility is generally low in our DGPs.\footnote{\alert{Consistent with theory, we furthermore find that the median bias of SVAR-IV is particularly large relative to other estimation methods in the subset of DGPs with the smallest degree of invertibility. See \cref{app:results_iv}.}} On the other hand, the SVAR-IV procedure has fewer parameters to estimate (as it excludes the IV $z_t$ from the reduced-form VAR regression), causing a reduction in dispersion relative to the other procedures. Though we view the high median bias of SVAR-IV across our DGPs as worrying, its low dispersion is intriguing and may in some cases trump the bias concerns.

\subsection{Robustness}
\label{sec:results_robustness}

This section argues that our main conclusions in \cref{sec:results_1,sec:results_2,sec:results_3,sec:results_4} are robust to several alterations of our baseline simulation specification. We pay particular attention to an exercise that replaces our non-stationary encompassing DFM with a stationary version. Various other robustness checks are listed subsequently, with details relegated to \cref{app:results}.

\paragraph{Stationary DGPs.}
While the majority of applied papers estimate VARs and LPs with a mix of non-stationary and stationary variables in levels \citep[e.g.,][]{Ramey2016}, in some cases researchers transform all their data to stationarity prior to the analysis. To cover such applications, we have repeated our analysis using the stationary estimated DFM of \citet{Stock2016} as our encompassing model. We construct impulse response estimands as before and compare the performance of the same estimation methods, except that the BVAR estimator uses a prior that shrinks towards white noise rather than random walks. Details on the implementation and results are presented in \cref{app:results_stationary}.

Our headline qualitative conclusions go through in the stationary DGPs. We observe the same bias-variance trade-off as in our main analysis, with LPs achieving lower bias than VARs at the cost of elevated variance. As a result, except for researchers that exclusively prioritize bias, least-squares VARs or some kind of shrinkage---in the form of Bayesian VARs or penalized LPs---are preferred. The two most notable differences from our baseline analysis are that (i) penalized LP outperforms BVAR for MSE loss at very short horizons, and (ii) due to the moderate persistence of the stationary DGPs, bias correction has less bite, and uncorrected least-squares LP has near-zero bias at all horizons.

\paragraph{Other robustness checks.}
The following modifications to our baseline simulation specification all leave our main conclusions qualitatively unchanged.

\begin{itemize}

\item \textbf{Recursive identification:} To complement the earlier results with observed shocks and proxy identification, we also consider recursive (Cholesky) identification schemes. We sidestep the controversial issue of whether recursive identification is an economically valid identification strategy by taking as the parameter of interest the shared large-sample limit of the recursive LP/VAR estimators (as the lag length tends to infinity). Details on the definition and empirical implementation are provided in \cref{app:recursive}. Simulation results for recursively identified shocks are similar to those for observed shock identification when the weight $\omega$ on (squared) bias in the loss function exceeds 0.8. However, when $\omega \leq 0.8$, BVAR is more attractive than in our baseline analysis. This is because recursive (i.e., Cholesky) identification relies heavily on estimation of the reduced-form innovation variance-covariance matrix. Uniquely among the estimation procedures we consider, BVAR imposes useful shrinkage on this matrix through the prior. See  \cref{app:results_recursive}.

\item \textbf{Salient observables:} Our results remain essentially unchanged if we restrict attention to a subset of 17 oft-used, salient macroeconomic time series out of the 207 ones in the full \citet{Stock2016} data set. We consider the exhaustive list of \emph{all} 1,581 five-variable DGPs that can be formed from these 17 series, subject to the selection rules in \cref{sec:dgp_implementation}. See \cref{app:results_salientobs}.

\item \textbf{Near-worst-case performance:} Whereas our baseline results pertain to the median performance of estimators across DGPs, some researchers may instead prefer to focus on ensuring acceptable performance for particularly challenging DGPs. To this end, \cref{app:results_losspcntl} reports the 90th percentiles of the bias and standard deviation across DGPs. Interestingly, adopting this ``near-worst-case'' perspective does not alter much the \emph{relative} magnitudes of bias and standard deviation across estimation procedures. Hence, none of the estimation procedures seem to have a particular advantage in ensuring robustness to challenging environments, over and above their performance in typical DGPs.

\item \textbf{Monetary vs. fiscal shocks:} If we consider the monetary shock DGPs separately from the fiscal shock DGPs, then the bias-variance trade-off is almost identical to that when we consider the DGPs jointly. See \cref{app:results_fiscal_monetary}.

\item \textbf{\alert{Larger} lag length:} If the lag length $p$ is set to 8 instead of 4, then LP and VAR are approximately equivalent out to horizon 8, as predicted by asymptotic theory. BVAR is relatively more attractive than in the $p=4$ case, as the prior reduces the effective dimensionality of the otherwise high-dimensional VAR system. Beyond that our conclusions on the overall nature of the bias-variance trade-off are unaffected. See \cref{app:results_lag}.

\item \textbf{\alert{Smaller} sample size:} Halving the sample size to $T=100$ quarters tends to increase the estimator standard deviations more than the biases, so shrinkage techniques look even more desirable than in our baseline, including in particular BVAR. Conversely, for bias-corrected LP to be optimal, bias needs to be prioritized even more heavily. See \cref{app:results_smallsample}.

\item \alert{\textbf{Larger sample size and lag length:} We set sample size $T=720$ and lag length $p=12$, a configuration reminiscent of monthly data. However, we caution that the set-up does not faithfully represent actual monthly data sets, since our DFM parameters remain fixed at the quarterly calibration described in \cref{sec:dgp}. As expected, least-squares LP and VAR have approximately equivalent properties out to horizon 12, while the trade-off between estimators at longer horizons is qualitatively similar to our baseline. At horizons below 12, shrinkage via BVAR or penalized LP is even more attractive than in our baseline, unless the bias weight in the loss function is high. See \cref{app:results_large}.}

\item \alert{\textbf{More observables:} If we increase the number of observed macro variables per DGP from 5 to 7, our conclusions are not affected. The only notable quantitative change is that, for IV identification, SVAR-IV has slightly smaller bias relative to the internal instruments procedures, due to the mechanical increase in the degree of invertibility. See \cref{app:results_more}.}

\item \textbf{Variable categories:} We find little evidence that the biases or standard deviations of individual impulse response estimators depend systematically on which categories of time series are included in the DGP (e.g., how many real activity or price series are used). See \cref{app:results_cat}.

\end{itemize}

\subsection{Discussion: can we select the estimator based on the data?}
\label{sec:results_selection}

It is natural to ask whether, instead of selecting estimators based on average performance across DGPs, the choice of estimator can be guided by the data at hand in each given DGP. We now show that this appears to be difficult, as conventional model selection or evaluation criteria are unable to detect even substantial mis-specification of the VAR(4) model in the vast majority of our DGPs. These findings are consistent with the previously documented poor performance of the VAR model averaging estimator. For simplicity, we focus here on observed shock identification.

First, the Akaike Information Criterion tends to select very short lag lengths $\hat{p}_{AIC}$ in our DGPs, as already mentioned earlier. The 90th percentile of $\hat{p}_{AIC}$ (across simulations) does not exceed 2 in any of our 6,000 DGPs, and it in fact equals 2 in only 68.3\% of those DGPs. This frequently used model selection tool therefore essentially never indicates that the VAR(4) specification is mis-specified.

Second, the Lagrange Multiplier test of residual serial correlation has low power in most of our DGPs. We carry out this test by regressing the sample VAR residuals on their first lags, controlling for four lags of the observed variables, and employing the likelihood ratio test defined in \citet{Johansen1995}. Using a 10\% significance level for the test, only around 8\% of the DGPs exhibit a rejection probability above 25\%, and none of the DGPs have a rejection probability above 50\%. Hence, this conventional specification test of the VAR(4) model is under-powered, despite the fact that many of our DGPs are in fact not well approximated by a VAR(4) model in population, as shown in \cref{sec:dgp_summ_stat}.

It is of course possible that other model selection criteria or specification tests will work better. However, at a minimum, the performance of the VAR model averaging estimator discussed in \cref{sec:results_3} and the evidence presented in this subsection together suggest that it is not straightforward to develop effective data-dependent estimator selection rules for use on conventional macroeconomic time series data.

\section{Conclusion and directions for future research}
\label{sec:conclusion}

We conducted a large-scale simulation study of the performance of LP and VAR structural impulse response estimators, as well as several variants of these methods. We drew the following four main conclusions.

\begin{enumerate}[1.]

\item As predicted by theory, there is a non-trivial bias-variance trade-off between least-squares LP and VAR estimators (after bias correction). Empirically relevant DGPs are unlikely to admit exact finite-order VAR representations, and so mis-specification of VAR estimators is indeed a valid concern, as discussed by \citet{Ramey2016} and \citet{Nakamura2018}, among others. Nevertheless, the slope of the trade-off is steep, with the lower bias of LP coming at the cost of substantially higher variance.

\item Bias-corrected LP is the preferred estimator if \emph{and only if} the researcher overwhelmingly prioritizes minimizing bias, with little regard to precision. Researchers who use LP should acknowledge their focus on bias, and they should apply the \citet{Herbst2021} bias correction procedure when the data are persistent.

\item For researchers that attach at least moderate weight to variance in their loss function (such as under the conventional MSE criterion), VAR methods are attractive. Specifically, Bayesian VARs perform well at short horizons, least-squares VARs at intermediate horizons, and the two methods are comparable at long horizons. The fact that no single VAR method dominates at all horizons means that researchers must take a stand not only on their preferences for bias and variance, but also on their primary horizons of interest, or alternatively ensure that their findings are supported by multiple procedures.

\item In the case of IV identification, the popular SVAR-IV (or proxy-SVAR) procedure can be severely biased, but it has substantially lower dispersion at all horizons than ``internal instruments'' procedures such as LP-IV or internal-IV VARs. The high (median) bias of SVAR-IV is due to its lack of robustness to non-invertibility, which is a pervasive and realistic feature of our DGPs.

\end{enumerate}

These conclusions inevitably depend on the choice of encompassing model and the specific implementation of the impulse response estimators. Our paper first and foremost has aimed to bring the bias-variance trade-off in impulse response estimation to the attention of applied researchers. Our particular quantification of this trade-off has sought to capture the wide range of applied settings faced by macroeconomists, by fitting a dynamic factor model with rich short-run and long-run dynamics to the well-known \citet{Stock2016} data set. Our online code repository (see \cref{fn:github}) facilitates experimentation with alternative encompassing models or estimation procedures.

Our findings point to several potential areas for future research. First, we conjecture that the bias-variance trade-off may differ quantitatively in panel data settings, to the extent that the availability of a large cross section reduces the sampling variance of the estimators for a given time dimension, thus potentially making LP relatively more attractive than in the pure time series case. Second, our analysis has focused on the average performance of estimators across DGPs because we find that conventional model selection or evaluation tools are unable to detect substantial mis-specification of low-order VARs in our simulations; nevertheless, we view data-dependent estimator selection as an area ripe for further investigation. Third, it may be worth investigating whether the performance of the Bayesian VAR procedure at intermediate horizons can be improved by developing alternative prior specifications that are specifically aimed at structural impulse response estimation rather than forecasting, unlike the priors used in much of the literature. Fourth, for the case of IV/proxy identification, an interesting question is whether it is possible to develop alternative invertibility-robust estimation procedures that capture some of the variance improvement enjoyed by the non-robust SVAR-IV estimator. Fifth, we leave exploration of other structural shock identification schemes---such as sign restrictions, long-run restrictions, and non-recursive short-run restrictions---to future work. Sixth, while our simulations were calibrated to quarterly data, it would be illuminating to see whether our conclusions apply also to \alert{monthly calibrations}.

\clearpage

\appendix
\begin{appendices}
\numberwithin{equation}{section}
\numberwithin{figure}{section}
\numberwithin{lem}{section}
\numberwithin{prop}{section}
\numberwithin{cor}{section}

\section{Details on DGP definitions}
\label{app:dgp}

\subsection{Shock definition}
\label{app:subsec:dgp_select}

Our definition of the structural shock of interest, $\varepsilon_{1,t}$, ensures that it has the largest possible contemporaneous effect on nominal interest rates (for monetary shocks) and government spending (for fiscal shocks). Letting $\eta_t \equiv H \varepsilon_t$, $\Sigma_\eta \equiv \var(\eta_t)$, and $\iota^*$ denote the index of the policy instrument $i_t$ in the vector $X_t$, the shock is thus defined through the solution of the following problem:
\begin{equation*}
\max_{H} \quad \Lambda_{\iota^*, \bullet} H e_1 \quad \text{s.t.} \quad H H' = \Sigma_\eta,
\end{equation*}
where $e_1$ selects the first column of $H$. The solution equals $H_{\bullet, 1} = \Sigma_\eta \Lambda_{\iota^*, \bullet}' (\Lambda_{\iota^*, \bullet} \Sigma_\eta \Lambda_{\iota^*, \bullet}')^{-1/2}$.\footnote{The remaining columns in $H$ are chosen arbitrarily to satisfy the variance-covariance constraint; these columns only matter for the simulation results through the implications for reduced-form dynamics.}

\subsection{IV process calibration}
\label{app:subsec:iv_calibration}

We calibrate the innovation noise $\sigma_\nu^2$ in the IV equation to target population IV first-stage F-statistics between 10 and 30 when $T=200$, consistent with borderline weak to moderately strong identification, as in the majority of applied work. This yields $\sigma_\nu \in \{ 1.1, 1.5, 2.3 \}$. We draw $\rho_z$ and $\sigma_\nu$ uniformly at random from their two sets.

\section{Details on estimation procedures}
\label{app:estimators}

\paragraph{Least-squares LP.}
The least-squares LP estimator of the impulse response at horizon $h$ is based on the coefficient $\hat{\beta}_h$ in the $h$-step-ahead OLS regression
\begin{equation} \label{eq:estim_lp}
y_{t+h} = \hat{\mu}_h + \hat{\beta}_h x_t + \hat{\zeta}_h q_t + \sum_{\ell=1}^p \hat{\varphi}_{h,\ell} w_{t-\ell} + \text{residual}_{t,h},
\end{equation}
that is, we regress on the variable $x_t$, with controls given by the vector $q_t$ as well as $p$ lags of all of the data $w_t$. The estimands of \cref{sec:dgp_estimand} can now be estimated as follows:

\begin{enumerate}

\item {\bf Observed shock.} We set $x_t$ equal to the observed shock $\varepsilon_{1,t}$ and omit the contemporaneous controls $q_t$ (we still control for lagged data).\footnote{The lags are not needed for consistency in this case, but they often improve efficiency.}

\item {\bf IV.} We estimate a Two-Stage Least Squares (2SLS) version of \eqref{eq:estim_lp}, setting $x_t$ equal to the policy variable $i_t$, and instrumenting for this variable with the IV $z_t$. We omit $q_t$ in this specification (but still include lagged controls). This is numerically the same as doing a LP of $y_{t+h}$ on $z_t$ (with lagged controls), and dividing this coefficient by the LP coefficient in a regression of $i_t$ on $z_t$ (with lagged controls), see \citet{Stock2018} and \citet{Plagborg2020}.

\item {\bf Recursive identification.} $x_t = i_t$ is the policy variable, while $q_t$ are the variables ordered before $i_t$ in the identification scheme \citep{Plagborg2020}.

\end{enumerate}

\paragraph{Bias-corrected LP.} We implement the bias-corrected LP estimator of \citet{Herbst2021}, using their approximate analytical bias formula for LP with controls and with population autocovariances substituted with sample analogues.\footnote{\citeauthor{Herbst2021}'s analytical derivations assume stationarity, but we will apply the formula regardless. This is similar to how analytical bias correction is typically carried out in VAR contexts \citep{Kilian1998}, as \citet{Pope1990} also assumes stationarity.} Following their recommendation, we implement an iterative bias correction, where the impulse response estimate at horizon $h$ is bias-corrected using the previously corrected impulse response estimates at horizons $1,2,\dots,h-1$.

\paragraph{Penalized LP.}
The \citet{Barnichon2019} estimator lowers the variance of LP by exploiting a prior belief in smoothness of the impulse response function across horizons. Following their preferred implementation, we model the impulse response function using B-spline basis functions. The jaggedness penalty function penalizes deviations from a quadratic function of the horizon $h$. We penalize impulse responses up to horizon 20. The penalty parameter is selected in a data-dependent way using 5-fold cross-validation. We do not penalize the coefficients on the control variables in the LP. When reporting relative impulse responses \eqref{eq:IRF_IV}, we divide by the \emph{least-squares} LP estimate of the impact response of the policy variable $i_t$ to the structural shock.

\paragraph{Least-squares VAR.}
The least-squares VAR coefficient estimates are obtained through equation-by-equation OLS regressions. We perform a Cholesky decomposition of the forecast error variance-covariance matrix and compute impulse response functions with respect to the orthogonalized shocks. The estimands of \cref{sec:dgp_estimand} can now be estimated as follows:

\begin{enumerate}

\item {\bf Observed shock.} The shock $\varepsilon_{1,t}$ is ordered first in $w_t$, and we compute responses to the first innovation.

\item {\bf IV.} We initially consider an ``internal instruments'' approach as in \citet{Ramey2011}. That is, we include the IV $z_t$ in the data vector $w_t$, order the IV first, and compute responses with respect to the first innovation \citep{Plagborg2020}. The relative impulse response \eqref{eq:IRF_IV} is obtained by dividing by the impact response of the policy variable $i_t$.

\item {\bf Recursive identification.} The ordering of variables in $w_t$ equals the ordering of the desired population impulse response estimand (cf. \cref{sec:dgp_estimand}), and we compute responses to the innovation of the policy instrument $i_t$.

\end{enumerate}

In contrast to the above internal instruments approach, the SVAR-IV (or ``proxy-SVAR'') estimator of \citet{Stock2008} is obtained by computing the reduced-form impulse responses $\hat{\Psi}_h$ ($h=0,1,\dots$) corresponding to a VAR in $\bar{w}_t$ (i.e., excluding $z_t$), and then reporting relative impulse responses \eqref{eq:IRF_IV} corresponding to the absolute structural impulse responses $\hat{\Psi}_h\hat{\gamma}$, where $\hat{\gamma}$ is the sample covariance vector of the reduced-form VAR residuals $\hat{u}_t$ and the IV $z_t$.

\paragraph{Bias-corrected VAR.}
We follow \citet{Kilian1998} and consider a modification of the standard least-squares VAR estimator that applies the \citet{Pope1990} analytical bias correction to the reduced-form VAR coefficient matrices. We use \citeauthor{Kilian1998}'s procedure for ensuring the largest eigenvalue of the bias-corrected VAR companion matrix does not exceed 1.

\paragraph{Bayesian VAR.}
Our BVAR implementation follows the default prior recommendations of \citet{Giannone2015}, as implemented in their replication code. The prior is therefore a Minnesota prior, extended with the ``sum-of-coefficients'' and ``dummy-initial-observation'' priors to improve long-run forecasts. The degrees of shrinkage provided by each of the three prior components are governed by three prior hyper-parameters, which are selected by maximizing the marginal likelihood.\footnote{Note that we do not use \citeauthor{Giannone2015}'s computationally intensive hierarchical Bayesian procedure but instead select hyper-parameters to maximize the marginal likelihood. In doing this, we substitute their custom optimization routine with the built-in Matlab function {\tt fminunc}.} To save on computation time, we do not optimize the hyper-parameter-vector $\psi$ (in their notation), i.e., the diagonal of the scale matrix in the Wishart prior on the innovation variance matrix; instead, these hyper-parameters are fixed at the residual variance estimates from preliminary AR(1) regressions.

\paragraph{VAR model averaging.}
\citet{Hansen2016} proposes a data-dependent procedure for averaging across impulse responses estimates produced by a collection of different AR and VAR models with different lag lengths. Let $\hat{\delta}_h(r)$ denote the un-normalized, least-squares recursive impulse response estimate at some horizon $h$ for model $r=1,\dots,R$. We estimate $\hat{\delta}_h(r)$ from $R=40$ candidate models: first, univariate AR models for $y_t$ with lag lengths from $p=1$ up to $p=20$; and second, VAR models in $w_t$ with lag lengths from $p=1$ up to $p=20$. As in \citet{Hansen2016}, the variance-covariance matrix of innovations $\Sigma$ and thus the impact effect $\delta_0$ are fixed across candidate models and treated as known without error.\footnote{To match the impact effect estimate in our benchmark method of least-squares VAR, we use $\hat{\Sigma}$ from the $p=4$ VAR estimate as the true value across all the candidate models.} The VAR model averaging estimator is given by $\sum_{r=1}^R \hat{\omega}_r \hat{\delta}_h(r)$, where the weights $\lbrace \hat{\omega}_r \rbrace_{r=1}^R$ are chosen to minimize the data-dependent approximated MSE estimate $\hat{M}(\omega_1,\dots,\omega_R) \approx E[T(\sum_{r=1}^R \omega_r \hat{\delta}_h(r)-\delta_h)^2]$, subject to the constraints that all weights are nonnegative and $\sum_{r=1}^R \omega_r=1$. Details of the MSE estimate are given in \citet[Section 6]{Hansen2016}.\footnote{The object of interest, $\hat{\delta}_h(r)$, is a scalar, which allows us to omit the weighting matrix required in the MSE estimate in \citet{Hansen2016}.} We run this optimization for the weights separately at each impulse response horizon. Relative impulse responses \eqref{eq:IRF_IV} are computed by dividing the absolute impulse response by the least-squares VAR(4) impact impulse response estimate of $i_t$ with respect to the identified shock.

\phantomsection
\addcontentsline{toc}{section}{References}
\small
\bibliography{lp_var_simul_ref}

\begin{thebibliography}{}

\bibitem[\protect\citeauthoryear{Austin}{Austin}{2020}]{Austin2020}
Austin, B.~A. (2020).
\newblock {\em {Essays on Labor Economics and Econometrics}}.
\newblock PhD thesis, Harvard University.
\newblock Chapter 2: ``The trade-off between LP-IV and SVAR-IV estimation''.

\bibitem[\protect\citeauthoryear{Bai \& Ng}{Bai \& Ng}{2004}]{Bai2004}
Bai, J. \& Ng, S. (2004).
\newblock {A PANIC Attack on Unit Roots and Cointegration}.
\newblock {\em Econometrica}, {\em 72\/}(4), 1127--1177.

\bibitem[\protect\citeauthoryear{Barigozzi, Lippi \& Luciani}{Barigozzi
  et~al.}{2021}]{Barigozzi2021}
Barigozzi, M., Lippi, M., \& Luciani, M. (2021).
\newblock {Large-dimensional Dynamic Factor Models: Estimation of
  Impulse--Response Functions with I(1) cointegrated factors}.
\newblock {\em Journal of Econometrics}, {\em 221\/}(2), 455--482.

\bibitem[\protect\citeauthoryear{Barnichon \& Brownlees}{Barnichon \&
  Brownlees}{2019}]{Barnichon2019}
Barnichon, R. \& Brownlees, C. (2019).
\newblock Impulse {Response} {Estimation} by {Smooth} {Local} {Projections}.
\newblock {\em The Review of Economics and Statistics}, {\em 101\/}(3),
  522--530.

\bibitem[\protect\citeauthoryear{Brugnolini}{Brugnolini}{2018}]{Brugnolini2018}
Brugnolini, L. (2018).
\newblock {About Local Projection Impulse Response Function Reliability}.
\newblock CEIS Research Paper, Vol. 16, Issue 6, No. 440.

\bibitem[\protect\citeauthoryear{Bruns \& L\"{u}tkepohl}{Bruns \&
  L\"{u}tkepohl}{2022}]{Bruns2021}
Bruns, M. \& L\"{u}tkepohl, H. (2022).
\newblock {Comparison of local projection estimators for proxy vector
  autoregressions}.
\newblock {\em Journal of Economic Dynamics \& Control}, {\em 134}, 1--17.

\bibitem[\protect\citeauthoryear{Choi \& Chudik}{Choi \&
  Chudik}{2019}]{Choi2019}
Choi, C.-Y. \& Chudik, A. (2019).
\newblock {Estimating impulse response functions when the shock series is
  observed}.
\newblock {\em Economics Letters}, {\em 180}, 71--75.

\bibitem[\protect\citeauthoryear{Christiano, Eichenbaum \& Evans}{Christiano
  et~al.}{1999}]{Christiano1999}
Christiano, L., Eichenbaum, M., \& Evans, C. (1999).
\newblock {Monetary Policy Shocks: What Have We Learned and to What End?}
\newblock In J.~B. Taylor \& M.~Woodford (Eds.), {\em Handbook of
  Macroeconomics}, volume~1A  chapter~2, (pp.\ 65--148). Elsevier.

\bibitem[\protect\citeauthoryear{Forni, Gambetti \& Sala}{Forni
  et~al.}{2019}]{Forni2019}
Forni, M., Gambetti, L., \& Sala, L. (2019).
\newblock Structural {VARs} and noninvertible macroeconomic models.
\newblock {\em Journal of Applied Econometrics}, {\em 34\/}(2), 221--246.

\bibitem[\protect\citeauthoryear{Gertler \& Karadi}{Gertler \&
  Karadi}{2015}]{Gertler2015}
Gertler, M. \& Karadi, P. (2015).
\newblock Monetary {Policy} {Surprises}, {Credit} {Costs}, and {Economic}
  {Activity}.
\newblock {\em American Economic Journal: Macroeconomics}, {\em 7\/}(1),
  44--76.

\bibitem[\protect\citeauthoryear{Giannone, Lenza \& Primiceri}{Giannone
  et~al.}{2015}]{Giannone2015}
Giannone, D., Lenza, M., \& Primiceri, G.~E. (2015).
\newblock Prior selection for vector autoregressions.
\newblock {\em Review of Economics and Statistics}, {\em 97\/}(2), 436--451.

\bibitem[\protect\citeauthoryear{Hansen}{Hansen}{2016}]{Hansen2016}
Hansen, B.~E. (2016).
\newblock {Stein Combination Shrinkage for Vector Autoregressions}.
\newblock Manuscript, University of Wisconsin-Madison.

\bibitem[\protect\citeauthoryear{Herbst \& Johannsen}{Herbst \&
  Johannsen}{2023}]{Herbst2021}
Herbst, E. \& Johannsen, B.~K. (2023).
\newblock {Bias in Local Projections}.
\newblock Manuscript, Board of Governors of the Federal Reserve.

\bibitem[\protect\citeauthoryear{Inoue \& Kilian}{Inoue \&
  Kilian}{2020}]{Inoue2020}
Inoue, A. \& Kilian, L. (2020).
\newblock The uniform validity of impulse response inference in
  autoregressions.
\newblock {\em Journal of Econometrics}, {\em 215\/}(2), 450--472.

\bibitem[\protect\citeauthoryear{Johansen}{Johansen}{1995}]{Johansen1995}
Johansen, S. (1995).
\newblock {\em {Likelihood-Based Inference in Cointegrated Vector
  Autoregressive Models}}.
\newblock Oxford University Press.

\bibitem[\protect\citeauthoryear{Jord\`{a}}{Jord\`{a}}{2005}]{Jorda2005}
Jord\`{a}, {\`{O}}. (2005).
\newblock {Estimation and Inference of Impulse Responses by Local Projections}.
\newblock {\em American Economic Review}, {\em 95\/}(1), 161--182.

\bibitem[\protect\citeauthoryear{Kilian}{Kilian}{1998}]{Kilian1998}
Kilian, L. (1998).
\newblock Small-sample {Confidence} {Intervals} for {Impulse} {Response}
  {Functions}.
\newblock {\em Review of Economics and Statistics}, {\em 80\/}(2), 218--230.

\bibitem[\protect\citeauthoryear{Kilian \& Kim}{Kilian \&
  Kim}{2011}]{Kilian2011}
Kilian, L. \& Kim, Y.~J. (2011).
\newblock {How Reliable Are Local Projection Estimators of Impulse Responses?}
\newblock {\em Review of Economics and Statistics}, {\em 93\/}(4), 1460--1466.

\bibitem[\protect\citeauthoryear{Kilian \& L\"{u}tkepohl}{Kilian \&
  L\"{u}tkepohl}{2017}]{Kilian2017}
Kilian, L. \& L\"{u}tkepohl, H. (2017).
\newblock {\em {Structural Vector Autoregressive Analysis}}.
\newblock Cambridge University Press.

\bibitem[\protect\citeauthoryear{Lazarus, Lewis, Stock \& Watson}{Lazarus
  et~al.}{2018}]{Lazarus2018}
Lazarus, E., Lewis, D.~J., Stock, J.~H., \& Watson, M.~W. (2018).
\newblock {HAR} {Inference}: {Recommendations} for {Practice}.
\newblock {\em Journal of Business \& Economic Statistics}, {\em 36\/}(4),
  541--559.

\bibitem[\protect\citeauthoryear{Leeper, Walker \& Yang}{Leeper
  et~al.}{2013}]{Leeper2013}
Leeper, E.~M., Walker, T.~B., \& Yang, S.-C.~S. (2013).
\newblock {Fiscal Foresight and Information Flows}.
\newblock {\em Econometrica}, {\em 81\/}(3), 1115--1145.

\bibitem[\protect\citeauthoryear{Li, Plagborg-M{o}ller \& Wolf}{Li
  et~al.}{2022}]{Li2022}
Li, D., Plagborg-M{o}ller, M., \& Wolf, C.~K. (2022).
\newblock {Local Projections vs. VARs: Lessons From Thousands of DGPs}.
\newblock National Bureau of Economic Research Working Paper No. 30207.

\bibitem[\protect\citeauthoryear{Marcellino, Stock \& Watson}{Marcellino
  et~al.}{2006}]{Marcellino2006}
Marcellino, M., Stock, J.~H., \& Watson, M.~W. (2006).
\newblock {A comparison of direct and iterated multistep AR methods for
  forecasting macroeconomic time series}.
\newblock {\em Journal of Econometrics}, {\em 135\/}(1--2), 499--526.

\bibitem[\protect\citeauthoryear{Meier}{Meier}{2005}]{Meier2005}
Meier, A. (2005).
\newblock {How Big is the Bias in Estimated Impulse Responses? A Horse Race
  between VAR and Local Projection Methods}.
\newblock Manuscript, European University Institute.

\bibitem[\protect\citeauthoryear{Mertens \& Ravn}{Mertens \&
  Ravn}{2013}]{Mertens2013}
Mertens, K. \& Ravn, M.~O. (2013).
\newblock {The Dynamic Effects of Personal and Corporate Income Tax Changes in
  the United States}.
\newblock {\em American Economic Review}, {\em 103\/}(4), 1212--1247.

\bibitem[\protect\citeauthoryear{Miranda-Agrippino \& Ricco}{Miranda-Agrippino
  \& Ricco}{2021}]{MirandaAgrippino2021}
Miranda-Agrippino, S. \& Ricco, G. (2021).
\newblock {Bayesian Local Projections}.
\newblock Warwick Economics Research Papers No. 1348.

\bibitem[\protect\citeauthoryear{Miranda-Agrippino \& Ricco}{Miranda-Agrippino
  \& Ricco}{2023}]{MirandaAgrippino_invert}
Miranda-Agrippino, S. \& Ricco, G. (2023).
\newblock {Identification with External Instruments in Structural VARs}.
\newblock {\em Journal of Monetary Economics}, {\em 135}, 1--19.

\bibitem[\protect\citeauthoryear{Montiel~Olea \&
  Plagborg-M{\o}ller}{Montiel~Olea \&
  Plagborg-M{\o}ller}{2021}]{MontielOlea2020}
Montiel~Olea, J.~L. \& Plagborg-M{\o}ller, M. (2021).
\newblock Local {Projection} {Inference} {Is} {Simpler} and {More} {Robust}
  {Than} {You} {Think}.
\newblock {\em Econometrica}, {\em 89\/}(4), 1789--1823.

\bibitem[\protect\citeauthoryear{Nakamura \& Steinsson}{Nakamura \&
  Steinsson}{2018}]{Nakamura2018}
Nakamura, E. \& Steinsson, J. (2018).
\newblock Identification in {Macroeconomics}.
\newblock {\em Journal of Economic Perspectives}, {\em 32\/}(3), 59--86.

\bibitem[\protect\citeauthoryear{Plagborg-M{\o}ller \& Wolf}{Plagborg-M{\o}ller
  \& Wolf}{2021}]{Plagborg2020}
Plagborg-M{\o}ller, M. \& Wolf, C.~K. (2021).
\newblock Local {Projections} and {VARs} {Estimate} the {Same} {Impulse}
  {Responses}.
\newblock {\em Econometrica}, {\em 89\/}(2), 955--980.

\bibitem[\protect\citeauthoryear{Plagborg-M{\o}ller \& Wolf}{Plagborg-M{\o}ller
  \& Wolf}{2022}]{Plagborg2020_var_decomp}
Plagborg-M{\o}ller, M. \& Wolf, C.~K. (2022).
\newblock {Instrumental Variable Identification of Dynamic Variance
  Decompositions}.
\newblock {\em Journal of Political Economy}, (8), 2164--2202.

\bibitem[\protect\citeauthoryear{Pope}{Pope}{1990}]{Pope1990}
Pope, A.~L. (1990).
\newblock Biases of {Estimators} in {Multivariate} {Non}-{Gaussian}
  {Autoregressions}.
\newblock {\em Journal of Time Series Analysis}, {\em 11\/}(3), 249--258.

\bibitem[\protect\citeauthoryear{Ramey}{Ramey}{2011}]{Ramey2011}
Ramey, V.~A. (2011).
\newblock {Identifying Government Spending Shocks: It's All in the Timing}.
\newblock {\em Quarterly Journal of Economics}, {\em 126\/}(1), 1--50.

\bibitem[\protect\citeauthoryear{Ramey}{Ramey}{2016}]{Ramey2016}
Ramey, V.~A. (2016).
\newblock {Macroeconomic Shocks and Their Propagation}.
\newblock In J.~B. Taylor \& H.~Uhlig (Eds.), {\em Handbook of Macroeconomics},
  volume~2  chapter~2, (pp.\ 71--162). Elsevier.

\bibitem[\protect\citeauthoryear{Romer \& Romer}{Romer \&
  Romer}{2004}]{Romer2004}
Romer, C.~D. \& Romer, D.~H. (2004).
\newblock {A New Measure of Monetary Shocks: Derivation and Implications}.
\newblock {\em American Economic Review}, {\em 94\/}(4), 1055--1084.

\bibitem[\protect\citeauthoryear{Sawa}{Sawa}{1972}]{Sawa1972}
Sawa, T. (1972).
\newblock {Finite-Sample Properties of the $k$-Class Estimators}.
\newblock {\em Econometrica}, {\em 40\/}(4), 653--680.

\bibitem[\protect\citeauthoryear{Schorfheide}{Schorfheide}{2005}]{Schorfheide2005}
Schorfheide, F. (2005).
\newblock {VAR forecasting under misspecification}.
\newblock {\em Journal of Econometrics}, {\em 128\/}(1), 99--136.

\bibitem[\protect\citeauthoryear{Sims}{Sims}{1980}]{Sims1980}
Sims, C.~A. (1980).
\newblock {Macroeconomics and Reality}.
\newblock {\em Econometrica}, {\em 48\/}(1), 1--48.

\bibitem[\protect\citeauthoryear{Stock}{Stock}{2008}]{Stock2008}
Stock, J.~H. (2008).
\newblock {What's New in Econometrics: Time Series, Lecture 7}.
\newblock Lecture slides, NBER Summer Institute.

\bibitem[\protect\citeauthoryear{Stock \& Watson}{Stock \&
  Watson}{2012}]{Stock2012bpea}
Stock, J.~H. \& Watson, M.~W. (2012).
\newblock {Disentangling the Channels of the 2007--09 Recession}.
\newblock {\em Brookings Papers on Economic Activity}, {\em 2012\/}(1),
  81--135.

\bibitem[\protect\citeauthoryear{Stock \& Watson}{Stock \&
  Watson}{2016}]{Stock2016}
Stock, J.~H. \& Watson, M.~W. (2016).
\newblock Dynamic factor models, factor-augmented vector autoregressions, and
  structural vector autoregressions in macroeconomics.
\newblock In {\em {Handbook of Macroeconomics}}, volume~2  chapter~8, (pp.\
  415--525). Elsevier.

\bibitem[\protect\citeauthoryear{Stock \& Watson}{Stock \&
  Watson}{2018}]{Stock2018}
Stock, J.~H. \& Watson, M.~W. (2018).
\newblock {Identification and Estimation of Dynamic Causal Effects in
  Macroeconomics Using External Instruments}.
\newblock {\em Economic Journal}, {\em 128\/}(610), 917--948.

\bibitem[\protect\citeauthoryear{Xu}{Xu}{2023}]{Xu2023}
Xu, K.-L. (2023).
\newblock {Local Projection Based Inference under General Conditions}.
\newblock Manuscript, Indiana University Bloomington.

\end{thebibliography}


\begin{thebibliography}{}

\bibitem[\protect\citeauthoryear{Barigozzi, Lippi \& Luciani}{Barigozzi
  et~al.}{2021}]{Barigozzi2021}
Barigozzi, M., Lippi, M., \& Luciani, M. (2021).
\newblock {Large-dimensional Dynamic Factor Models: Estimation of
  Impulse--Response Functions with I(1) cointegrated factors}.
\newblock {\em Journal of Econometrics}, {\em 221\/}(2), 455--482.

\bibitem[\protect\citeauthoryear{Blanchard \& Perotti}{Blanchard \&
  Perotti}{2002}]{Blanchard2002}
Blanchard, O. \& Perotti, R. (2002).
\newblock {An Empirical Characterization of the Dynamic Effects of Changes in
  Government Spending and Taxes on Output}.
\newblock {\em Quarterly Journal of Economics}, {\em 117\/}(4), 1329--1368.

\bibitem[\protect\citeauthoryear{Christiano, Eichenbaum \& Evans}{Christiano
  et~al.}{1999}]{Christiano1999}
Christiano, L., Eichenbaum, M., \& Evans, C. (1999).
\newblock {Monetary Policy Shocks: What Have We Learned and to What End?}
\newblock In J.~B. Taylor \& M.~Woodford (Eds.), {\em Handbook of
  Macroeconomics}, volume~1A  chapter~2, (pp.\ 65--148). Elsevier.

\bibitem[\protect\citeauthoryear{Fern{\'a}ndez-Villaverde, Rubio-Ram{\'\i}rez,
  Sargent \& Watson}{Fern{\'a}ndez-Villaverde et~al.}{2007}]{Fernandez2007}
Fern{\'a}ndez-Villaverde, J., Rubio-Ram{\'\i}rez, J.~F., Sargent, T.~J., \&
  Watson, M.~W. (2007).
\newblock {ABCs (and Ds) of Understanding VARs}.
\newblock {\em American Economic Review}, {\em 97\/}(3), 1021--1026.

\bibitem[\protect\citeauthoryear{Johansen}{Johansen}{1995}]{Johansen1995}
Johansen, S. (1995).
\newblock {\em {Likelihood-Based Inference in Cointegrated Vector
  Autoregressive Models}}.
\newblock Oxford University Press.

\bibitem[\protect\citeauthoryear{Lazarus, Lewis, Stock \& Watson}{Lazarus
  et~al.}{2018}]{Lazarus2018}
Lazarus, E., Lewis, D.~J., Stock, J.~H., \& Watson, M.~W. (2018).
\newblock {HAR} {Inference}: {Recommendations} for {Practice}.
\newblock {\em Journal of Business \& Economic Statistics}, {\em 36\/}(4),
  541--559.

\bibitem[\protect\citeauthoryear{Magnus \& Neudecker}{Magnus \&
  Neudecker}{2007}]{Magnus2007}
Magnus, J. \& Neudecker, H. (2007).
\newblock {\em Matrix Differential Calculus with Applications in Statistics and
  Econometrics\/} (3rd ed.).
\newblock Wiley Series in Probability and Statistics. John Wiley \& Sons.

\bibitem[\protect\citeauthoryear{Marcellino, Stock \& Watson}{Marcellino
  et~al.}{2006}]{Marcellino2006}
Marcellino, M., Stock, J.~H., \& Watson, M.~W. (2006).
\newblock {A comparison of direct and iterated multistep AR methods for
  forecasting macroeconomic time series}.
\newblock {\em Journal of Econometrics}, {\em 135\/}(1--2), 499--526.

\bibitem[\protect\citeauthoryear{Phillips \& Solo}{Phillips \&
  Solo}{1992}]{Phillips1992}
Phillips, P. C.~B. \& Solo, V. (1992).
\newblock Asymptotics for {Linear} {Processes}.
\newblock {\em Annals of Statistics}, {\em 20\/}(2), 971--1001.

\bibitem[\protect\citeauthoryear{Plagborg-M{\o}ller \& Wolf}{Plagborg-M{\o}ller
  \& Wolf}{2021}]{Plagborg2020}
Plagborg-M{\o}ller, M. \& Wolf, C.~K. (2021).
\newblock Local {Projections} and {VARs} {Estimate} the {Same} {Impulse}
  {Responses}.
\newblock {\em Econometrica}, {\em 89\/}(2), 955--980.

\bibitem[\protect\citeauthoryear{Pope}{Pope}{1990}]{Pope1990}
Pope, A.~L. (1990).
\newblock Biases of {Estimators} in {Multivariate} {Non}-{Gaussian}
  {Autoregressions}.
\newblock {\em Journal of Time Series Analysis}, {\em 11\/}(3), 249--258.

\bibitem[\protect\citeauthoryear{Stock \& Watson}{Stock \&
  Watson}{2016}]{Stock2016}
Stock, J.~H. \& Watson, M.~W. (2016).
\newblock Dynamic factor models, factor-augmented vector autoregressions, and
  structural vector autoregressions in macroeconomics.
\newblock In {\em {Handbook of Macroeconomics}}, volume~2  chapter~8, (pp.\
  415--525). Elsevier.

\end{thebibliography}

\end{appendices}

\end{document}


\title{Online Appendix for ``Local Projections vs. VARs: Lessons From Thousands of DGPs''}
\author{Dake Li \and Mikkel Plagborg-M{\o}ller \and Christian K. Wolf}
\date{\today}
\maketitle

\vspace{-\baselineskip}

\tableofcontents

\clearpage

\begin{appendices}
\crefalias{section}{sappsec}
\crefalias{subsection}{sappsubsec}
\crefalias{subsubsection}{sappsubsubsec}
\setcounter{section}{2}

\section{Estimation of non-stationary DFM}
\label{app:dfm_estim}

We here elaborate on our estimation of the non-stationary DFM, complementing the brief discussion in \cref{sec:dgp_implementation}.

\paragraph{Data transformation.}
We use the same dataset as \citet{Stock2016} but, unlike those authors, do not transform the series to stationarity prior to estimating the DFM. This means that we change \citeauthor{Stock2016}'s transformation codes 2 (first differences), 3 (second differences), 5 (first log differences), and 6 (second log differences) to 1 (levels), 2, 4 (log levels), and 5, respectively. We then correct outliers using the same procedure as \citeauthor{Stock2016}, meaning that when the outlier adjustment procedure is applied to a series in levels, we difference the series, then adjust outliers, and then finally cumulate the series again. There are 8 series who have outliers adjusted, all of which are in levels. Finally, and unlike \citeauthor{Stock2016}, we refrain from subtracting a nonparametric trend estimate from the series. Instead, we control for a linear time trend in the estimation, as discussed below.

\paragraph{Estimation.}
We set the number of factors $n_f$ equal to 6 as in \citet{Stock2016}, since these authors selected the number of factors based on information criteria applied to $\Delta X_t$ (essentially), and this also remains a valid way to select $n_f$ in our non-stationary model. We then apply \citeauthor{Stock2016}'s PCA procedure for unbalanced panels to the differenced data $\Delta X_t$.\footnote{Like \citeauthor{Stock2016}, we estimate the factors using only a subset of the variables in $X_t$, since some variables are essentially aggregates of other variables.} \alert{The differenced data is de-meaned and standardized prior to performing PCA, thus removing any series-specific deterministic linear time trends.} This gives us estimates $\Delta \hat{f}_t$ of the differenced factors (up to rotation), and we then cumulate $\hat{f}_t = \sum_{s=1}^t \Delta \hat{f}_s$. Asymptotically, $\hat{f}_t$ equals $f_t$ up to rotation and a linear time trend \citep{Barigozzi2021}.\footnote{\alert{Since \citet{Stock2016} also estimate factors from essentially the same differenced data, we refer to their interpretation of the factors (p. 488): ``[T]he first factor explains large fractions of the variation in the growth of GDP and employment, but only small fractions of the variation in prices and financial variables. The second through fourth factors explain the variation in headline inflation, oil prices, housing starts, and some financial variables.''}} We then estimate the loadings $\Lambda=(\lambda_1,\dots,\lambda_{n_X})'$ through series-by-series OLS regressions of $X_{i,t}$ onto $\hat{f}_t$, controlling for a linear time trend. The OLS residuals $\hat{v}_{i,t}$ are estimates of the idiosyncratic components $v_{i,t}$.

As in \citet{Stock2016}, we next fit AR($p_v$) processes to each idiosyncratic residual $\hat{v}_{i,t}$ by OLS, separately for each $i$. Unlike \citeauthor{Stock2016}, we apply the \citet{Pope1990} bias correction to the AR coefficient estimates to avoid understating persistence.

Differently from the stationary specification in \citet{Stock2016}, we fit a VECM to the 6 estimated factors $\hat{f}_t$, explicitly allowing for unit roots and cointegration. We specify that the VECM in error correction form contains $p_f-1$ lagged difference terms, implying a VAR($p_f$) model for the factors, and control for a time trend. \alert{That is, the fitted VECM is of the form
\[\Delta \hat{f}_t = \hat{\alpha}\hat{\beta}'\hat{f}_{t-1} + \sum_{\ell=1}^{p_f-1} \hat{B}_\ell \Delta \hat{f}_{t-\ell} + \hat{\nu}_0 + \hat{\nu}_1t +  \hat{\eta}_t,\]
where $\hat{\alpha}$ and $\hat{\beta}$ are $6 \times r$ matrices, with $r$ denoting the cointegration rank, $\hat{B}_1,\dots,\hat{B}_{p_f-1}$ are $6 \times 6$ matrices, and $\hat{\nu}_0$ and $\hat{\nu}_1$ are $6 \times 1$ vectors. All estimated parameters (including the residual variance-covariance matrix) are unrestricted, with the exception of a conventional normalization on the first $r$ columns of $\hat{\beta}$.}\footnote{\alert{We use the command {\tt jcitest} in Matlab's Econometrics Toolbox to test and estimate the VECM. We refer to the command's documentation for implementation details.}} We estimate the cointegration rank \alert{$r$} using the \citet{Johansen1995} maximum eigenvalue test, again controlling for an unrestricted time trend. We apply a sequential testing procedure using a 5\% significance level. Given the final non-rejected cointegration rank, \alert{we estimate the VECM parameters by quasi-MLE \citep{Johansen1995} and then transform these parameters into VAR parameters \citep[Equation 3.1.4]{Kilian2017}}. Note that, while the empirical estimation has controlled for deterministic linear time trends \alert{in the individual series and in the latent factor process (essentially equivalent with linearly detrending the data prior to analysis)}, we omit any such deterministic terms from the final calibrated DFM \eqref{eq:factors}--\eqref{eq:idio_errors} used in our simulation study.

To ensure comparability with \citet{Stock2016}, the principal components routine uses data starting in 1959Q1, while the loadings, idiosyncratic components, and factor VECM are estimated on data starting in 1959Q3.

\paragraph{Further estimation details and results.}
We select the factor and idiosyncratic error lag lengths using the Akaike information criterion (AIC). For the factors, AIC is minimized at $p_f=3$, but the criterion is essentially flat for $p_f \in [2,4]$; thus, to err on the side of allowing for richer long-run dynamics, we set $p_f = 4$. For the idiosyncratic errors, we apply the AIC to each individual series. The 90th percentile of selected lags equals $4$; we thus set $p_v = 4$ to be consistent with the clear majority of the series. All estimated idiosyncratic AR(4) processes are technically stationary, but half the processes have largest AR root exceeding 0.86, with 25\% exceeding 0.93.

The Johansen test selects a cointegration rank of 2 for the factor VECM, corresponding to $6-2=4$ common stochastic trends.

\clearpage

\section{Definition of recursive shock estimand}
\label{app:recursive}

Here we define the impulse response estimand for the recursive identification scheme.

The researcher observes only the endogenous variables $\bar{w}_t \subset X_t$, with no further direct or noisy measures of structural shocks. Thus, the total vector of observables is $w_t = \bar{w}_t$. Consistent with a large literature on recursive shock identification in VARs \citep[e.g.,][]{Christiano1999}, we take as the estimand the impulse responses with respect to a recursive (Cholesky) orthogonalization of the reduced-form (Wold) forecast errors in the VAR($\infty$) process for $\bar{w}_t$ implied by the DFM. The shock of interest is the orthogonalized innovation to a policy variable $i_t$ in $w_t$. For monetary policy DGPs, we order the federal funds rate last, as in \citet{Christiano1999}; this restricts the other included variables to not respond contemporaneously to the monetary innovation. For fiscal policy DGPs, we order the government expenditure series first; this restricts the fiscal authority to respond to other innovations in the recursive VAR with a lag, as in \citet{Blanchard2002}.

Note that the recursively orthogonalized innovation differs across DGPs, and it generally does not equal any of the structural shocks $\varepsilon_{j,t}$ in the DFM. We nevertheless consider this impulse response estimand due to its popularity in applied work.

We now provide the mathematical definition of the estimand. Recall that the encompassing DFM takes the form \eqref{eq:factors}--\eqref{eq:idio_errors}. We map this DFM into the ``ABCD'' form of \citet{Fernandez2007} as follows. The general ABCD representation takes the form
\begin{eqnarray}
s_t &=& A s_{t-1} + B e_t, \label{eq:state_ABCD} \\
y_t &=& C s_{t-1} + D e_t, \label{eq:measurement_ABCD}
\end{eqnarray}
where $e_t \sim \mathcal{N}(0,I)$. Define the notation $\Phi_{1:p} = (\Phi_1,\dots,\Phi_p)$, $\Gamma_\ell = \diag(\delta_{1,\ell},\dots,\delta_{n_X,\ell})$, $\Gamma_{1:p} = (\Gamma_1,\dots,\Gamma_p)$, and $\Xi = \diag(\Xi_1,\dots,\Xi_{n_X})$. Then the mapping from \eqref{eq:factors}--\eqref{eq:idio_errors} to \eqref{eq:state_ABCD}--\eqref{eq:measurement_ABCD} given a selected set of observables $\bar{w}_t=\bar{S}X_t$ is
\begin{eqnarray}
\underbrace{\begin{pmatrix} f_{t} \\ \vdots \\ f_{t-p_f+1} \\ v_t \\ \vdots \\ v_{t-p_v+1} \end{pmatrix}}_{s_t} &=& \underbrace{\begin{pmatrix} \Phi_{1:p_f-1} & \Phi_{p_f} & 0 & 0 \\ I & 0 & 0 & 0  \\ 0 & 0 & \Gamma_{1:p_v-1} & \Gamma_{p_v} \\ 0 & 0 & I & 0 \end{pmatrix}}_{A} \begin{pmatrix} f_{t-1} \\ \vdots \\ f_{t-p_f} \\ v_{t-1} \\ \vdots \\ v_{t-p_v} \end{pmatrix} + \underbrace{\begin{pmatrix} H & 0 \\ 0 & 0 \\ 0 & \Xi \\ 0 & 0 \end{pmatrix}}_{B} \underbrace{\begin{pmatrix} \varepsilon_{t} \\ \xi_t \end{pmatrix}}_{e_t}, \label{eq:state_ABCD_DFM} \\
\bar{w}_t &=& \underbrace{\bar{S} \begin{pmatrix} \Lambda\Phi_{1:p_f} & \Gamma_{1:p_v}  \end{pmatrix}}_{C} \begin{pmatrix} f_{t-1} \\ \vdots \\ f_{t-p_f} \\ v_{t-1} \\ \vdots \\ v_{t-p_v} \end{pmatrix} + \underbrace{\bar{S} \begin{pmatrix} \Lambda H & \Xi \end{pmatrix}}_{D} \begin{pmatrix} \varepsilon_{t} \\ \xi_t \end{pmatrix}. \label{eq:measurement_ABCD_DFM}
\end{eqnarray}

Next we derive the ``innovations representation'' of the ABCD model, proceeding exactly as in \citet{Fernandez2007}:\footnote{To ensure that the matrix $\Sigma=\var(s_t \mid \bar{w}_t,\bar{w}_{t-1},\dots)$ is positive semidefinite despite numerical rounding errors, we write Equation 9 in \citet{Fernandez2007} in the following equivalent way:
\[\Sigma = L\tilde{\Sigma}L', \quad \text{where } L \equiv \tilde{A}-\tilde{A}\tilde{\Sigma}\tilde{C}'(\tilde{C}\tilde{\Sigma}\tilde{C}')^{-1}\tilde{C},\; \tilde{\Sigma} \equiv \begin{pmatrix} \Sigma & 0 \\ 0 & I \end{pmatrix},\; \tilde{A}\equiv(A,B),\; \tilde{C}\equiv (C,D).\]}
\begin{eqnarray*}
	\hat{x}_t &=& A\hat{x}_{t-1} + K\bar{u}_t, \\
	\bar{w}_t &=& C\hat{x}_{t-1} + \bar{u}_t,
\end{eqnarray*}
where $\hat{x}_t = E[s_t \mid \bar{w}_t,\bar{w}_{t-1},\dots]$ and $\bar{u}_t = \bar{w}_t - E[\bar{w}_t \mid \bar{w}_{t-1},\bar{w}_{t-2},\dots]$. The innovations representation immediately yields the impulse responses of the observables $\bar{w_t}$ with respect to the Wold innovations $\bar{u}_t$. We orthogonalize the Wold innovations using a Cholesky decomposition, given the chosen ordering of the variables. In particular, letting $\var(\bar{u}_t) =\Sigma_{\bar{u}} = \bar{B} \bar{B}'$, with $\bar{B}$ lower triangular, and denoting the Wold innovation impulse responses by $\bar{C}(L)$, we define the recursive impulse response estimands as
\begin{equation}
\theta_h \equiv \frac{\bar{C}_{\iota_y, \bullet, h} \bar{B}_{\bullet, \iota_i}}{\bar{C}_{\iota_i, \bullet, 0} \bar{B}_{\bullet, \iota_i}}, \quad h=0,1,2,\dots, \label{eq:IRF_recursive}
\end{equation}
where $\iota_y$ and $\iota_i$ are the indices corresponding to $y_t$ and $i_t$ in the vector $\bar{w}_t$, respectively. Unlike the observed shock and IV estimands considered in \cref{sec:dgp_estimand}, the estimand \eqref{eq:IRF_recursive} might not equal the model-implied \emph{structural} impulse response of the variable $y_t$ with respect to any aggregate shock $\varepsilon_{j,t}$ in the DFM.\footnote{A necessary condition for the impulse responses \eqref{eq:IRF_recursive} to equal structural impulse responses from \eqref{eq:factors}--\eqref{eq:idio_errors} is that $\varepsilon_{j,t} \in \sspan( \{ \bar{w}_{t-\ell} \}_{\ell=0}^\infty)$ for at least one shock $j$. A sufficient condition for $\varepsilon_{t} \in \sspan( \{ \bar{w}_{t-\ell} \}_{\ell=0}^\infty)$ is that $n_{\bar{w}} = n_f$, $\bar{\Lambda}$ is non-singular, and $\Xi_i = 0$ for all $i$ in $\bar{w}_t$.} In other words, the expression \eqref{eq:IRF_recursive} is the impulse response with respect to a potentially non-structural innovation.

To interpret the estimand \eqref{eq:IRF_recursive}, consider two popular applied identification schemes. First, in the monetary policy shock identification scheme of \citet{Christiano1999}, $y_t$ may be aggregate output, $i_t$ is the nominal interest rate, and the nominal rate is typically ordered \emph{after} all other observables. The recursive estimand is then the impulse response of output to a residualized interest rate innovation, normalized by the impact response of interest rates. Second, for the fiscal policy shock identification procedure in \citet{Blanchard2002}, we may again take $y_t$ to be aggregate output and let $i_t$ be aggregate government spending. In this case, reduced-form innovations in the government spending equation are treated as structural shocks, and so \eqref{eq:IRF_recursive} gives impulse responses to those innovations, normalized by the impact response of government spending.

\clearpage

\section{Examples of estimated IRFs}
\label{app:results_irfs}

\cref{fig:supp:lp_irfs,fig:supp:var_irfs,fig:supp:bclp_irfs,fig:supp:bcvar_irfs,fig:supp:bvar_irfs,fig:supp:penlp_irfs,fig:supp:varavg_irfs} provide a visual illustration of estimated impulse response functions (IRFs) from the seven estimation procedures defined in \cref{sec:estim}. We fix a single (randomly chosen) DGP with an observed fiscal shock and simulate ten data sets with sample size $T=200$. We then apply the seven estimation methods to these ten data sets.

\begin{figure}[p]
\centering
\textsc{Observed fiscal shock: LP IRFs} \\[0.5\baselineskip]
\includegraphics[width=0.7\linewidth]{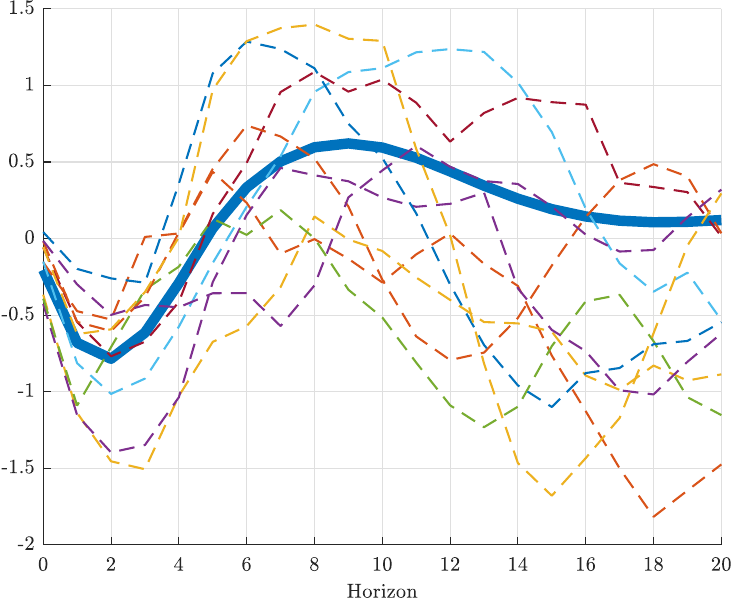}
\caption{Structural impulse response estimand (thick blue) for one specification with an observed fiscal spending shock vs. ten least-squares LP impulse response estimates.}
\label{fig:supp:lp_irfs}

\vspace*{\floatsep}

\centering
\textsc{Observed fiscal shock: Bias-corrected LP IRFs} \\[0.5\baselineskip]
\includegraphics[width=0.7\linewidth]{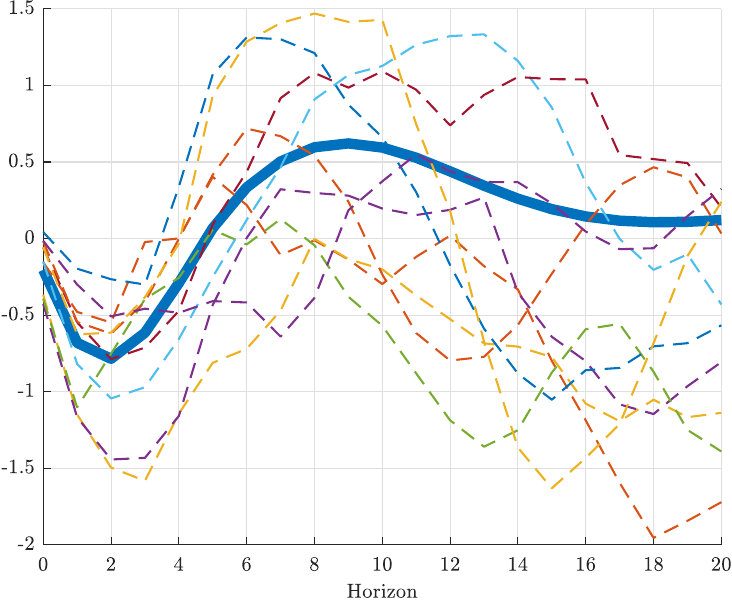}
\caption{Structural impulse response estimand (thick blue) for one specification with an observed fiscal spending shock vs. ten bias-corrected LP impulse response estimates.}
\label{fig:supp:bclp_irfs}
\end{figure}

\begin{figure}[p]
	\centering
	\textsc{Observed fiscal shock: Penalized LP IRFs} \\[0.5\baselineskip]
	\includegraphics[width=0.7\linewidth]{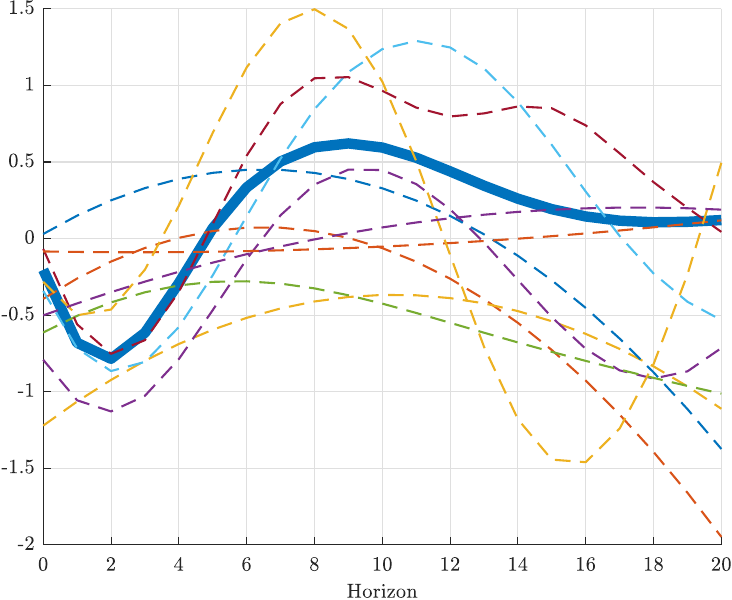}
	\caption{Structural impulse response estimand (thick blue) for one specification with an observed fiscal spending shock vs. ten penalized LP impulse response estimates.}
	\label{fig:supp:penlp_irfs}
\end{figure}

\begin{figure}[p]
\centering
\textsc{Observed fiscal shock: VAR IRFs} \\[0.5\baselineskip]
\includegraphics[width=0.7\linewidth]{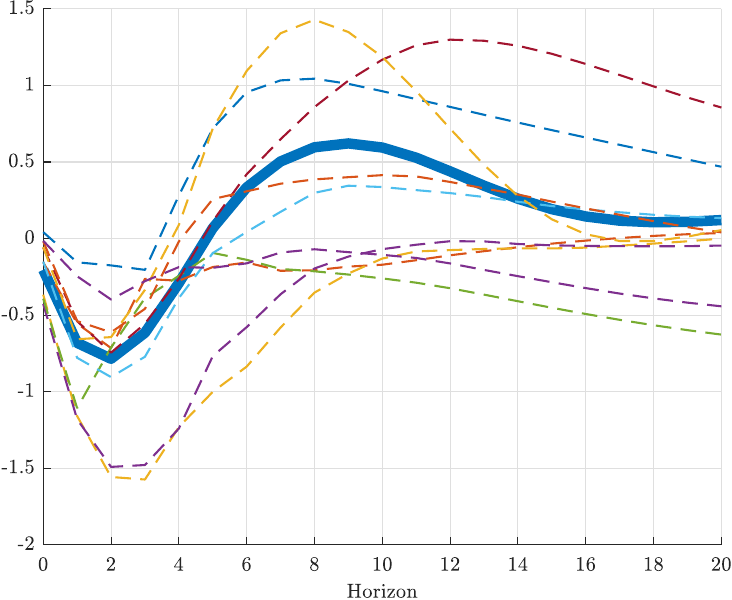}
\caption{Structural impulse response estimand (thick blue) for one specification with an observed fiscal spending shock vs. ten least-squares VAR impulse response estimates.}
\label{fig:supp:var_irfs}

\vspace*{\floatsep}

\centering
\textsc{Observed fiscal shock: Bias-corrected VAR IRFs} \\[0.5\baselineskip]
\includegraphics[width=0.7\linewidth]{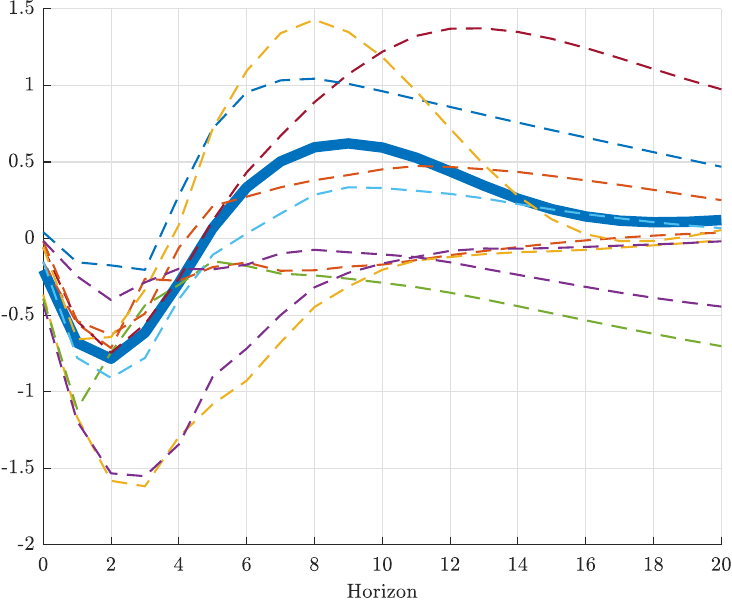}
\caption{Structural impulse response estimand (thick blue) for one specification with an observed fiscal spending shock vs. ten bias-corrected VAR impulse response estimates.}
\label{fig:supp:bcvar_irfs}
\end{figure}

\begin{figure}[p]
\centering
\textsc{Observed fiscal shock: BVAR IRFs} \\[0.5\baselineskip]
\includegraphics[width=0.7\linewidth]{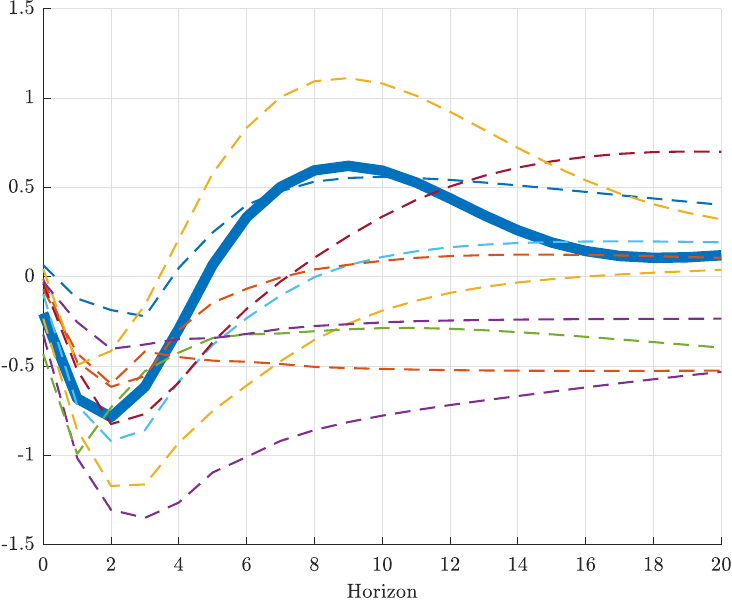}
\caption{Structural impulse response estimand (thick blue) for one specification with an observed fiscal spending shock vs. ten Bayesian VAR impulse response estimates.}
\label{fig:supp:bvar_irfs}

\vspace*{\floatsep}

\centering
\textsc{Observed fiscal shock: VAR model averaging IRFs} \\[0.5\baselineskip]
\includegraphics[width=0.7\linewidth]{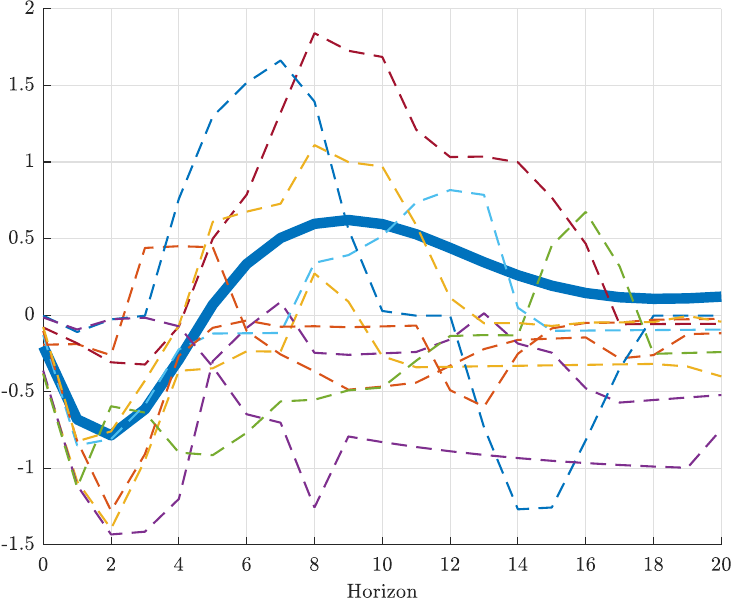}
\caption{Structural impulse response estimand (thick blue) for one specification with an observed fiscal spending shock vs. ten VAR model averaging impulse response estimates.}
\label{fig:supp:varavg_irfs}
\end{figure}

\clearpage

\section{Further simulation results and robustness}
\label{app:results}

\subsection{IV estimators}
\label{app:results_iv}

\cref{fig:supp:bias_iv_supp,fig:supp:std_iv_supp} plot the mean bias and standard deviation of the estimation procedures in the case of IV identification. The relative ranking of the various estimation procedures is essentially the same as in the median bias and interquartile range plots presented in our main analysis in \cref{sec:results_4}.

\alert{\cref{fig:supp:medbias_iv_lowinv_supp,fig:supp:medbias_iv_highinv_supp} show the median bias of our estimation procedures, but now for the 10 percent of DGPs with the smallest and largest degrees of invertibility, respectively. As expected, the median bias for SVAR-IV is particularly elevated relative to other estimation methods if the degree of invertibility is small, as predicted by theory.}

\begin{figure}[tp]
\centering
\textsc{IV: Mean bias of estimators} \\[0.5\baselineskip]
\includegraphics[width=0.85\linewidth]{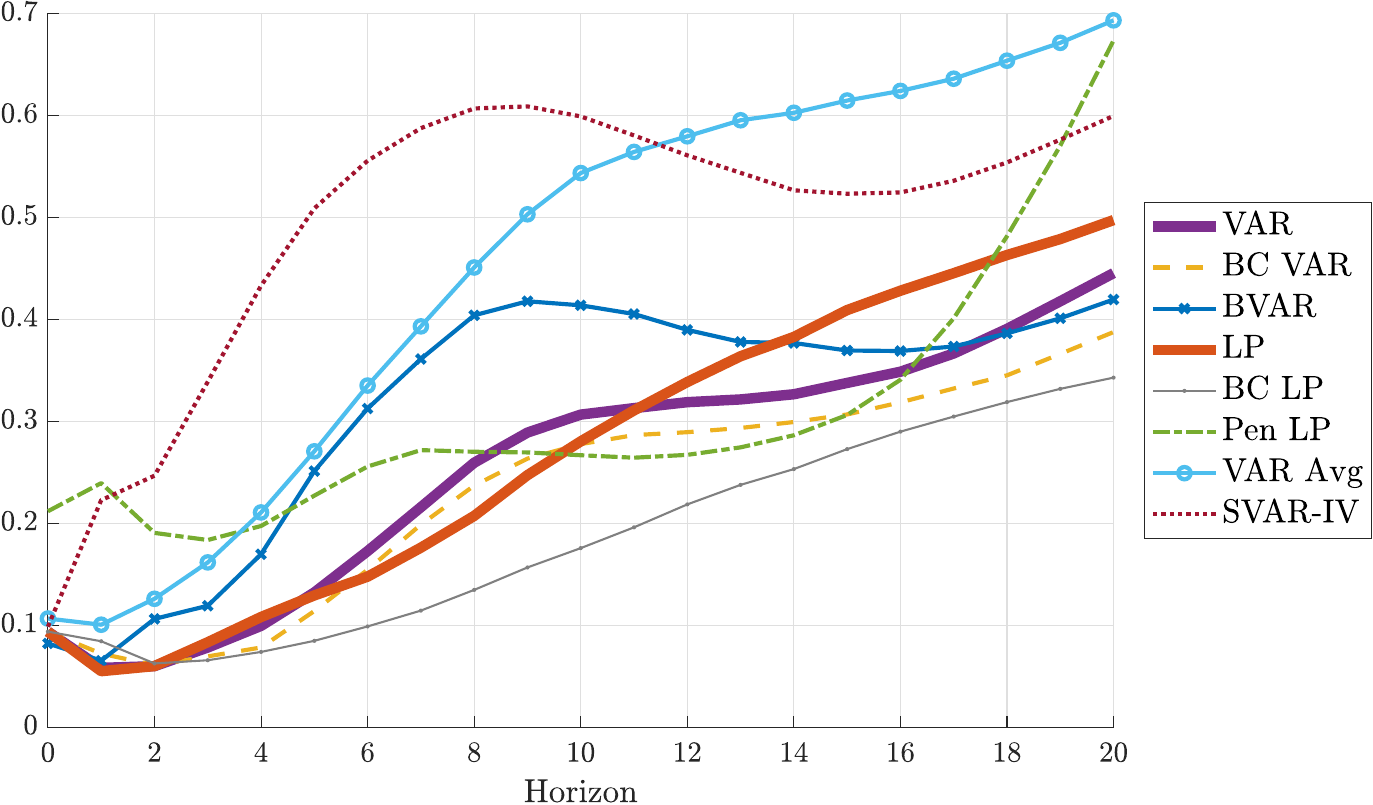}
\caption{Median (across DGPs) of absolute mean bias of the different estimation procedures, relative to $\sqrt{\frac{1}{21}\sum_{h=0}^{20}\theta_h^2}$.}
\label{fig:supp:bias_iv_supp}

\vspace*{\floatsep}

\centering
\textsc{IV: Standard deviation of estimators} \\[0.5\baselineskip]
\includegraphics[width=0.85\linewidth]{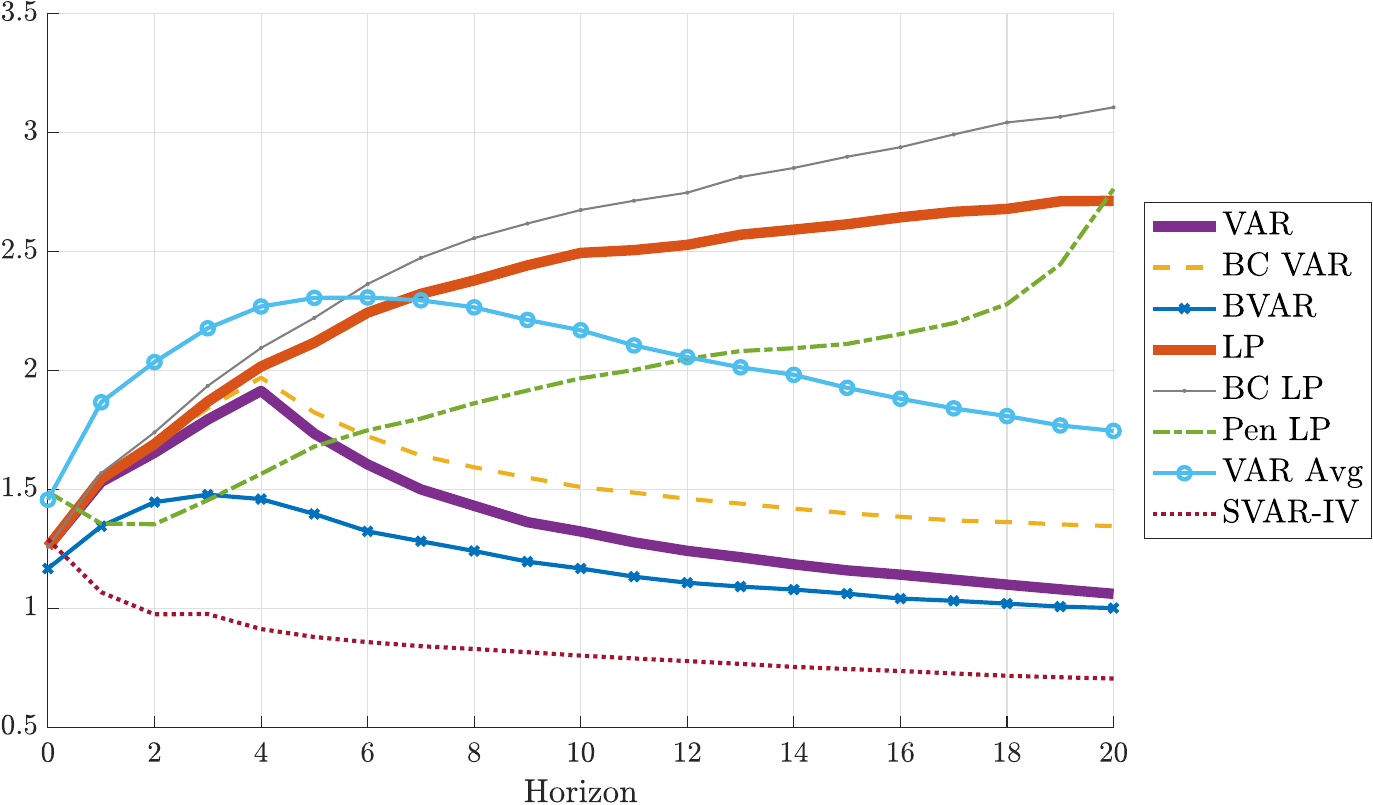}
\caption{Median (across DGPs) of standard deviation of the different estimation procedures, relative to $\sqrt{\frac{1}{21}\sum_{h=0}^{20}\theta_h^2}$.}
\label{fig:supp:std_iv_supp}
\end{figure}

\begin{figure}[tp]
\centering
\textsc{IV: Median bias of estimators, small degree of invertibility} \\[0.5\baselineskip]
\includegraphics[width=0.85\linewidth]{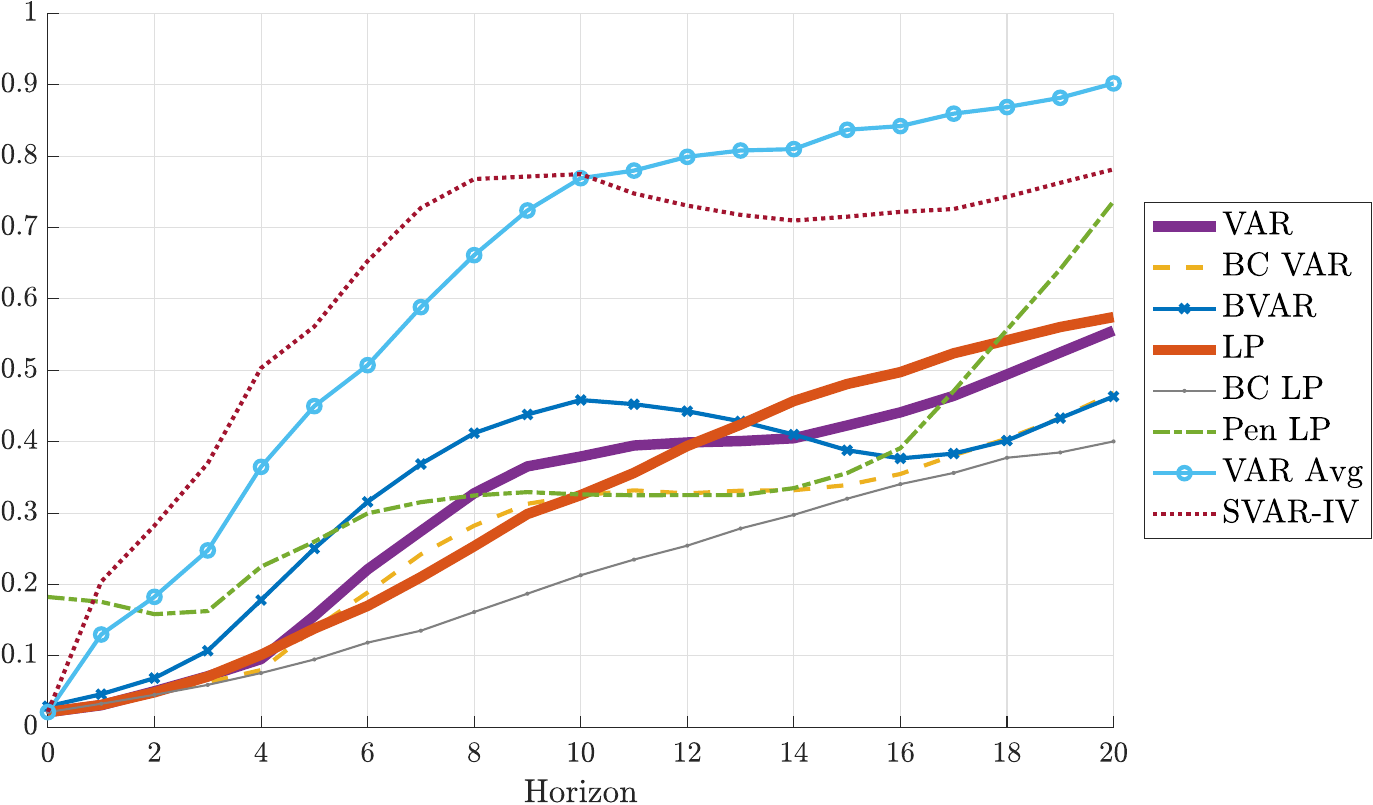}
\caption{Median (across DGPs) of absolute median bias of the different estimation procedures, relative to $\sqrt{\frac{1}{21}\sum_{h=0}^{20}\theta_h^2}$, for the 10\% of DGPs with the smallest degree of invertibility.}
\label{fig:supp:medbias_iv_lowinv_supp}

\vspace*{\floatsep}

\centering
\textsc{IV: Median bias of estimators, large degree of invertibility} \\[0.5\baselineskip]
\includegraphics[width=0.85\linewidth]{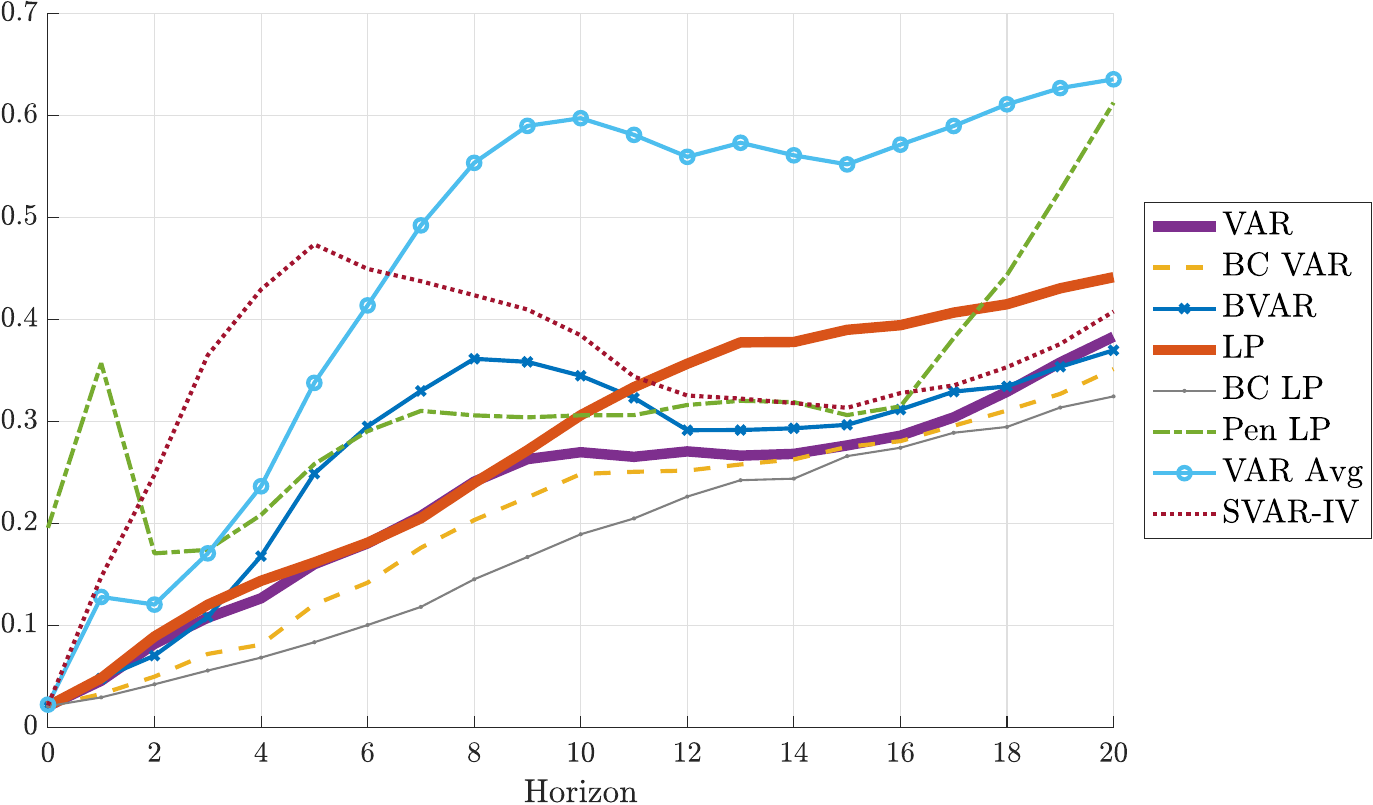}
\caption{Median (across DGPs) of standard deviation of the different estimation procedures, relative to $\sqrt{\frac{1}{21}\sum_{h=0}^{20}\theta_h^2}$, for the 10\% of DGPs with the largest degree of invertibility.}
\label{fig:supp:medbias_iv_highinv_supp}
\end{figure}

\clearpage

\subsection{Stationary DGPs}
\label{app:results_stationary}

For our baseline analysis we used a non-stationary DFM with variables in levels as our encompassing model. We did this to follow the predominant practice in applied work of using variables in levels in LP and VAR regressions. Nevertheless, some researchers work with data that has been transformed to stationarity prior to the analysis. We here discuss results from an analogous simulation study conducted with a stationary encompassing DFM.

\paragraph{Encompassing model.}
We consider a stationary version of the encompassing DFM \eqref{eq:factors}--\eqref{eq:idio_errors}. In particular, we parametrize the model based on the empirical reduced-form parameter estimates from \citet{Stock2016}, using the same specification as in \citet{Lazarus2018}. We provide a brief summary here and refer to \citeauthor{Stock2016} for further details. Differently from our main analysis, each series in the vector of observables $X_t$ is now transformed to ensure approximate stationarity. We follow \citeauthor{Stock2016} in selecting $n_f = 6$ factors, two lags in the factor equation \eqref{eq:factors}, and two lags in the idiosyncratic component equation \eqref{eq:idio_errors}. The reduced-form parameters are estimated by principal components and least-squares procedures; in particular, the factor VAR equation \eqref{eq:factors} is estimated by OLS instead of imposing a VECM model. This pins down all parameters of the DFM except for the structural impact response matrix $H$, which we construct exactly as in our main analysis.

\paragraph{DGP summary statistics.}
\begin{table}[t]
	\centering
	\textsc{Stationary DGPs: Summary statistics} \\[0.5\baselineskip]
	\renewcommand{\arraystretch}{1.2}
	\begin{tabular}{l|ccccccc}
		Percentile	&	min	&	10	&	25	&	50	&	75	&	90	&	max	\\
		\hline\hline
		& \multicolumn{6}{l}{ } \\[-2ex]
		\emph{Data and shocks} & \multicolumn{6}{l}{} \\[0.5ex]
		trace(long-run var)/trace(var)	&	0.36	&	0.76	&	1.00	&	1.45	&	2.18	&	3.71	&	18.09 \\
		Largest VAR eigenvalue & 0.84	&	0.84	&	0.84	&	0.84	&	0.85	&	0.86	&	0.91 \\
		Fraction of VAR coef's $\ell \geq 5$	& 0.02	&	0.11	&	0.15	&	0.21	&	0.27	&	0.34	&	0.77 \\
		\hline
		& \multicolumn{6}{l}{ } \\[-2ex]
		\emph{Impulse responses up to $h=20$} & \multicolumn{6}{l}{} \\[0.5ex]
		No. of interior local extrema	&	1	&	2	&	2	&	2	&	3	&	4	&	6 \\
		Horizon of max abs. value	&	0	&	0	&	0	&	0	&	1	&	2	&	8 \\
		Average/(max abs. value) &	-0.42	&	-0.16	&	-0.08	&	-0.02	&	0.06	&	0.11	&	0.43	\\
		$R^2$ in regression on quadratic	&	0.01	&	0.10	&	0.20	&	0.46	&	0.69	&	0.83	&	0.97 \\
		\hline
	\end{tabular}
	\caption{Quantiles of various population parameters across the stationary DGPs for observed shock identification. ``long-run var'': long-run variance of series. ``var'': variance of series. ``Largest VAR eigenvalue'': largest absolute eigenvalue of reduced-form VAR companion matrix. ``Fraction of VAR coef's $\ell \geq 5$'': $\sum_{\ell=5}^{1000} \|A_\ell^w\|/\sum_{\ell=1}^{1000} \|A_\ell^w\|$, where $A_\ell^w$ are the population VAR($\infty$) coefficient matrices and $\|\cdot\|$ is the Frobenius norm. ``Average/(max abs. value)'': $(\frac{1}{21}\sum_{h=0}^{20}\theta_h)/\max_h \lbrace |\theta_h|\rbrace$. ``$R^2$ in regression on quadratic'': $R^2$ from a regression of the impulse response function $\lbrace \theta_h \rbrace_{h=0}^{20}$ on a quadratic polynomial in $h$.}
	\label{tab:supp:dgp_summ_stationary}
\end{table}
\cref{tab:supp:dgp_summ_stationary} shows summary statistics of the 6,000 stationary DGPs (3,000 monetary DGPs and 3,000 fiscal ones, as in our baseline analysis). The persistence of the DGPs varies widely, as measured by the ratio $\tr(\mathit{LRV}(\bar{w}_t))/\tr(\var(\bar{w}_t))$, but note that none of the DGPs feature near-unit roots, with the largest absolute eigenvalue of the VAR companion matrix being below 0.91 in all cases. As in our baseline analysis, the DGPs feature marked heterogeneity in the characteristics of impulse response functions and in the degree to which the DGPs are well-approximated by a VAR(4) model.

\paragraph{Estimators.}
The estimation procedures are the same as in the baseline analysis, except for the Bayesian VAR. Rather than centering the Minnesota prior at independent random walks, we center at independent white noise, i.e., all autoregressive coefficients are shrunk towards 0. We also remove the ``sum-of-coefficients'' and ``dummy-initial-observation'' priors.

\paragraph{Results.}
\cref{fig:supp:bias_stationary,fig:supp:std_stationary,fig:supp:bestmethod_stationary} show bias, standard deviation, as well as the best estimation method choice for our experiments based on the stationary DFM, focusing here on the case of an observed shock. We summarize the findings in three lessons that are analogous to those in \cref{sec:results_1,sec:results_2,sec:results_3}.

\begin{enumerate}[1.]

\item Least-squares LP and VAR estimators again lie on opposite ends of the bias-variance spectrum. Furthermore, as in our main analysis, the slope of this trade-off is stark, with indifference between the two methods requiring the researcher to almost fully prioritize bias. Since the stationary DGPs are not highly persistent, bias correction has negligible impact, and indeed the uncorrected LP estimator has near-zero bias at all horizons (as predicted by asymptotic theory). Note also that the bias and standard deviation of the VAR estimators converge to zero as the horizon increases, a mechanical feature of VAR estimators in stationary DGPs that is not shared by LP methods.

\item LP (bias-corrected or uncorrected) remains the preferred method when the weight $\omega$ on bias in the loss function is very high, but this requires $\omega$ to be even closer to 1 than in our baseline analysis. The region (gray with vertical lines) where bias-corrected LP is preferred at the top of \cref{fig:supp:bestmethod_stationary} is very thin. Whenever the loss function puts non-trivial weight on precision, penalized LP is usually preferred to least-squares or bias-corrected LP due to the substantial variance reduction.

\item VAR methods tend to be preferred to least-squares or bias-corrected LP whenever the weight on variance in the loss function is non-trivial. However, for MSE loss, BVAR is now outperformed (on average across DGPs) at very short horizons $h \leq 2$ by penalized LP, another shrinkage method. BVAR tends to outperform least-squares VAR at shorter horizons, but this ranking is reversed at intermediate horizons; at long horizons, the methods perform similarly since the estimated impulse responses are close to 0 anyway. Though at first sight bias-corrected VAR (yellow with horizontal lines) looks attractive in \cref{fig:supp:bestmethod_stationary} at intermediate horizons when the weight on bias is moderately high, we remind the reader that least-squares VAR (purple) performs very similarly. VAR model averaging continues to perform poorly regardless of horizon and bias-variance preferences.

\end{enumerate}

\begin{figure}[tp]
\centering
\textsc{Observed shock, stationary DGPs: Bias of estimators} \\[0.5\baselineskip]
\includegraphics[width=0.85\linewidth]{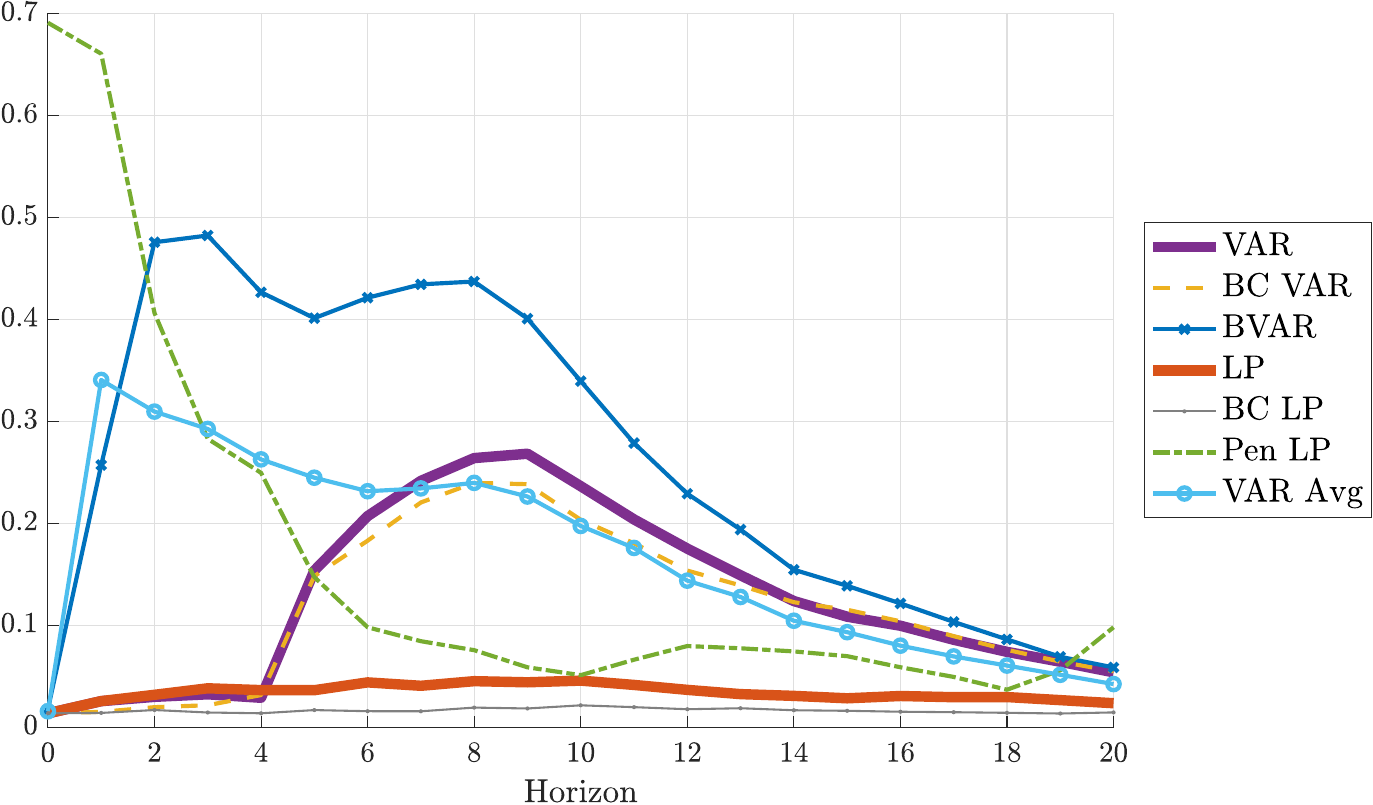}
\caption{Median (across DGPs) of absolute bias of the different estimation procedures, relative to $\sqrt{\frac{1}{21}\sum_{h=0}^{20}\theta_h^2}$.}
\label{fig:supp:bias_stationary}

\vspace*{\floatsep}

\centering
\textsc{Observed shock, stationary DGPs: Standard deviation of estimators} \\[0.5\baselineskip]
\includegraphics[width=0.85\linewidth]{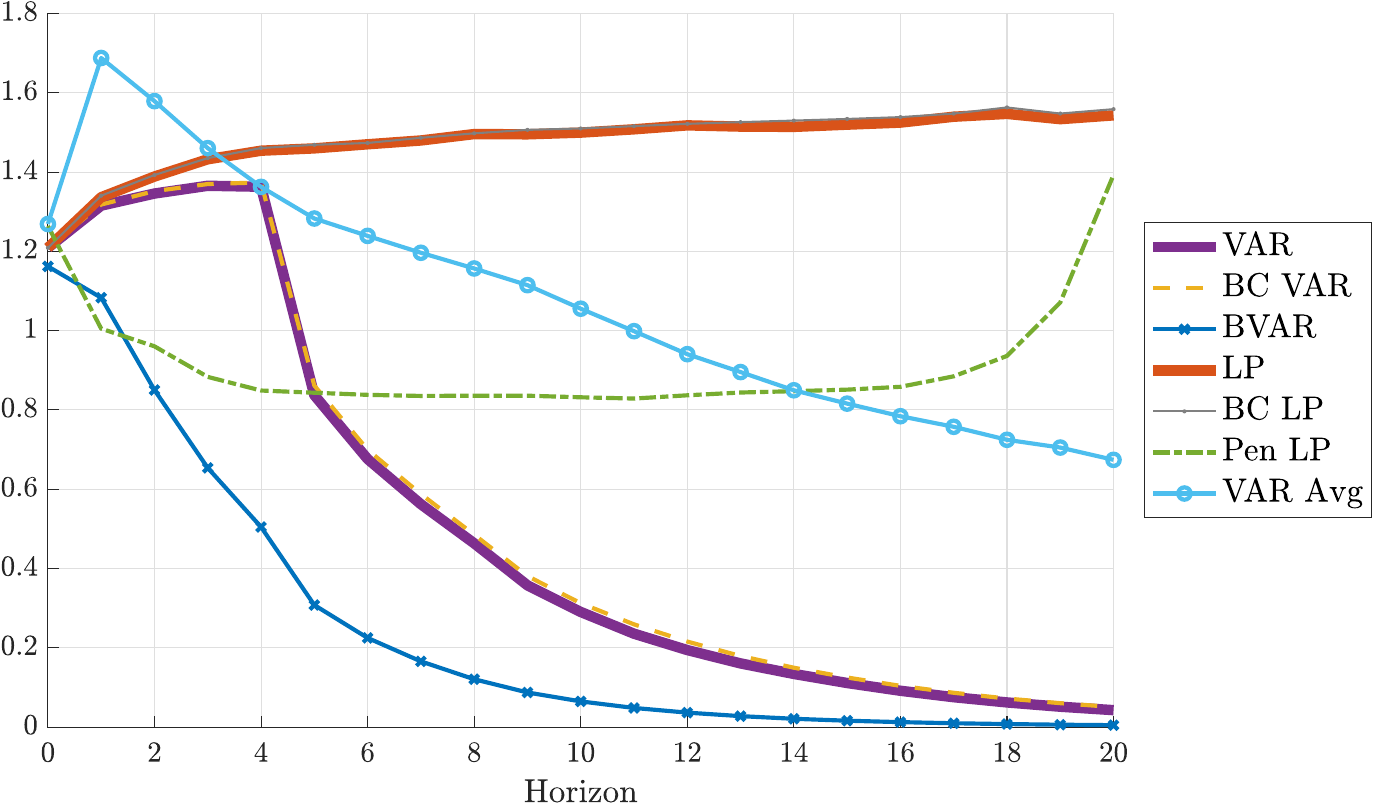}
\caption{Median (across DGPs) of standard deviation of the different estimation procedures, relative to $\sqrt{\frac{1}{21}\sum_{h=0}^{20}\theta_h^2}$.}
\label{fig:supp:std_stationary}
\end{figure}

\begin{figure}[t]
\centering
\textsc{Observed shock, stationary DGPs: Optimal estimation method} \\
\includegraphics[width=\linewidth,clip=true,trim=0 0.5em 0 2.4em]{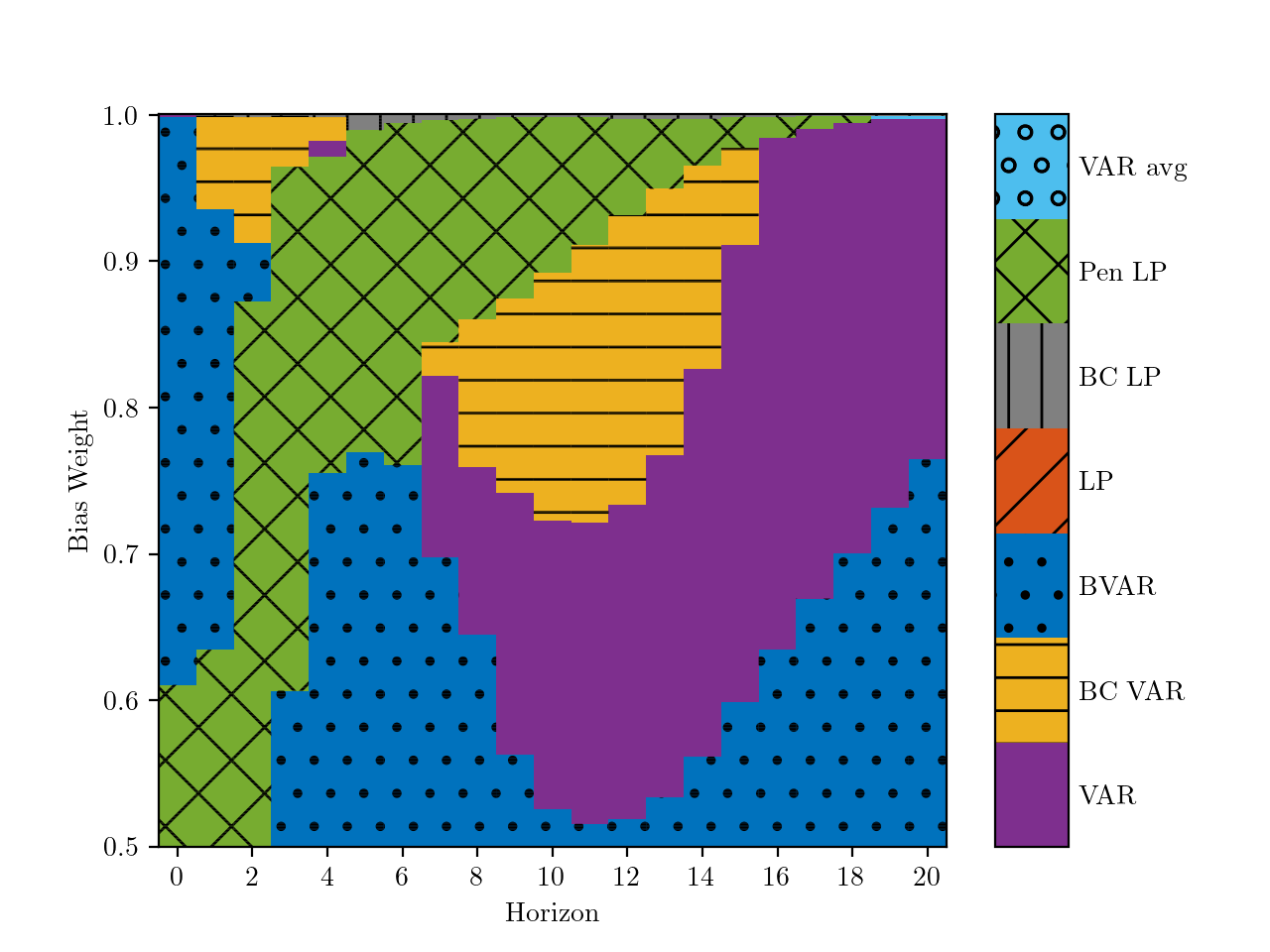}
\caption{Method that minimizes the average (across DGPs) loss function \eqref{eq:loss_simple}. Horizontal axis: impulse response horizon. Vertical axis: weight on squared bias in loss function. The loss function is normalized by the scale of the impulse response function, as in \cref{fig:supp:bias_stationary,fig:supp:std_stationary}.}
\label{fig:supp:bestmethod_stationary}
\end{figure}

\clearpage

\subsection{Recursive identification}
\label{app:results_recursive}
Here we provide results for the recursive impulse response estimand defined in \cref{sec:dgp_estimand} and \cref{app:recursive}.

\cref{tab:supp:dgp_summ_recursive} shows summary statistics for the impulse response functions in the recursive identification setting, analogous to the summary statistics for the ``observed shock'' case in the bottom half of \cref{tab:dgp_summ} in \cref{sec:dgp_summ_stat}. Note that the first three entries of the top half of that table apply without change here.\footnote{The last two entries of ourse do not apply here: the degree of invertibility is trivially equal to $1$, and we are not studying IV identification.}

\begin{table}[tp]
\centering
\textsc{Recursive identification: DGP summary statistics} \\[0.5\baselineskip]
\renewcommand{\arraystretch}{1.2}
\begin{tabular}{l|ccccccc}
Percentile	&	min	&	10	&	25	&	50	&	75	&	90	&	max	\\
\hline\hline
& \multicolumn{6}{l}{ } \\[-2ex]
\emph{Impulse responses up to $h=20$} & \multicolumn{6}{l}{} \\[0.5ex]
No. of interior local extrema	&	0	&	1	&	2	&	2	&	3	&	3	&	7		\\
Horizon of max abs. value	&	0	&	1	&	1	&	6	&	10	&	20	&	20	\\
Average/(max abs. value) &	-0.83	&	-0.70	&	-0.58	&	-0.07	&	0.38	&	0.65	&	0.94	\\
$R^2$ in regression on quadratic	&	0.01	&	0.45	&	0.65	&	0.83	&	0.93	&	0.98	&	1.00	\\
\hline
\end{tabular}
\caption{Quantiles of various population parameters across the 6,000 DGPs for recursive identification. ``Average/(max abs. value)'': $(\frac{1}{21}\sum_{h=0}^{20}\theta_h)/\max_h \lbrace |\theta_h|\rbrace$. ``$R^2$ in regression on quadratic'': R-squared from a regression of the impulse response function $\lbrace \theta_h \rbrace_{h=0}^{20}$ on a quadratic polynomial in $h$.}
\label{tab:supp:dgp_summ_recursive}
\end{table}

\cref{fig:supp:bias_recursive,fig:supp:std_recursive}  show the median (across DGPs) absolute bias and standard deviation of the various estimators. \cref{fig:supp:bestmethod_recursive} depicts the best estimation method as a function of the horizon and the bias weight $\omega$ in the loss function (which is averaged across DGPs). These three figures are reasonably similar---both qualitatively as well as quantitatively---to the corresponding figures for the ``observed shock'' estimands in \cref{sec:results}, with one notable difference: the relative bias increase for BVAR is smaller, and so now BVAR looks even more attractive, leading to the large solid-dotted blue area in \cref{fig:supp:bestmethod_recursive}. Intuitively, since recursive (i.e., Cholesky) shock identification depends heavily on estimating the reduced-form innovation variance-covariance matrix of the multivariate system, the stylized prior information about this matrix imposed by the BVAR provides helpful additional shrinkage that the other methods do not exploit.

\begin{figure}[tp]
\centering
\textsc{Recursive identification: Bias of estimators} \\[0.5\baselineskip]
\includegraphics[width=0.85\linewidth]{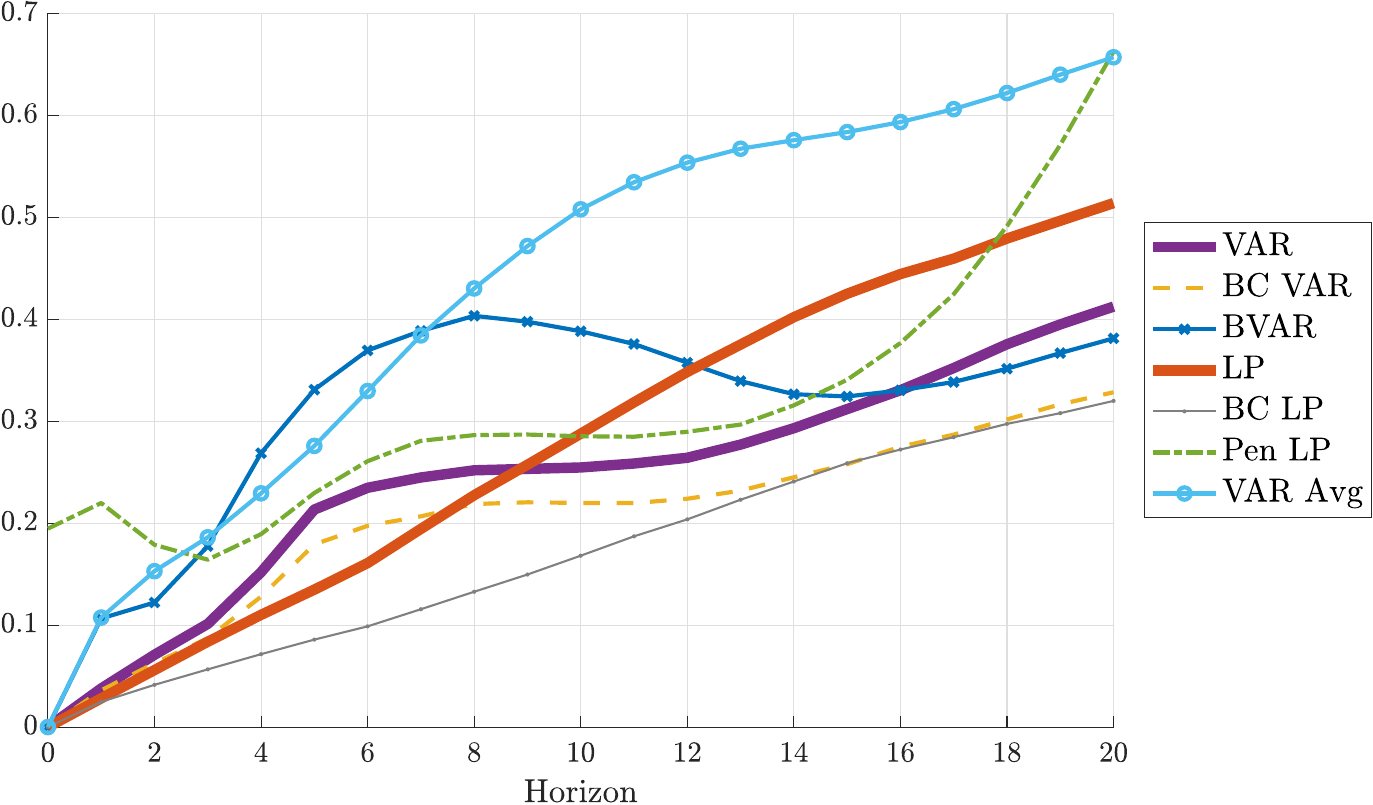}
\caption{Median (across DGPs) of absolute bias of the different estimation procedures, relative to $\sqrt{\frac{1}{21}\sum_{h=0}^{20}\theta_h^2}$.}
\label{fig:supp:bias_recursive}

\vspace*{\floatsep}

\centering
\textsc{Recursive identification: Standard deviation of estimators} \\[0.5\baselineskip]
\includegraphics[width=0.85\linewidth]{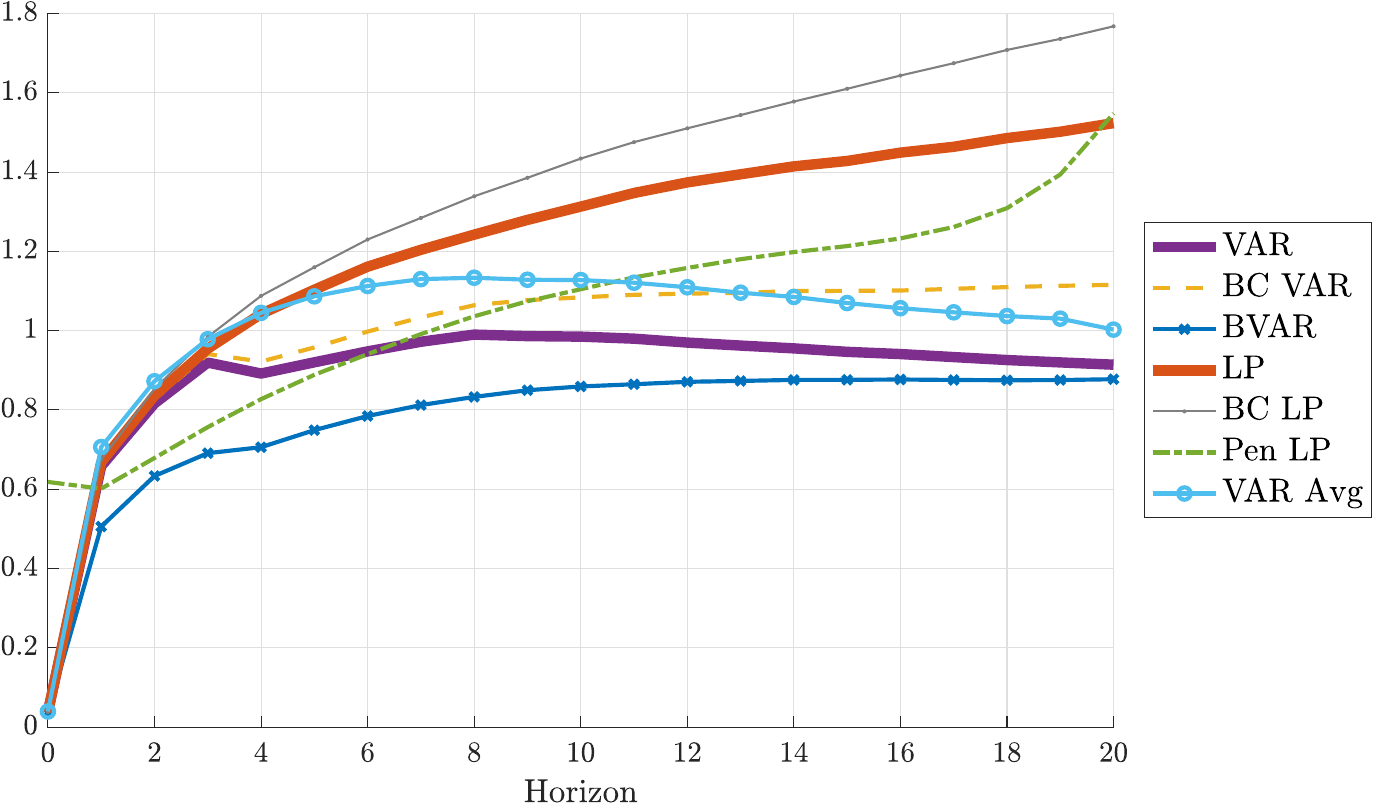}
\caption{Median (across DGPs) of standard deviation of the different estimation procedures, relative to $\sqrt{\frac{1}{21}\sum_{h=0}^{20}\theta_h^2}$.}
\label{fig:supp:std_recursive}
\end{figure}

\begin{figure}[t]
\centering
\textsc{Recursive identification: Optimal estimation method} \\
\includegraphics[width=\linewidth,clip=true,trim=0 0.5em 0 2.4em]{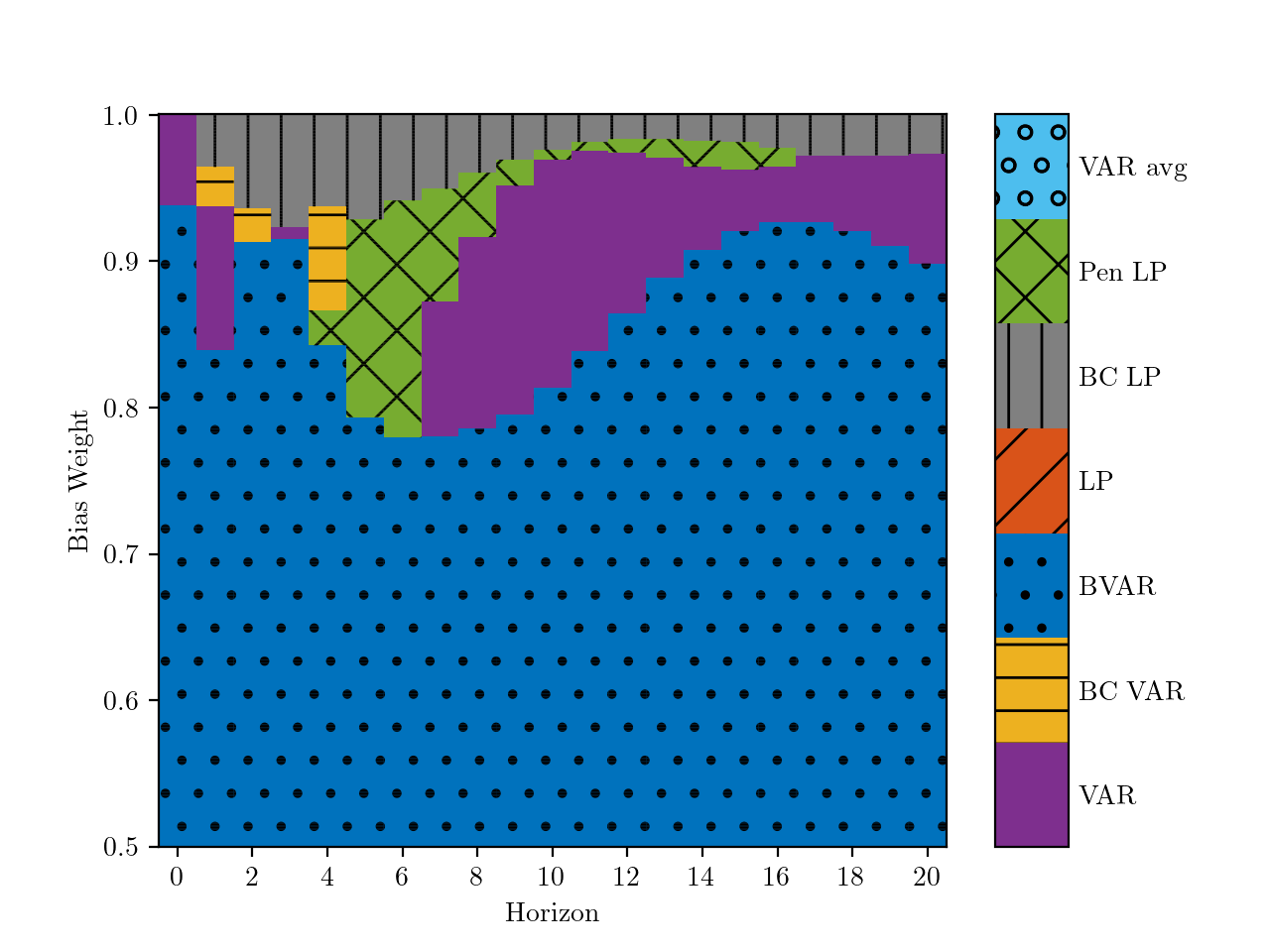}
\caption{Method that minimizes the average (across DGPs) loss function \eqref{eq:loss_simple}. Horizontal axis: impulse response horizon. Vertical axis: weight on squared bias in loss function. The loss function is normalized by the scale of the impulse response function, as in \cref{fig:bias_obsshock,fig:std_obsshock}. At $h=0$, VAR and LP are numerically identical; we break the tie in favor of VAR.}
\label{fig:supp:bestmethod_recursive}
\end{figure}

\clearpage

\subsection{Salient observables}
\label{app:results_salientobs}

We here re-run our analysis on a restricted set of DGPs that use a smaller subset of particularly salient time series. Whereas our baseline analysis randomly draws series from the large set of 207 variables included in the empirical DFM of \citet{Stock2016}, we now consider the exhaustive list of all possible five-variable combinations of 17 oft-used series.

Our subset of salient series includes (\citeauthor{Stock2016} Data Appendix series \# in brackets): \emph{real GDP (1); real consumption (2); real investment (6); real government expenditure (12); the unemployment rate (56); personal consumption expenditure prices (95); the GDP deflator (97); the core consumer price index (121); average hourly earnings (132); the federal funds rate (142); the 10-year Treasury rate (147); the BAA 10-year spread (151); an index of the U.S. dollar exchange rate relative to other major currencies (172); the S\&P 500 (181); a real house price index (193); consumer expectations (196); and real oil prices (202).} As in our baseline analysis, we force each DGP to include either the federal funds rate or government spending (for monetary or fiscal shock estimands, respectively) as well as at least one real activity series (categories 1--3) and one price series (category 6). Subject to these constraints, we then generate the exhaustive list of all five-variable combinations of the salient series. This yields a total of 1,581 DGPs (845 monetary shock DGPs and 736 fiscal shock DGPs).

Results for bias, standard deviation, and optimal method choice are reported in \cref{fig:supp:bias_obsshock_salient,fig:supp:std_obsshock_salient,fig:supp:bestmethod_obsshock_salient}. The figures look very similar to those from our baseline analysis. We therefore conclude that there is little in the way of systematic differences between the larger set of variables included in the full DFM and this smaller subset of particularly salient macroeconomic time series.

\begin{figure}[tp]
	\centering
	\textsc{Observed shock, salient observables: Bias of estimators} \\[0.5\baselineskip]
	\includegraphics[width=0.85\linewidth]{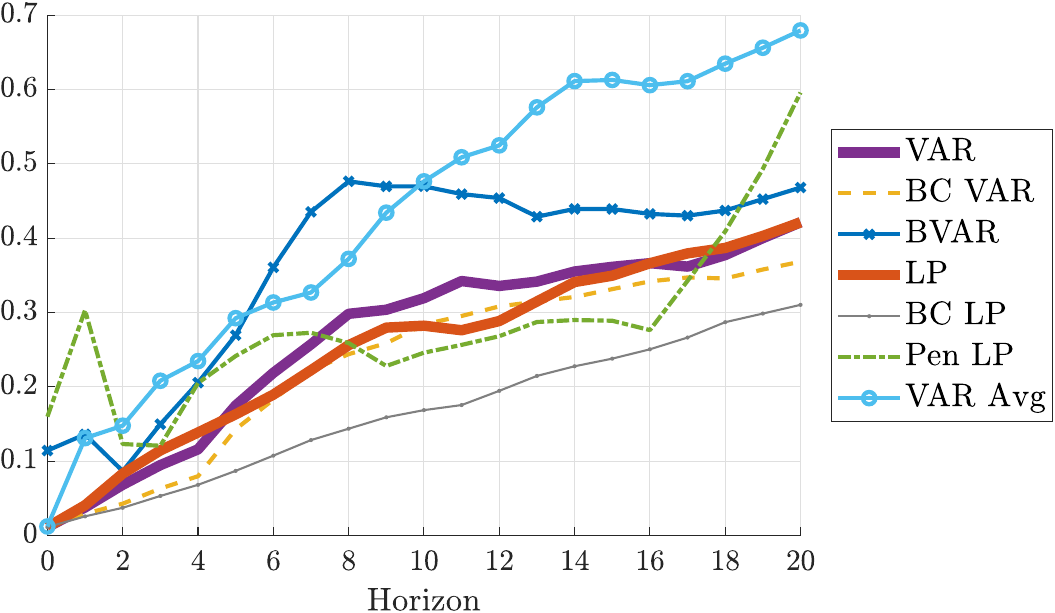}
	\caption{Median (across DGPs) of absolute bias of the different estimation procedures, relative to $\sqrt{\frac{1}{21}\sum_{h=0}^{20}\theta_h^2}$.}
	\label{fig:supp:bias_obsshock_salient}
	
	\vspace*{\floatsep}
	
	\centering
	\textsc{Observed shock, salient observables: Standard deviation of estimators} \\[0.5\baselineskip]
	\includegraphics[width=0.85\linewidth]{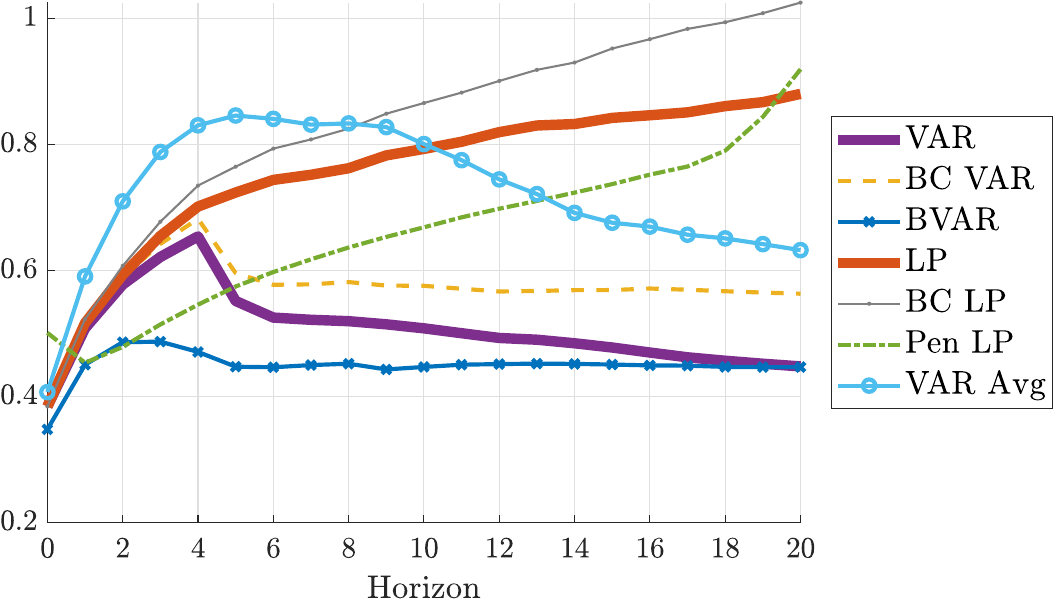}
	\caption{Median (across DGPs) of standard deviation of the different estimation procedures, relative to $\sqrt{\frac{1}{21}\sum_{h=0}^{20}\theta_h^2}$.}
	\label{fig:supp:std_obsshock_salient}
\end{figure}

\begin{figure}[t]
	\centering
	\textsc{Observed shock, salient observables: Optimal estimation method} \\
	\includegraphics[width=\linewidth,clip=true,trim=0 0.5em 0 2.4em]{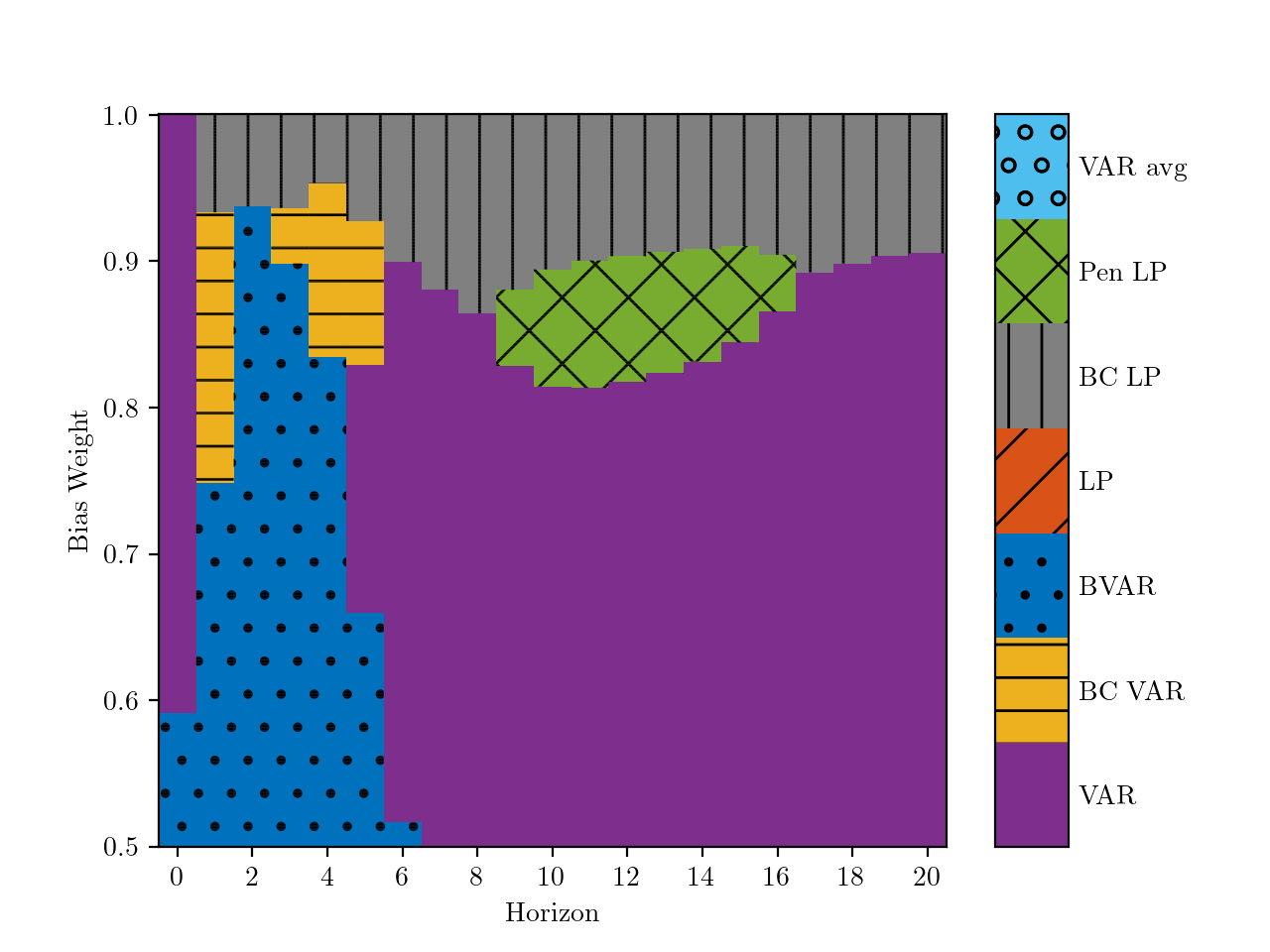}
	\caption{Method that minimizes the average (across DGPs) loss function \eqref{eq:loss_simple}. Horizontal axis: impulse response horizon. Vertical axis: weight on squared bias in loss function. The loss function is normalized by the scale of the impulse response function, as in \cref{fig:bias_obsshock,fig:std_obsshock}. At $h=0$, VAR and LP are numerically identical; we break the tie in favor of VAR.}
	\label{fig:supp:bestmethod_obsshock_salient}
\end{figure}

\clearpage

\subsection{90th percentile loss}
\label{app:results_losspcntl}

Whereas our baseline results report medians (of bias and standard deviation) across DGPs, we now report the 90th percentiles across DGPs. This places the spotlight on those DGPs that are particularly challenging for impulse response estimation.

Figures for bias, standard deviation, and optimal method choice are displayed in \cref{fig:supp:bias_obsshock_p90,fig:supp:std_obsshock_p90,fig:supp:bestmethod_obsshock_p90}. By construction, the bias and standard deviation numbers are now higher for all estimators. Importantly, however, \emph{relative} magnitudes do not change by much; that is, for DGPs in which VARs or shrinkage techniques do poorly, least-squares and bias-corrected LP tend to do just as poorly (relative to their respective median performance). As a result, optimal method choice for a researcher that evaluates loss at the 90th percentile looks similar to our baseline.

\begin{figure}[tp]
	\centering
	\textsc{Observed shock, 90th percentile of loss: Bias of estimators} \\[0.5\baselineskip]
	\includegraphics[width=0.85\linewidth]{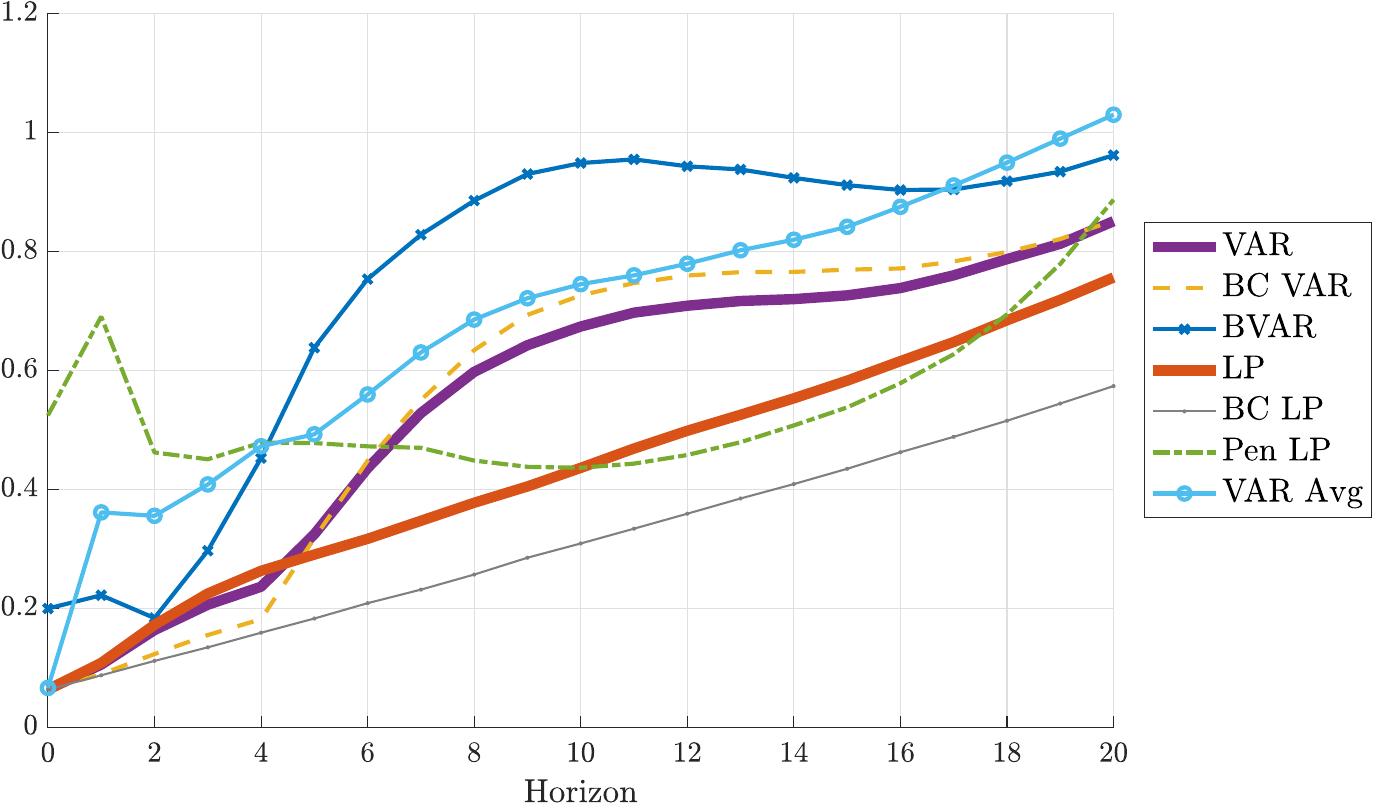}
	\caption{90th percentile (across DGPs) of absolute bias of the different estimation procedures, relative to $\sqrt{\frac{1}{21}\sum_{h=0}^{20}\theta_h^2}$.}
	\label{fig:supp:bias_obsshock_p90}
	
	\vspace*{\floatsep}
	
	\centering
	\textsc{Observed shock, 90th percentile of loss: Standard deviation of estimators} \\[0.5\baselineskip]
	\includegraphics[width=0.85\linewidth]{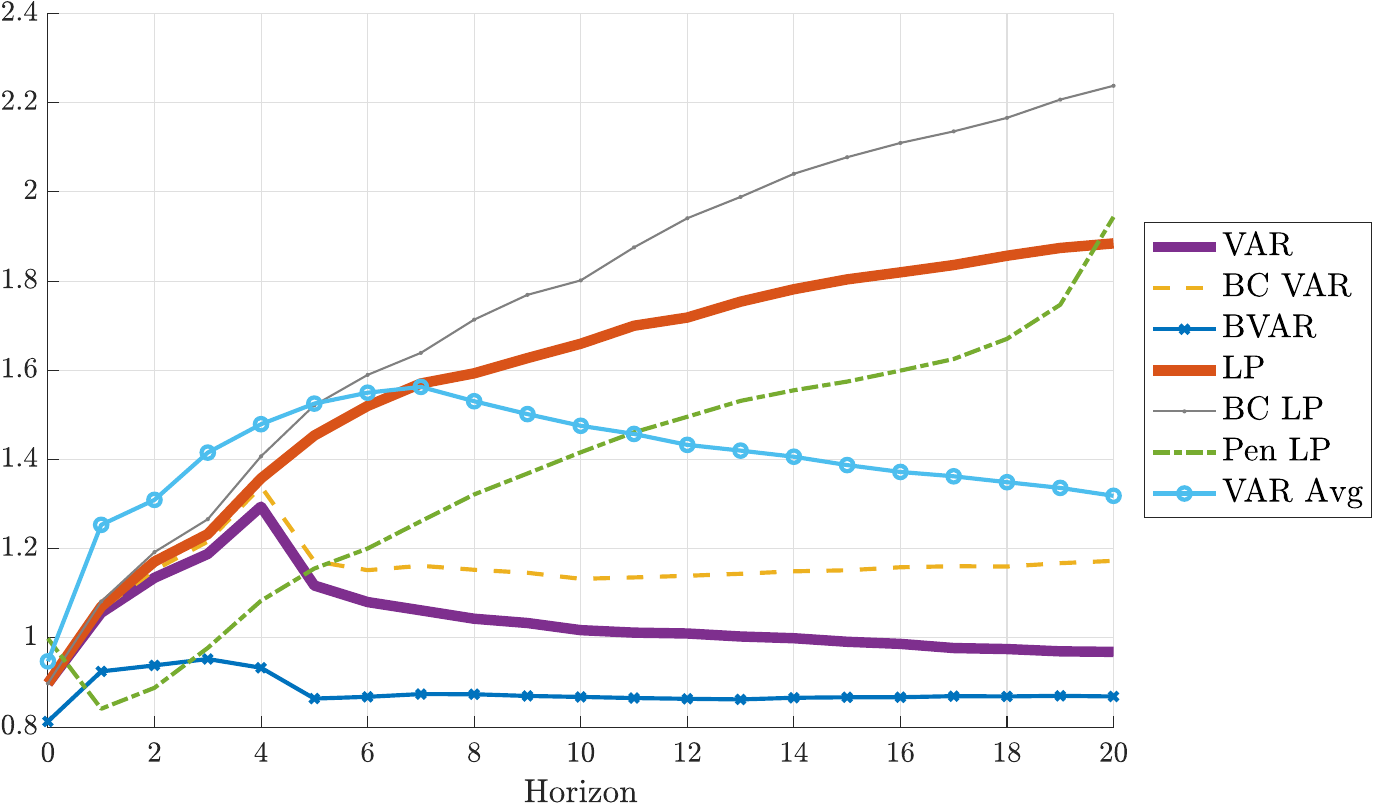}
	\caption{90th percentile (across DGPs) of standard deviation of the different estimation procedures, relative to $\sqrt{\frac{1}{21}\sum_{h=0}^{20}\theta_h^2}$.}
	\label{fig:supp:std_obsshock_p90}
\end{figure}

\begin{figure}[t]
	\centering
	\textsc{Observed shock, 90th percentile of loss: Optimal estimation method} \\
	\includegraphics[width=\linewidth,clip=true,trim=0 0.5em 0 2.4em]{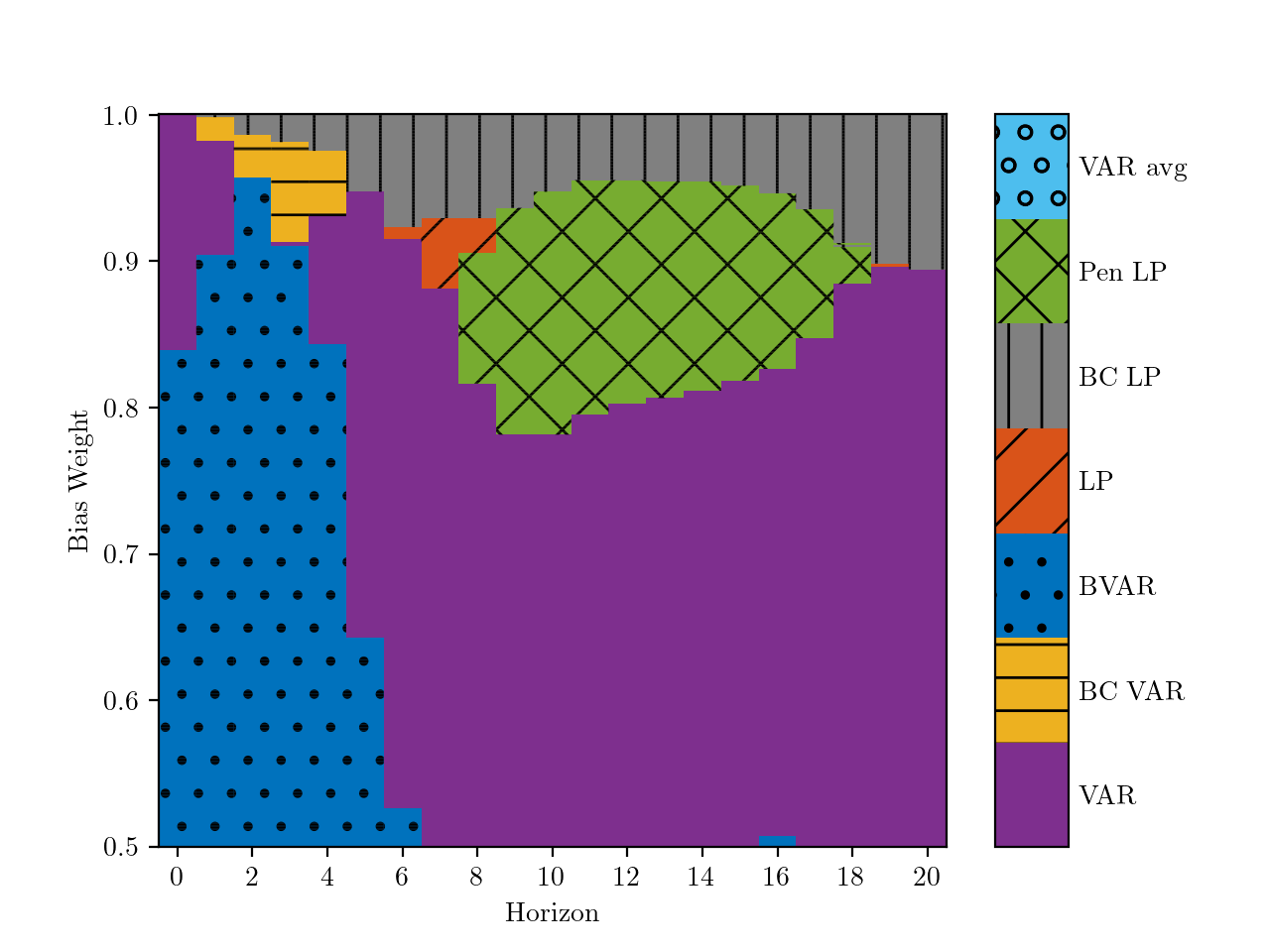}
	\caption{Method that minimizes the 90th percentile (across DGPs) of the loss function \eqref{eq:loss_simple}. Horizontal axis: impulse response horizon. Vertical axis: weight on squared bias in loss function. The loss function is normalized by the scale of the impulse response function, as in \cref{fig:bias_obsshock,fig:std_obsshock}. At $h=0$, VAR and LP are numerically identical; we break the tie in favor of VAR.}
	\label{fig:supp:bestmethod_obsshock_p90}
\end{figure}

\clearpage

\subsection{Fiscal and monetary shocks}
\label{app:results_fiscal_monetary}
Recall that the results from \cref{sec:results} combine fiscal and monetary policy shock estimands. We here break the results down by policy shock estimand.

\cref{fig:supp:bias_obsshock_fiscal,fig:supp:std_obsshock_fiscal} show the bias and standard deviation plots for the 3,000 fiscal shock DGPs, while \cref{fig:supp:bias_obsshock_monetary,fig:supp:std_obsshock_monetary} show the analogous figures for the 3,000 monetary shock DGPs. The results are qualitatively similar across the two kinds of DGPs, including the relative rankings of the various estimation procedures. However, the overall level of the standard deviations is somewhat higher in the fiscal shock case for all estimation methods.

\begin{figure}[tp]
\centering
\textsc{Observed fiscal shock: Bias of estimators} \\[0.5\baselineskip]
\includegraphics[width=0.85\linewidth]{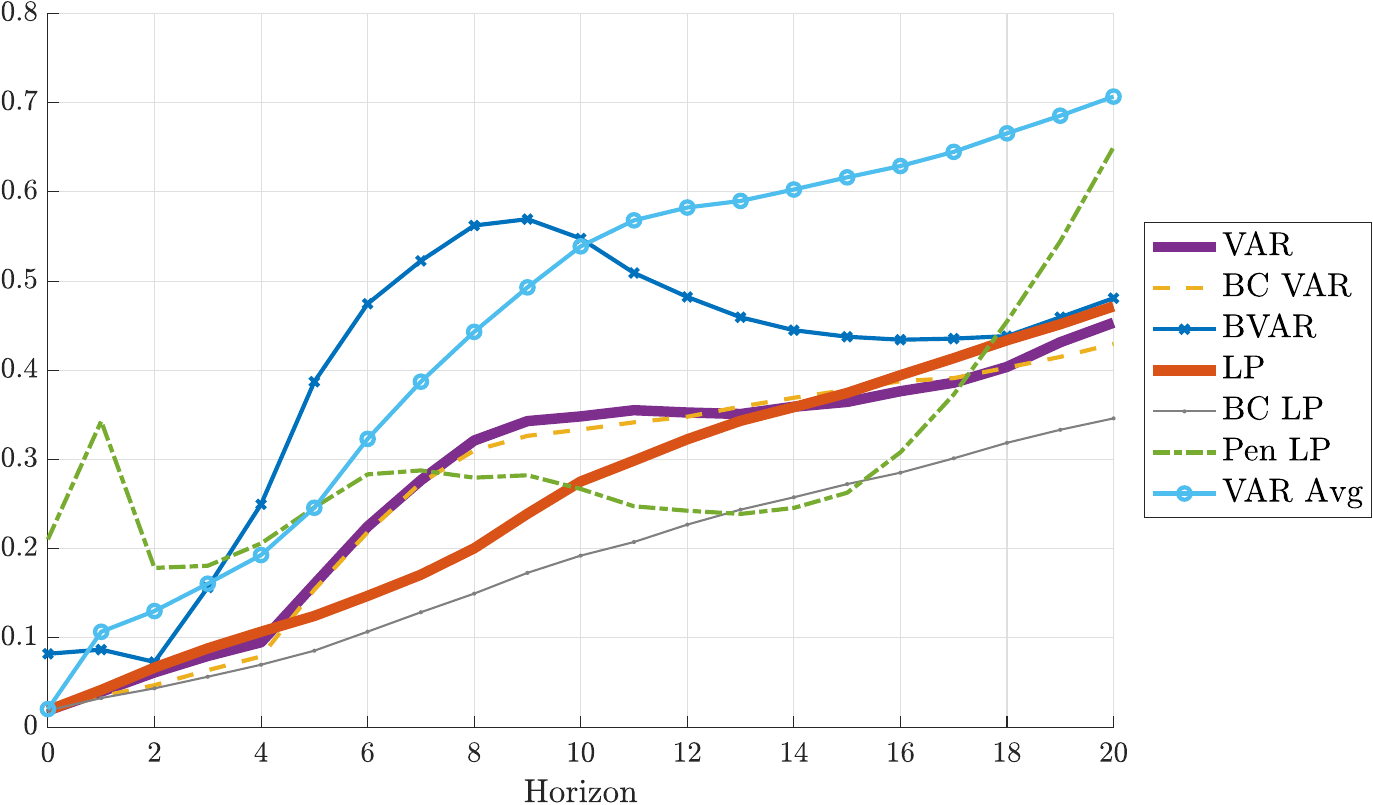}
\caption{Median (across DGPs) of absolute bias of the different estimation procedures, relative to $\sqrt{\frac{1}{21}\sum_{h=0}^{20}\theta_h^2}$.}
\label{fig:supp:bias_obsshock_fiscal}

\vspace*{\floatsep}

\centering
\textsc{Observed fiscal shock: Standard deviation of estimators} \\[0.5\baselineskip]
\includegraphics[width=0.85\linewidth]{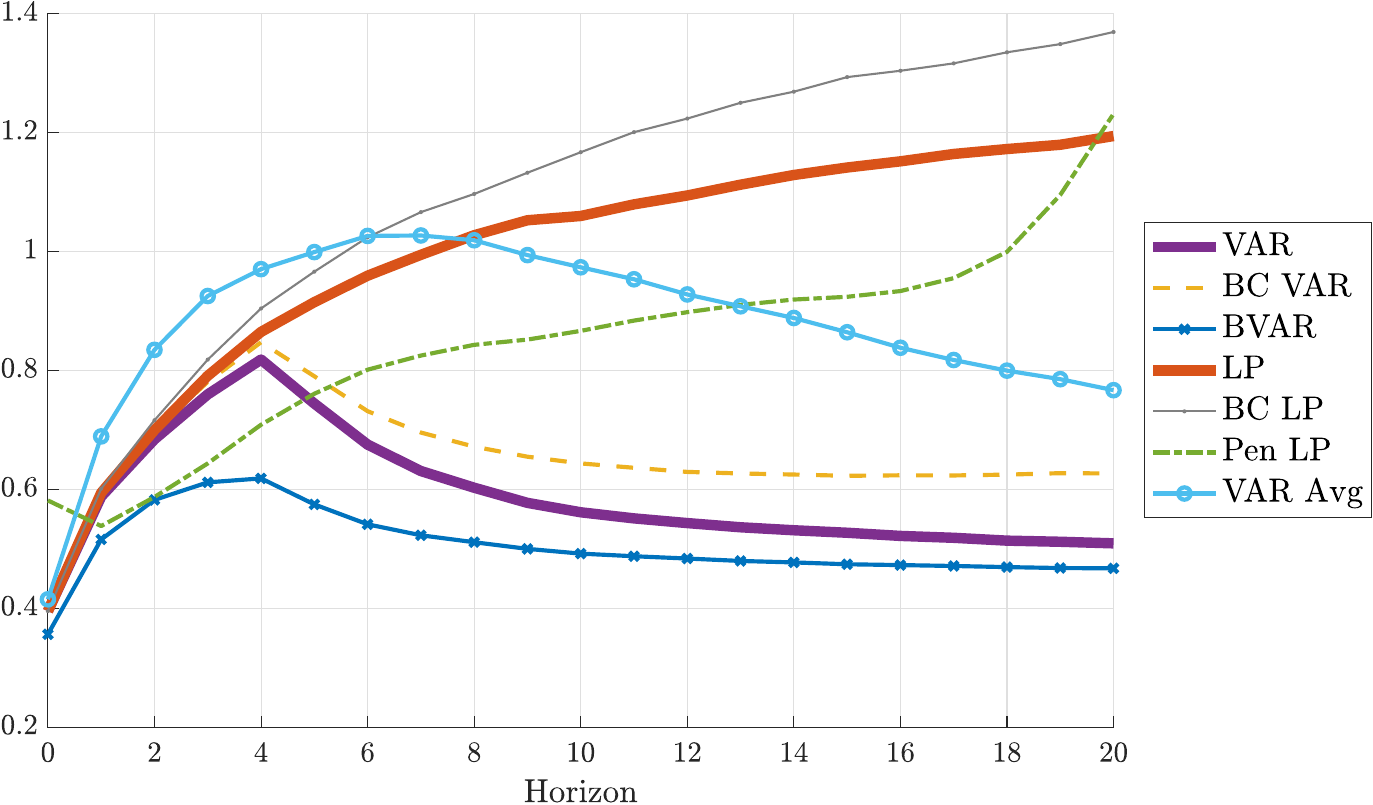}
\caption{Median (across DGPs) of standard deviation of the different estimation procedures, relative to $\sqrt{\frac{1}{21}\sum_{h=0}^{20}\theta_h^2}$.}
\label{fig:supp:std_obsshock_fiscal}
\end{figure}

\begin{figure}[tp]
\centering
\textsc{Observed monetary shock: Bias of estimators} \\[0.5\baselineskip]
\includegraphics[width=0.85\linewidth]{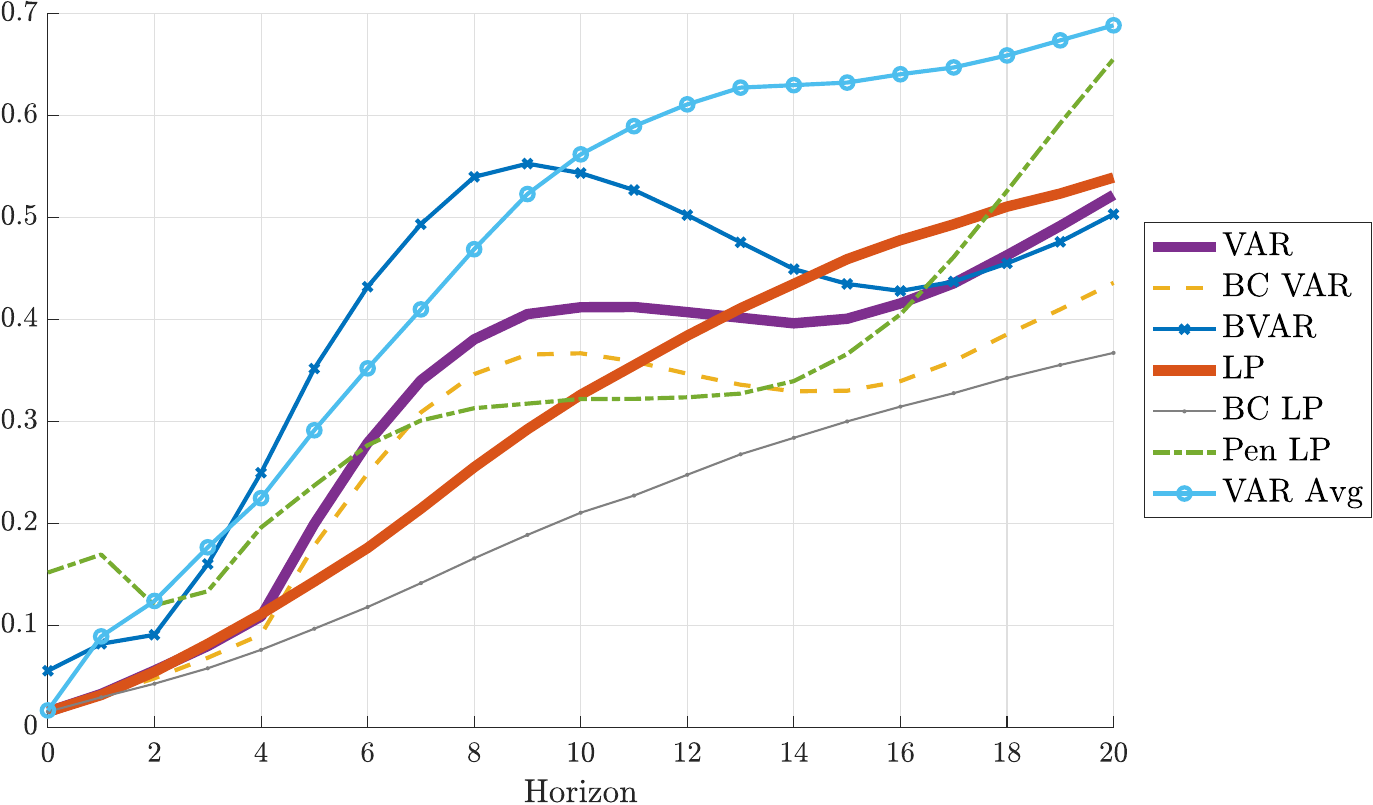}
\caption{Median (across DGPs) of absolute bias of the different estimation procedures, relative to $\sqrt{\frac{1}{21}\sum_{h=0}^{20}\theta_h^2}$.}
\label{fig:supp:bias_obsshock_monetary}

\vspace*{\floatsep}

\centering
\textsc{Observed monetary shock: Standard deviation of estimators} \\[0.5\baselineskip]
\includegraphics[width=0.85\linewidth]{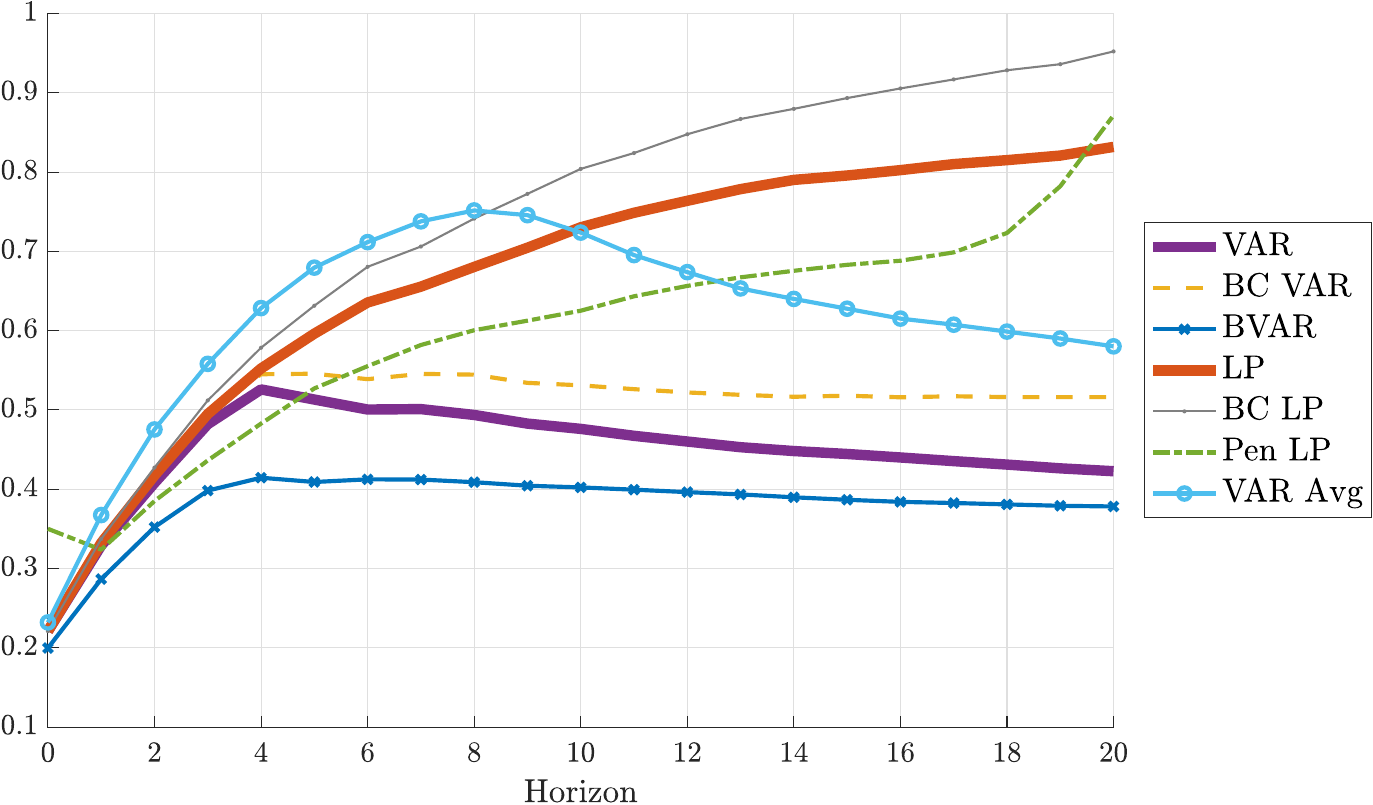}
\caption{Median (across DGPs) of standard deviation of the different estimation procedures, relative to $\sqrt{\frac{1}{21}\sum_{h=0}^{20}\theta_h^2}$.}
\label{fig:supp:std_obsshock_monetary}
\end{figure}

\clearpage

\subsection{Longer estimation lag length}
\label{app:results_lag}
Here we provide results for ``observed shock'' identification when the estimation lag length is increased to $p=8$ (recall that we set $p=4$ in \cref{sec:results}).

\cref{fig:supp:bias_obsshock_lag8,fig:supp:std_obsshock_lag8} show the median (across DGPs) absolute bias and standard deviation of the estimation methods, while \cref{fig:supp:bestmethod_obsshock_lag8} shows the optimal method choice according to the loss function (which has been averaged across DGPs). Consistent with the theoretical results in \citet{Plagborg2020}, least-squares LPs and VARs now perform similarly at all horizons $h \leq p = 8$. BVAR shrinkage now looks even more appealing than in our main analysis for loss functions with bias weight $\omega \leq 0.7$, due to the much lower standard deviation of this procedure. VAR bias correction has more bite  in terms of reducing bias than in the baseline results. Otherwise the qualitative conclusions from \cref{sec:results} are unchanged.

\begin{figure}[tp]
\centering
\textsc{Observed shock, 8 lags: Bias of estimators} \\[0.5\baselineskip]
\includegraphics[width=0.85\linewidth]{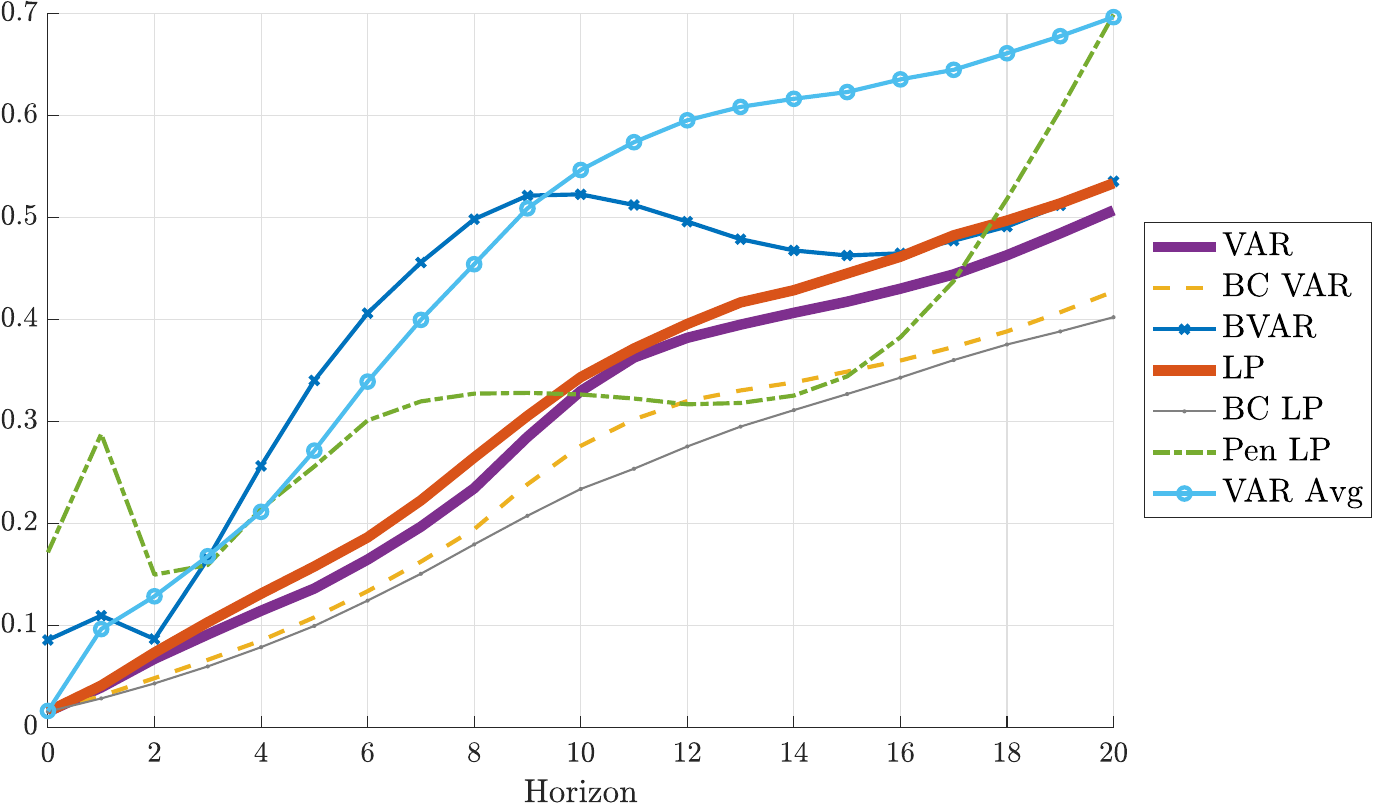}
\caption{Median (across DGPs) of absolute bias of the different estimation procedures, relative to $\sqrt{\frac{1}{21}\sum_{h=0}^{20}\theta_h^2}$.}
\label{fig:supp:bias_obsshock_lag8}

\vspace*{\floatsep}

\centering
\textsc{Observed shock, 8 lags: Standard deviation of estimators} \\[0.5\baselineskip]
\includegraphics[width=0.85\linewidth]{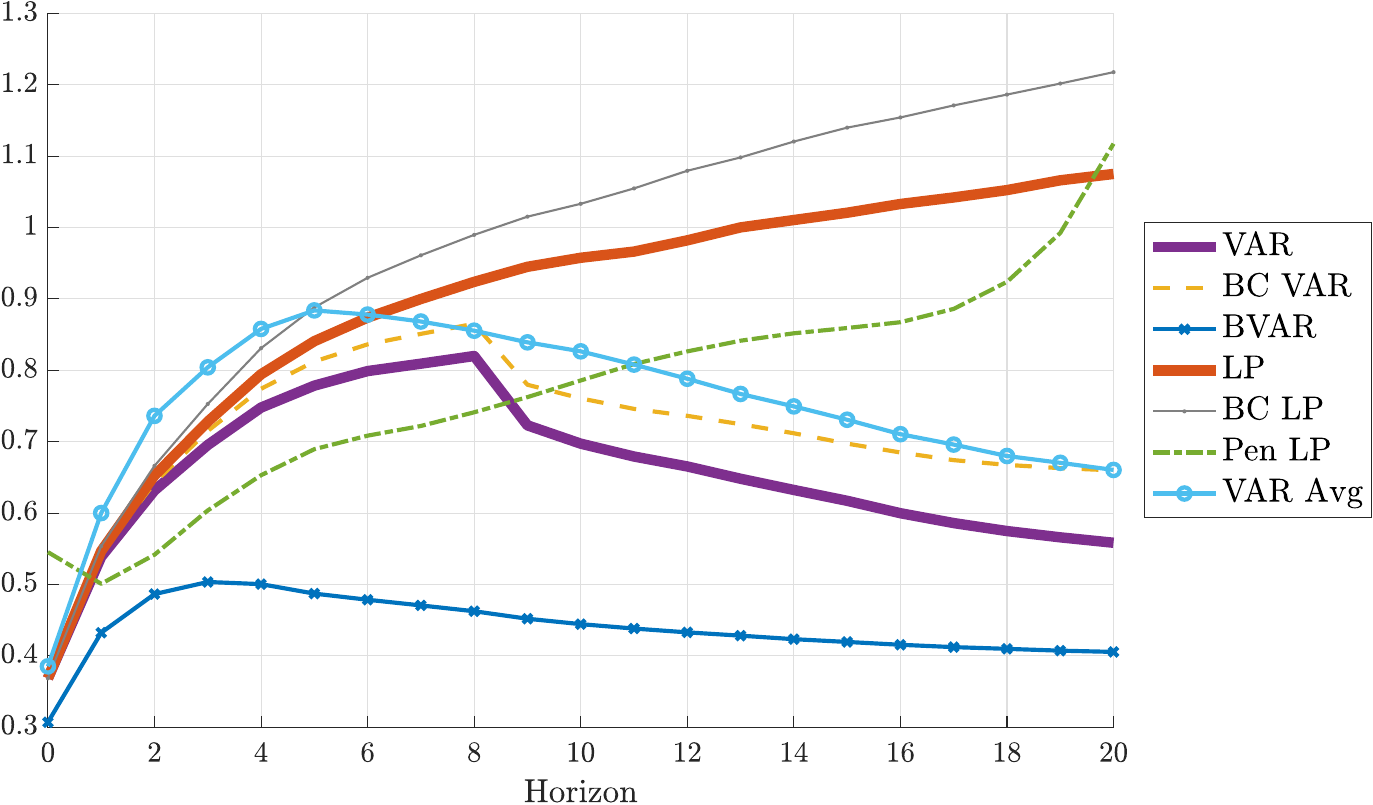}
\caption{Median (across DGPs) of standard deviation of the different estimation procedures, relative to $\sqrt{\frac{1}{21}\sum_{h=0}^{20}\theta_h^2}$.}
\label{fig:supp:std_obsshock_lag8}
\end{figure}

\begin{figure}[t]
\centering
\textsc{Observed shock, 8 lags: Optimal estimation method} \\
\includegraphics[width=\linewidth,clip=true,trim=0 0.5em 0 2.4em]{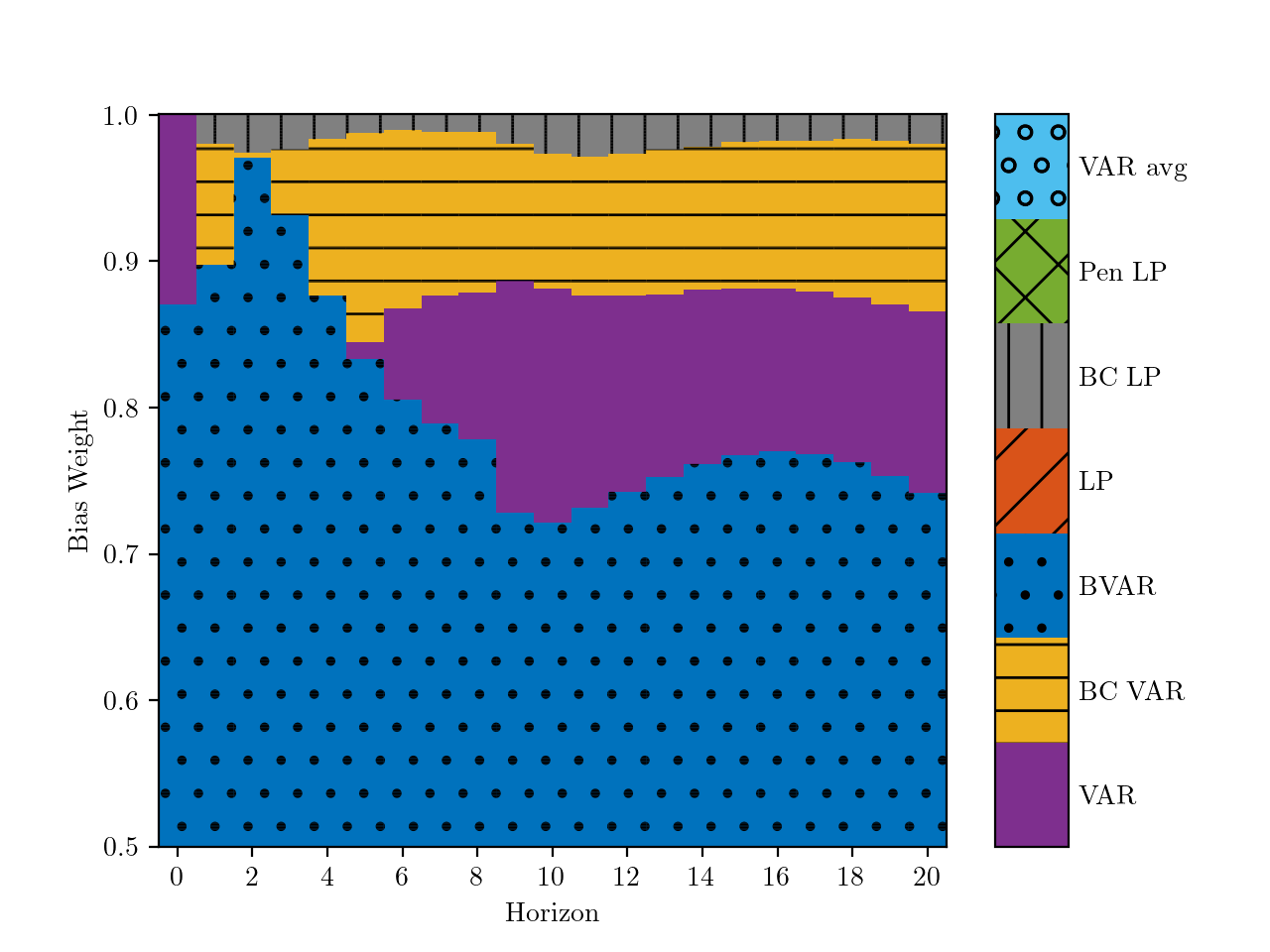}
\caption{Method that minimizes the average (across DGPs) loss function \eqref{eq:loss_simple}. Horizontal axis: impulse response horizon. Vertical axis: weight on squared bias in loss function. The loss function is normalized by the scale of the impulse response function, as in \cref{fig:bias_obsshock,fig:std_obsshock}. At $h=0$, VAR and LP are numerically identical; we break the tie in favor of VAR.}
\label{fig:supp:bestmethod_obsshock_lag8}
\end{figure}

\clearpage

\subsection{Smaller sample size}
\label{app:results_smallsample}

Recall that our baseline experiments consider a sample size of $T = 200$ quarters. We here instead present results for $T=100$. In the interest of space, we focus on results for ``observed shock'' identification.

Results for bias, standard deviation, and optimal method choice are displayed in \cref{fig:supp:bias_obsshock_small,fig:supp:std_obsshock_small,fig:supp:bestmethod_obsshock_small}, respectively. The figures are qualitatively similar to those for our baseline analysis, though there is a quantitative difference: reducing the sample size increases standard deviation by more (in relative terms) than bias. As a result, the estimation method choice plot indicates a more pronounced preference for shrinkage, with a larger area now solid-dotted blue (BVAR). Use of bias-corrected LP (grey with vertical lines) requires an even larger preference for low bias over high precision than in our baseline analysis.

\begin{figure}[tp]
\centering
\textsc{Observed shock, small sample: Bias of estimators} \\[0.5\baselineskip]
\includegraphics[width=0.85\linewidth]{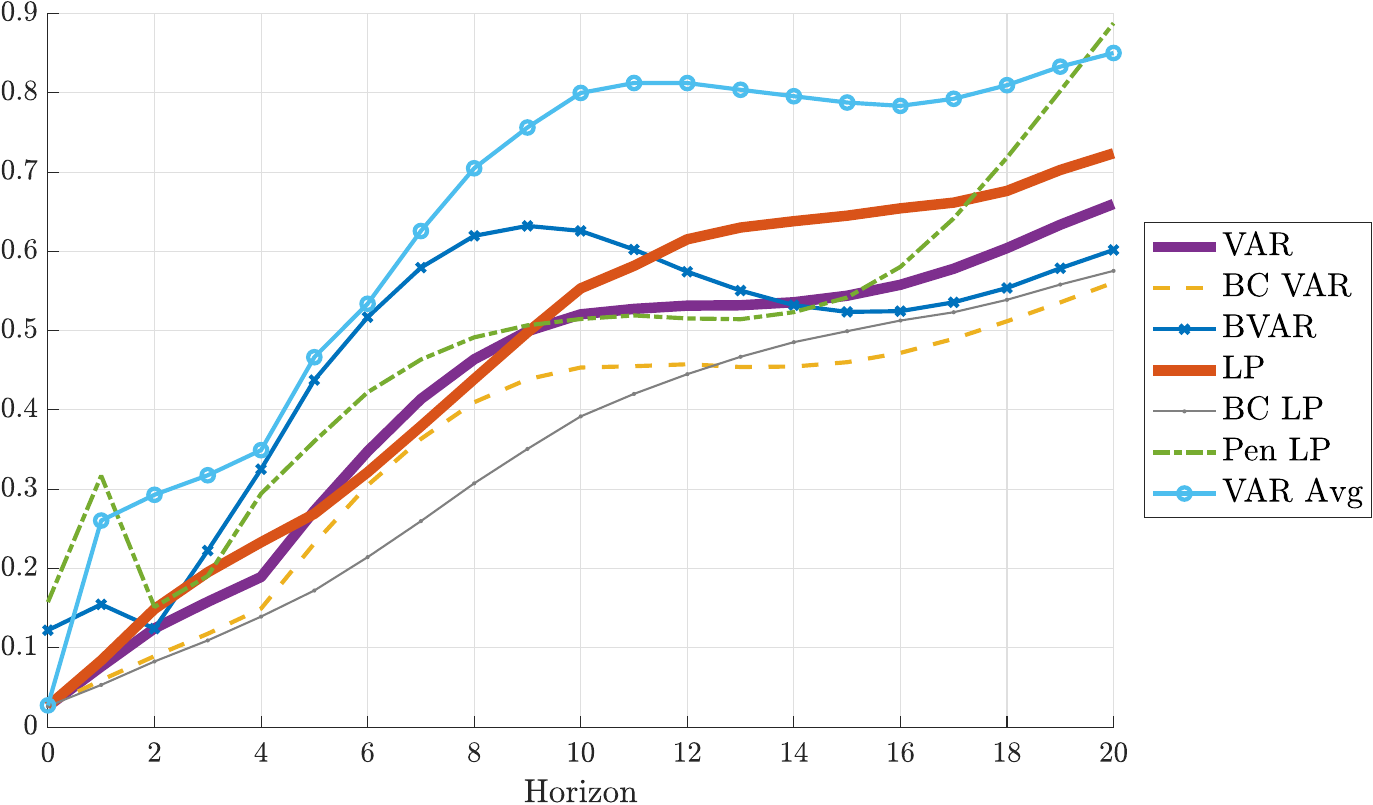}
\caption{Median (across DGPs) of absolute bias of the different estimation procedures, relative to $\sqrt{\frac{1}{21}\sum_{h=0}^{20}\theta_h^2}$.}
\label{fig:supp:bias_obsshock_small}

\vspace*{\floatsep}

\centering
\textsc{Observed shock, small sample: Standard deviation of estimators} \\[0.5\baselineskip]
\includegraphics[width=0.85\linewidth]{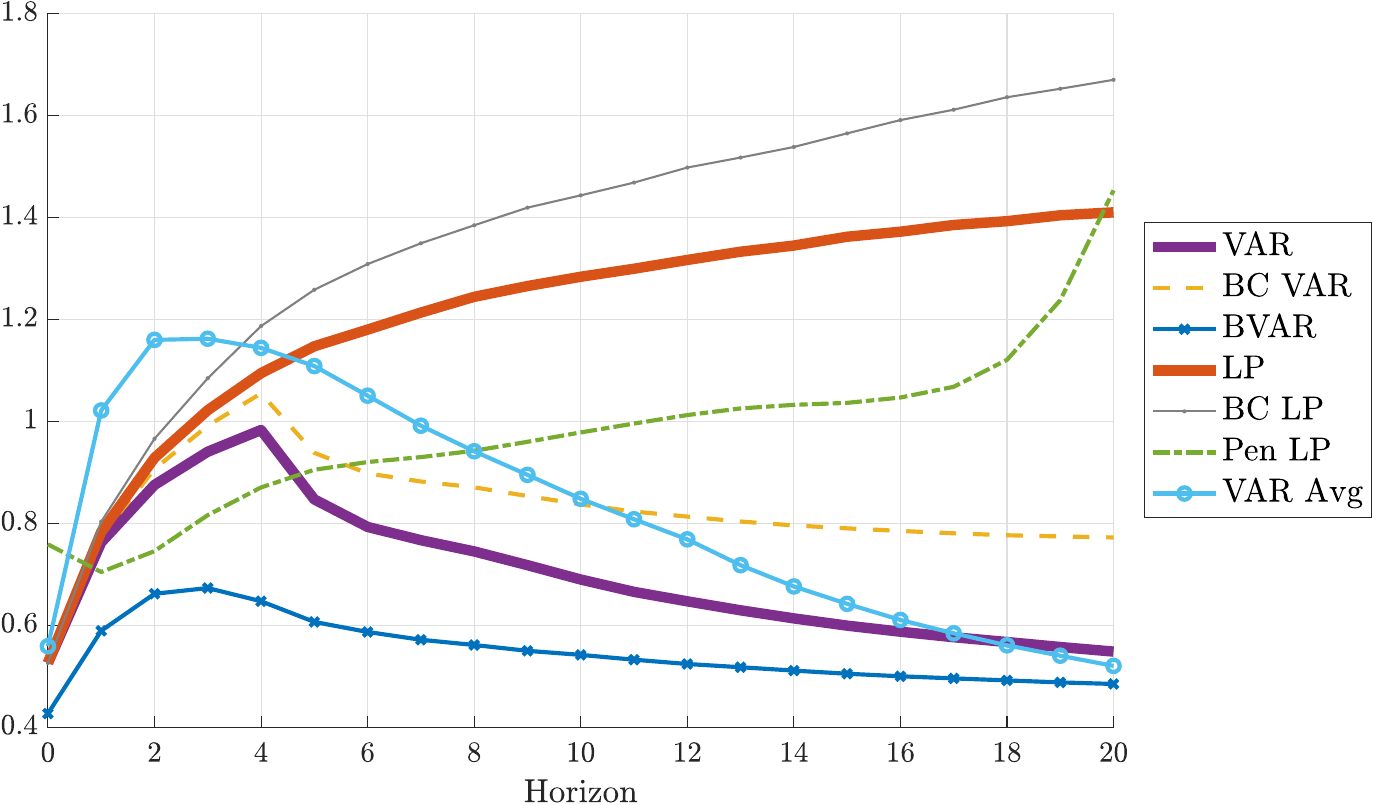}
\caption{Median (across DGPs) of standard deviation of the different estimation procedures, relative to $\sqrt{\frac{1}{21}\sum_{h=0}^{20}\theta_h^2}$.}
\label{fig:supp:std_obsshock_small}
\end{figure}

\begin{figure}[t]
\centering
\textsc{Observed shock, small sample: Optimal estimation method} \\
\includegraphics[width=\linewidth,clip=true,trim=0 0.5em 0 2.4em]{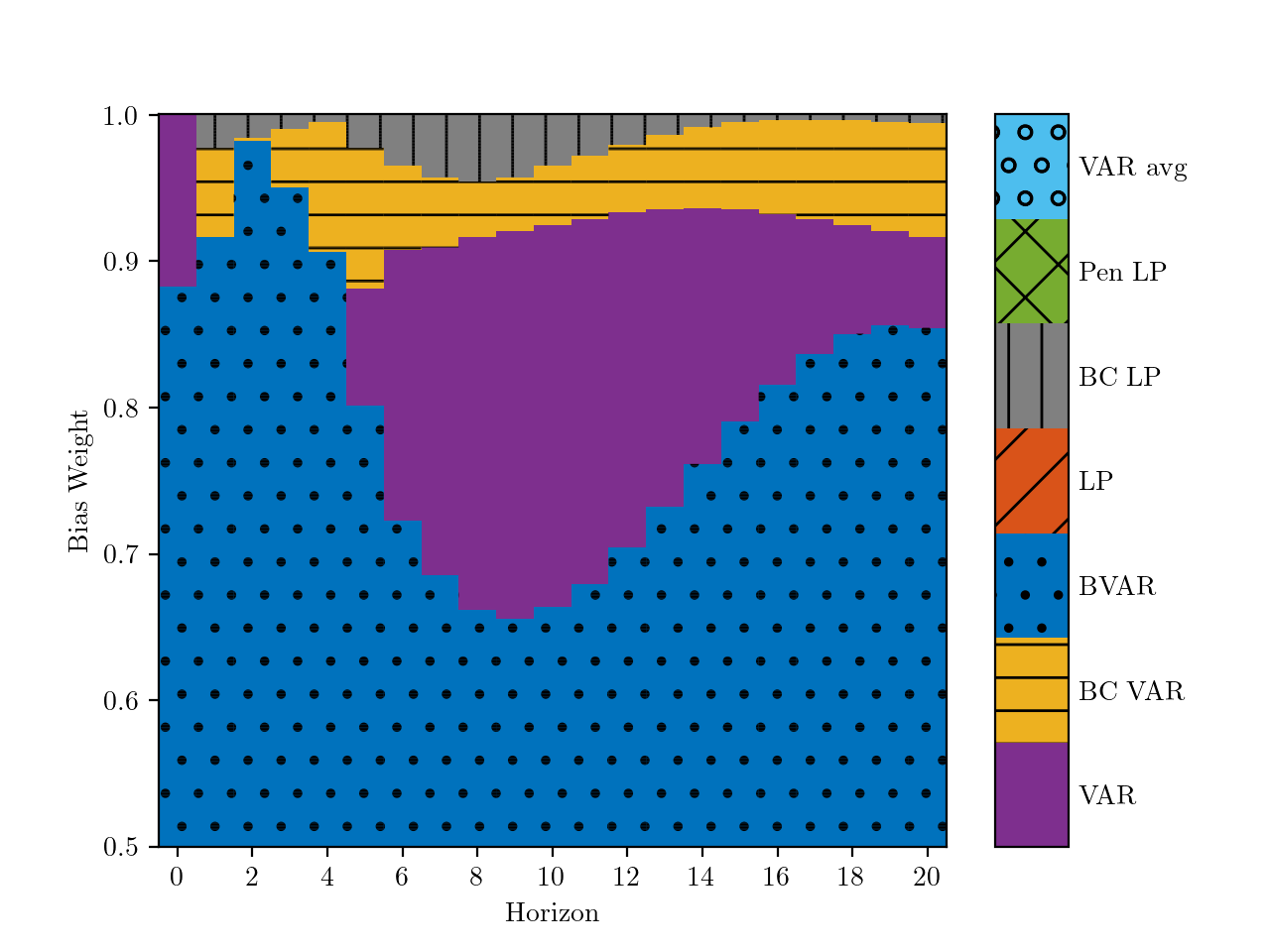}
\caption{Method that minimizes the average (across DGPs) loss function \eqref{eq:loss_simple}. Horizontal axis: impulse response horizon. Vertical axis: weight on squared bias in loss function. The loss function is normalized by the scale of the impulse response function, as in \cref{fig:bias_obsshock,fig:std_obsshock}. At $h=0$, VAR and LP are numerically identical; we break the tie in favor of VAR.}
\label{fig:supp:bestmethod_obsshock_small}
\end{figure}

\clearpage

\subsection{Larger sample size and estimation lag length}
\label{app:results_large}
\alert{While our baseline experiments have a sample size of $T=200$ and an estimation lag length of $p=4$, here we increase these to $T=720$ and $p=12$. This configuration of $(T,p)$ is reminiscent of sample sizes and specifications often seen with monthly data. However, since the parameters of the encompassing DFM remain fixed at our quarterly calibration described in \cref{sec:dgp}, we caution that it would be inappropriate to draw firm conclusions about how the estimators will perform in actual monthly data sets. In this subsection only, we use 2,000 Monte Carlo simulations per DGP, rather than 5,000.}

\alert{\cref{fig:supp:bias_obsshock_large,fig:supp:std_obsshock_large} show the bias and standard deviation of the estimators. As predicted by theory \citep{Plagborg2020}, the least-squares LP and VAR estimators give similar results for horizons $h \leq p=12$. At horizons $h>12$, the trade-off between the various estimators is qualitatively similar to our baseline results in \cref{sec:results}. \cref{fig:supp:bestmethod_obsshock_large} shows that, while the choice of optimal estimation method is overall similar to our baseline, penalized LP is more attractive at horizons $h \in [5,12]$ and for loss functions with $\omega > 0.5$ than in our baseline results, since this estimator offers useful shrinkage relative to least-squares LP or VAR. Nevertheless, BVAR remains the preferred estimator at almost all horizons for the special case of MSE loss ($\omega=1/2$).}

\begin{figure}[tp]
	\centering
	\textsc{Observed shock, large sample, 12 lags: Bias of estimators} \\[0.5\baselineskip]
	\includegraphics[width=0.85\linewidth]{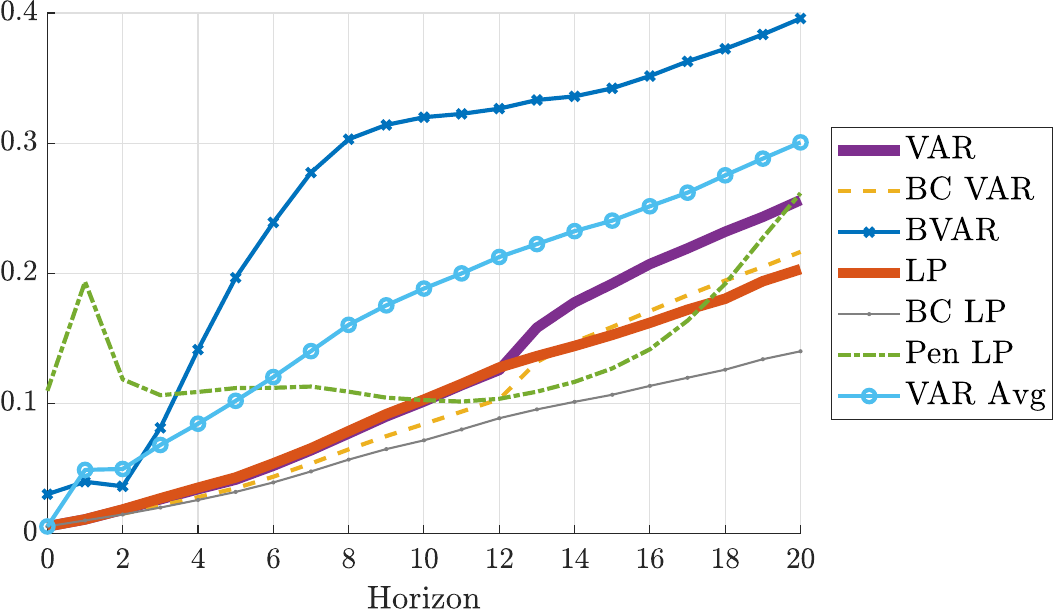}
	\caption{Median (across DGPs) of absolute bias of the different estimation procedures, relative to $\sqrt{\frac{1}{21}\sum_{h=0}^{20}\theta_h^2}$.}
	\label{fig:supp:bias_obsshock_large}
	
	\vspace*{\floatsep}
	
	\centering
	\textsc{Observed shock, large sample, 12 lags: Standard deviation of estimators} \\[0.5\baselineskip]
	\includegraphics[width=0.85\linewidth]{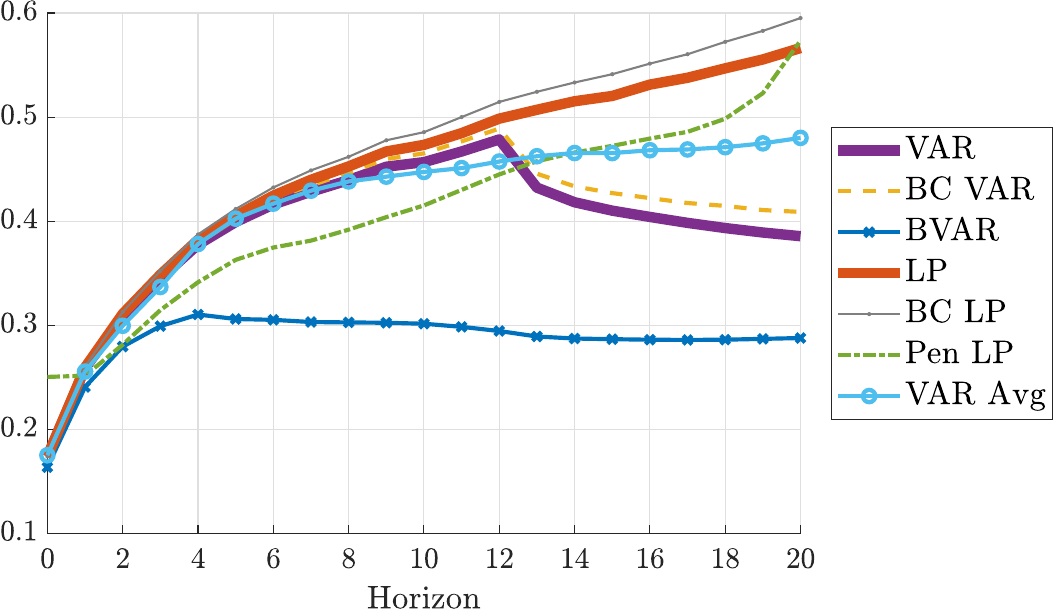}
	\caption{Median (across DGPs) of standard deviation of the different estimation procedures, relative to $\sqrt{\frac{1}{21}\sum_{h=0}^{20}\theta_h^2}$.}
	\label{fig:supp:std_obsshock_large}
\end{figure}

\begin{figure}[t]
	\centering
	\textsc{Observed shock, large sample, 12 lags: Optimal estimation method} \\
	\includegraphics[width=\linewidth,clip=true,trim=0 0.5em 0 2.4em]{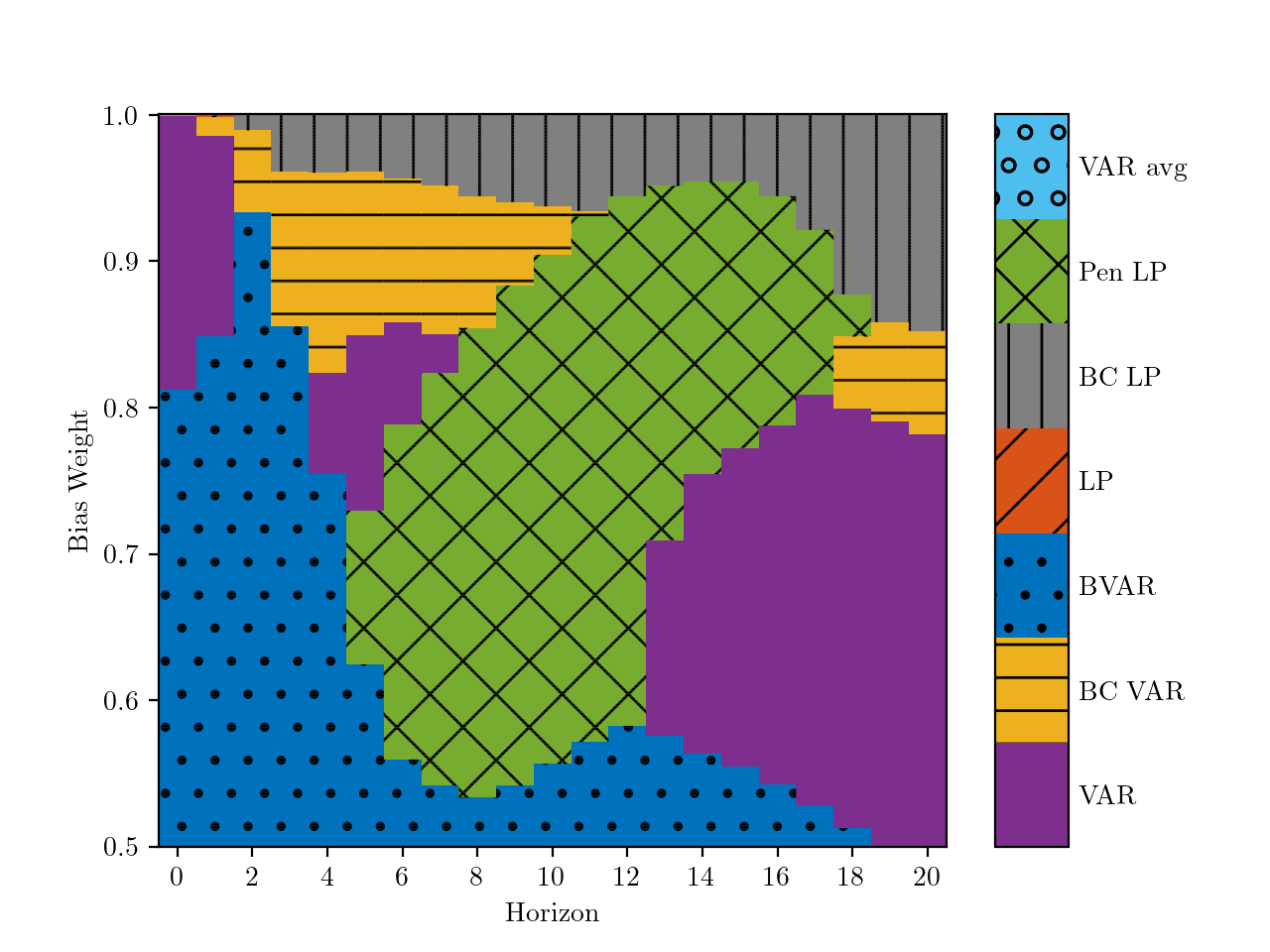}
	\caption{Method that minimizes the average (across DGPs) loss function \eqref{eq:loss_simple}. Horizontal axis: impulse response horizon. Vertical axis: weight on squared bias in loss function. The loss function is normalized by the scale of the impulse response function, as in \cref{fig:bias_obsshock,fig:std_obsshock}. At $h=0$, VAR and LP are numerically identical; we break the tie in favor of VAR.}
	\label{fig:supp:bestmethod_obsshock_large}
\end{figure}

\clearpage

\subsection{More observables}
\label{app:results_more}
\alert{Our baseline simulations assume that the econometrician observes $n_{\bar{w}}=5$ macro variables (perhaps in addition to an observed shock or IV). In this subsection we consider simulations that increase the number of macro observables from 5 to 7.}

\alert{We first consider the case of observed shock identification. \cref{fig:supp:bias_obsshock_more,fig:supp:std_obsshock_more} show that the bias and standard deviation properties of the estimators change little from our baseline results when we increase the number of observables. All qualitative conclusions emphasized in \cref{sec:results_1,sec:results_2,sec:results_3} continue to go through. Similarly, \cref{fig:supp:bestmethod_obsshock_more} shows that the choice of optimal method as a function of horizon and loss function is virtually unchanged from the baseline, with the minor exception that the region where penalized LP is preferred has shrunk somewhat.}

\begin{figure}[tp]
	\centering
	\textsc{Observed shock, 7 observables: Bias of estimators} \\[0.5\baselineskip]
	\includegraphics[width=0.85\linewidth]{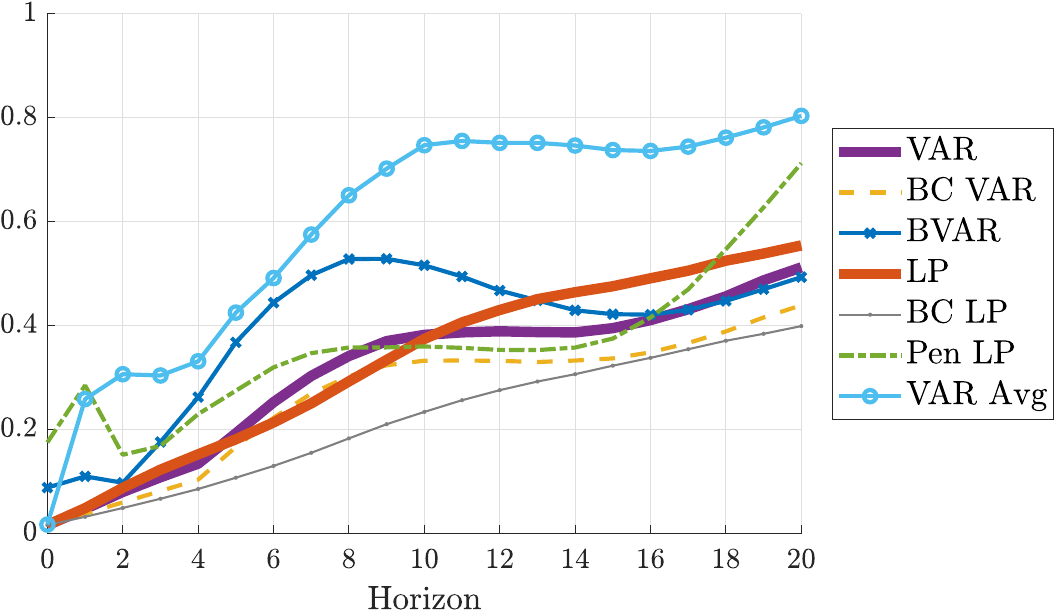}
	\caption{Median (across DGPs) of absolute bias of the different estimation procedures, relative to $\sqrt{\frac{1}{21}\sum_{h=0}^{20}\theta_h^2}$.}
	\label{fig:supp:bias_obsshock_more}
	
	\vspace*{\floatsep}
	
	\centering
	\textsc{Observed shock, 7 observables: Standard deviation of estimators} \\[0.5\baselineskip]
	\includegraphics[width=0.85\linewidth]{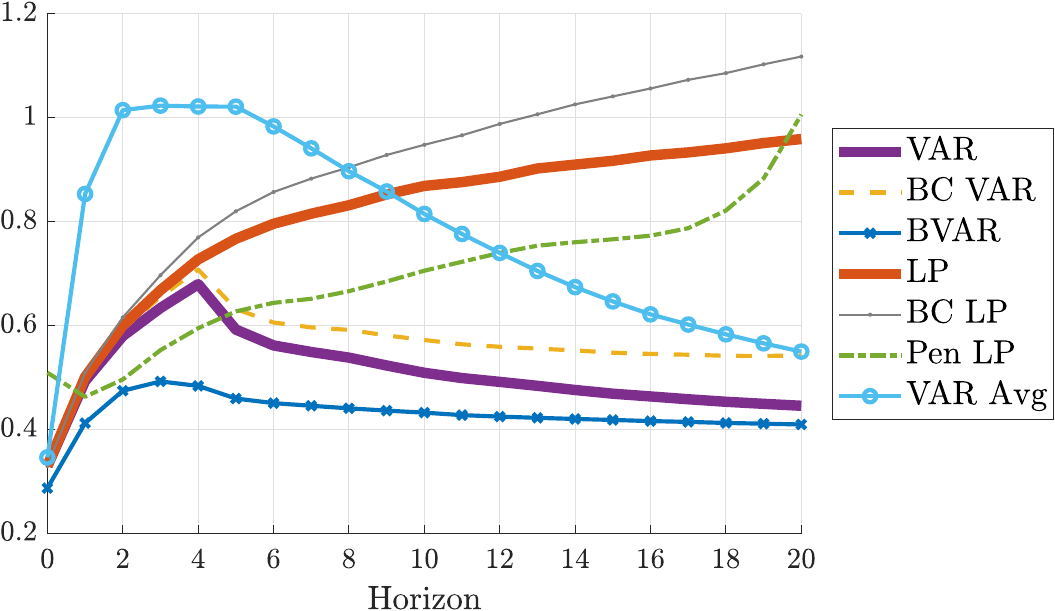}
	\caption{Median (across DGPs) of standard deviation of the different estimation procedures, relative to $\sqrt{\frac{1}{21}\sum_{h=0}^{20}\theta_h^2}$.}
	\label{fig:supp:std_obsshock_more}
\end{figure}

\begin{figure}[t]
	\centering
	\textsc{Observed shock, 7 observables: Optimal estimation method} \\
	\includegraphics[width=\linewidth,clip=true,trim=0 0.5em 0 2.4em]{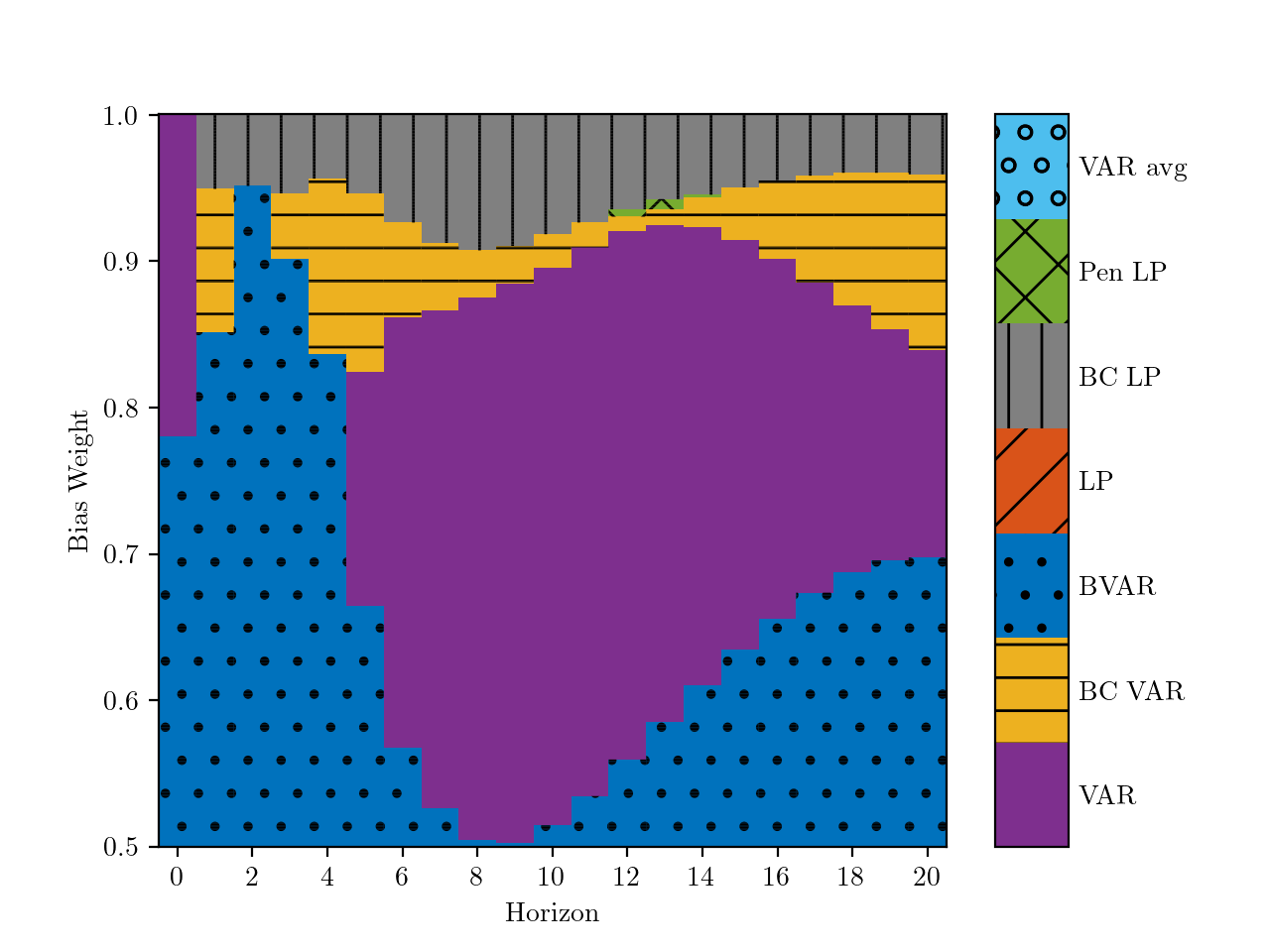}
	\caption{Method that minimizes the average (across DGPs) loss function \eqref{eq:loss_simple}. Horizontal axis: impulse response horizon. Vertical axis: weight on squared bias in loss function. The loss function is normalized by the scale of the impulse response function, as in \cref{fig:bias_obsshock,fig:std_obsshock}. At $h=0$, VAR and LP are numerically identical; we break the tie in favor of VAR.}
	\label{fig:supp:bestmethod_obsshock_more}
\end{figure}

\alert{In the case of IV identification, the median bias and interquartile range for the various estimators displayed in \cref{fig:supp:bias_iv_more,fig:supp:std_iv_more} are largely similar to our baseline results in \cref{sec:results_4}. However, the median bias of SVAR-IV relative to internal-instruments procedures is slightly smaller when the number of observables is larger, due to the mechanical increase in the degree of invertibility (the median degree of invertibility across the DGPs equals 0.45, versus 0.39 in our baseline, cf. \cref{tab:dgp_summ}).}

\begin{figure}[tp]
	\centering
	\textsc{IV, 7 observables: Median bias of estimators} \\[0.5\baselineskip]
	\includegraphics[width=0.85\linewidth]{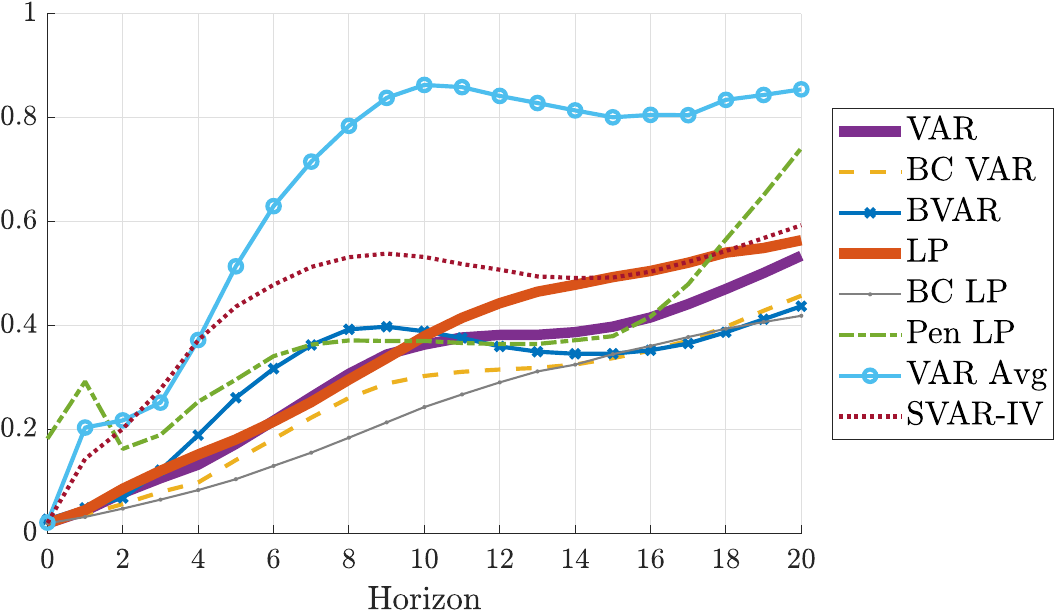}
	\caption{Median (across DGPs) of absolute median bias of the different estimation procedures, relative to $\sqrt{\frac{1}{21}\sum_{h=0}^{20}\theta_h^2}$.}
	\label{fig:supp:bias_iv_more}
	
	\vspace*{\floatsep}
	
	\centering
	\textsc{IV, 7 observables: Interquartile range of estimators} \\[0.5\baselineskip]
	\includegraphics[width=0.85\linewidth]{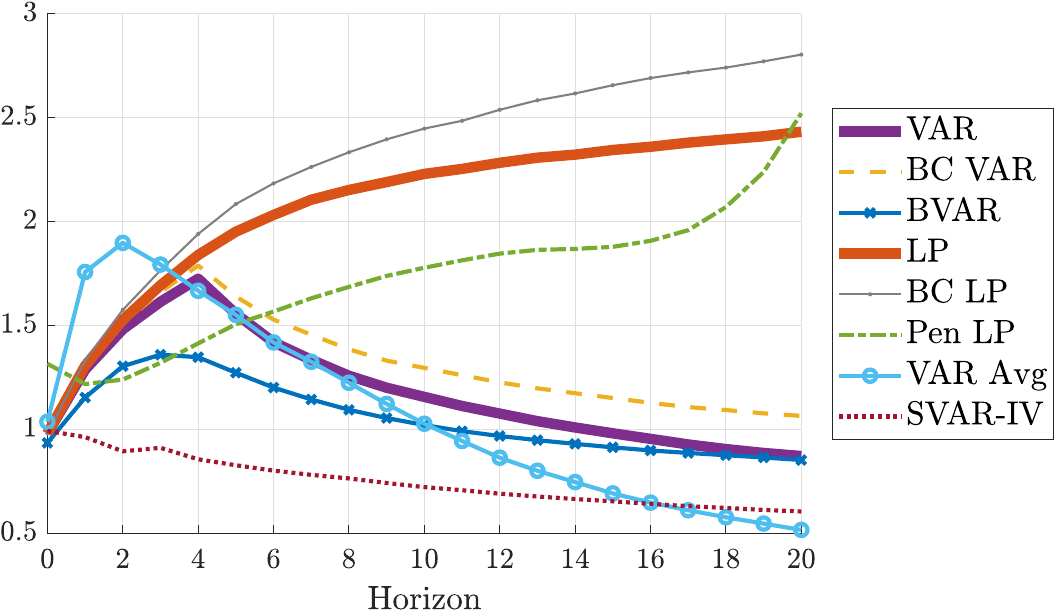}
	\caption{Median (across DGPs) of interquartile range of the different estimation procedures, relative to $\sqrt{\frac{1}{21}\sum_{h=0}^{20}\theta_h^2}$.}
	\label{fig:supp:std_iv_more}
\end{figure}

\clearpage

\subsection{Splitting by variable categories}
\label{app:results_cat}
Here we show that the categories of variables included in the DGP are not highly predictive of the bias or standard deviation of any estimator. We group the 207 variables in the \citet{Stock2016} data set into 11 categories, proceeding exactly as in the Data Appendix to that paper, except that we lump together their final three categories in a single ``Asset Price \& Sentiment'' category. Let $\Delta_{j,d}$ denote the total number of variables in category $j$ that are included in DGP $d$. We are interested in relating the absolute bias and standard deviation of the various estimation methods with $\Delta_{j,d}$, across DGPs and horizons.

Using the baseline simulation results from \cref{sec:results_1,sec:results_2,sec:results_3}, we run the following OLS regression separately for each estimation method $m$, pooling observations across all DGPs and all horizons:
\[\log \text{bias}_{m,d,h} = \sum_{j=1}^{10} \hat{\gamma}_j \Delta_{j,d}  + \sum_{i=0}^{20}\hat{\iota}_i \mathbbm{1}(i=h) + \hat{e}_{m,d,h},\quad d=1,\dots,6000,\; h=0,1,\dots,20.\]
Here $\text{bias}_{m,d,h}$ is the absolute bias of estimator $m$ in DGP $d$ at horizon $h$, $\hat{\gamma}_j$ are coefficients on the category counts, $\hat{\iota}_i$ are coefficients on indicator variables for each horizon, and $\hat{e}_{m,d,h}$ is the OLS residual. Notice that we omit category 11 from the regression for reasons of multicollinearity (there are 5 variables in each DGP). We also run the above regression with the log standard deviation on the left-hand side in place of the log absolute bias.

\begin{table}[tp]
\centering
\textsc{Bias regression} \\[0.5\baselineskip]
\renewcommand{\arraystretch}{1.2}
\begin{tabular}{l|ccccccc}
	&	VAR	&	VAR BC	&	BVAR	&	LP	&	LP BC	&	Pen LP	&	VAR Avg	\\
\hline
NIPA	& -0.12 & -0.10 & -0.03 & -0.05 & -0.06 & 0.01 & -0.04 \\
Industrial Production	&	-0.06 & -0.06 & 0.66 & -0.01 & -0.01 & 0.06 & 0.02 \\
Employment \& Unemployment	& 0.02 & 0.07 & 0.08 & -0.02 & 0.05 & 0.04 & -0.03 \\
Orders, Inventories \& Sales &	-0.05 & -0.09 & 	0.01 & -0.03 & 0.00 & 0.02 & 0.00 \\
Housing Starts \& Permits	& -0.01 & 0.01 & 0.04 & 0.02 & 0.08 & 0.08 & 0.06 \\
Prices & 0.05 & 0.15 & 0.12 & -0.18 & -0.03 & -0.17 & -0.05 \\
Productivity \& Earnings & 0.02 & -0.01 & 0.03 & -0.02 & -0.05 & -0.02 & -0.01 \\
Interest Rates & -0.01 & -0.01 & 0.03 & -0.03 & -0.03 & -0.04 & -0.03 \\
Money \& Credit & 0.04 & 0.07 & 0.08 & -0.03 & 0.01 & 0.00 & -0.03 \\
International & 0.07 & 0.12 & 0.15 & -0.11 & -0.05 & -0.09 & -0.05 \\
\end{tabular} \\
\vspace{\baselineskip}
\textsc{Standard deviation regression} \\[0.5\baselineskip]
\renewcommand{\arraystretch}{1.2}
\begin{tabular}{l|ccccccc}
	&	VAR	&	VAR BC	&	BVAR	&	LP	&	LP BC	&	Pen LP	&	VAR Avg	\\
\hline
NIPA	& 0.08 & 0.09 & 0.08 & 0.11 & 0.11 & 0.11 & 0.10 \\
Industrial Production	& -0.02 & -0.02 & -0.02 & -0.04 & -0.04 & -0.03 & -0.03 \\
Employment \& Unemployment	& 0.08 & 0.07 & 0.08 & 0.07 & 0.08 & 0.08 & 0.07 \\
Orders, Inventories \& Sales &	-0.04 & -0.04 & -0.05 & -0.04 & -0.04 & -0.03 & -0.04 \\
Housing Starts \& Permits	& 0.05 & 0.05 & 0.06 & 0.06 & 0.06 & 0.07 & 0.06 \\
Prices & 0.04 & 0.04 & 0.03 & 0.11 & 0.09 & 0.06 & 0.08 \\
Productivity \& Earnings & -0.03 & -0.03 & -0.03 & -0.04 & -0.04 & -0.04 & -0.04 \\
Interest Rates & -0.13 & -0.13 & -0.13 & -0.17 & -0.17 & -0.17 & -0.17 \\
Money \& Credit & 0.03 & 0.03 & 0.03 & 0.02 & 0.02 & 0.03 & 0.02 \\
International & 0.03 & 0.03 & 0.02 & 0.07 & 0.06 & 0.05 & 0.06 \\
\end{tabular}
\caption{Coefficients from OLS regressions of log bias (top table) or log standard deviation (bottom table) on variable category counts (along rows), controlling for horizon fixed effects. Regressions are run separately by estimation method (along columns), and observations are pooled across DGPs and horizons. Bias and standard deviation normalized as in \cref{fig:bias_obsshock,fig:std_obsshock}. Observed shock identification.} \label{tab:results_cat}
\end{table}

\cref{tab:results_cat} shows the coefficients $\hat{\gamma}_j$ on the category counts. A coefficient of 0.1, say, means that adding one variable of that category to the DGP (while removing a variable from the omitted ``Asset Price \& Sentiment'' category) leads to a 10\% higher absolute bias (resp., standard deviation). No coefficients in the table exceed 0.2 in absolute value, indicating that none of the categories are particularly predictive of the bias or standard deviation of the resulting impulse response estimates. The only exception is that adding variables from the ``Interest Rates'' category does seem to moderately lower the standard deviation of most of our estimation procedures, which is consistent with the results reported in \cref{app:results_fiscal_monetary}. Though the regression results reported in \cref{tab:results_cat} pool across all horizons $h \in [0,20]$, we obtain similar coefficients if we instead restrict the regressions only to the intermediate horizons $h \in [5,12]$.

We conclude that our baseline results do not conceal significant across-category heterogeneity. In particular, and differently from \citet{Marcellino2006}, we do not find major differences when comparing price and real activity series.

\clearpage

\section{Proofs}
\label{app:proofs}

\subsection{Auxiliary lemmas}
\label{app:proofs_aux}
Before proving \cref{prop:biasvar_simple}, we state and prove some auxiliary lemmas. All lemmas below impose the assumptions of \cref{prop:biasvar_simple}.

Define the process $\tilde{y}_t \equiv \sum_{s=1}^t( \varepsilon_{1,s}+\tau \varepsilon_{1,s-1}+\varepsilon_{2,s})$ for all $t \geq 1$, with $\tilde{y}_0=0$, and let $\tilde{w}_t \equiv (\varepsilon_{1,t},\tilde{y}_t)'$. Define also the $2 \times 2$ matrix $D_T \equiv \diag(T^{1/2},T)$.

\begin{lem} \label{lem:negligible}
	For all $j=1,2$ and $\ell \geq 0$,
	\[\frac{1}{T}\sum_{t=1}^T (y_t-\tilde{y}_t)^2 = O_p(1),\quad \frac{1}{T^{1/2}}\sum_{t=1}^T (y_t-\tilde{y}_t)\varepsilon_{j,t+\ell} = O_p(1).\]
\end{lem}
\begin{proof}
	From the DGP \eqref{eq:DGP_simple} we have
	\[y_t-\tilde{y}_t = \frac{\alpha}{\sqrt{T}}\sum_{s=1}^t\varepsilon_{1,s-2}.\]
	The first statement of the lemma follows from Markov's inequality and a simple moment calculation. Then second statement of the lemma follows from Chebyshev's inequality, since the process $\varepsilon_{j,t+\ell}\sum_{s=1}^t\varepsilon_{1,s-2}$ is serially uncorrelated and has a variance of order $O(T)$.
\end{proof}
\begin{lem} \label{lem:stoch_order}
	For any $\ell \geq 0$ and $j=1,2$,
	\[\sum_{t=1}^T \varepsilon_{j,t+\ell}\tilde{w}_{t-1}'D_T^{-1} = O_p(1).\]
\end{lem}
\begin{proof}
	This follows from Chebyshev's inequality and standard variance calculations, using that $\varepsilon_{j,t+\ell}\tilde{w}_{t-1}'$ is a serially uncorrelated process.
\end{proof}

\begin{lem} \label{lem:stoch_order2}
	For any $\ell \geq 1$ and $j=1,2$,
	\[\frac{1}{T}\sum_{t=1}^T \varepsilon_{j,t-\ell}\tilde{y}_{t-1} = O_p(1).\]
\end{lem}
\begin{proof}
	Write $\frac{1}{T}\sum_{t=1}^T \varepsilon_{j,t-\ell}\tilde{y}_{t-1} = \sum_{b=0}^{\ell-1}\frac{1}{T} \sum_{t=1}^T \varepsilon_{j,t-\ell}\Delta \tilde{y}_{t-b-1} + \frac{1}{T}\sum_{t=1}^T \varepsilon_{j,t-\ell}\tilde{y}_{t-\ell-1}$, where we define $\tilde{y}_t=0$ for $t\leq 0$. The first term is clearly $O_p(1)$, being composed of sample averages of stationary variables. The second term is $O_p(1)$ by \cref{lem:stoch_order}.
\end{proof}

\begin{lem} \label{lem:second_mom}
	$D_T^{-1}\sum_{t=1}^T w_tw_t' D_T^{-1} \stackrel{d}{\to} \diag\left(1,\lbrace (1+\tau)^2 + \sigma_2^2\rbrace\int_0^1 W(r)^2\,dr \right)$, where $W(\cdot)$ is a standard Brownian motion.
\end{lem}
\begin{proof}
	By \cref{lem:negligible}, it suffices to show that $D_T^{-1}\sum_{t=1}^T \tilde{w}_t\tilde{w}_t' D_T^{-1}$ converges to the stated limiting distribution. The $(1,1)$ element converges in probability to 1 by the law of large numbers. The $(1,2)$ element is $O_p(T^{-1/2})$ by \cref{lem:stoch_order2}. Finally, the convergence in distribution of the $(2,2)$ element follows from \citet[Thm. 3.4]{Phillips1992}, the continuous mapping theorem, and the long-run variance of the process $(\varepsilon_{1,t}+\tau\varepsilon_{1,t-1}+\varepsilon_{2,t})$ being equal to $(1+\tau)^2+\sigma_2^2$.
\end{proof}

\noindent Recall the definitions of $\theta_{h,T}$, $\hat{\beta}_h$, $\hat{\delta}_h$, $\hat{A}$, $\hat{\kappa}$, and the unit vector $e_j$ from \cref{sec:biasvar}.

\begin{lem} \label{lem:lp}
	For any $0 \leq \ell \leq h$, define $b_{\ell,h} \equiv \mathbbm{1}(\ell > 0) + \tau \mathbbm{1}(0<\ell<h)$. Then
	\begin{equation} \label{eq:lem_lp}
		\hat{\beta}_h-\theta_{h,T} = \frac{1}{T}\sum_{t=2}^{T-h} \sum_{\ell=0}^h ( b_{\ell,h} \varepsilon_{1,t+\ell} + \varepsilon_{2,t+\ell} ) \varepsilon_{1,t} + o_p(T^{-1/2}).
	\end{equation}
\end{lem}
\begin{proof}
	Let $\hat{\varepsilon}_{1,t} \equiv \varepsilon_{1,t} - \hat{b}'w_{t-1}$ be the residual from an auxiliary regression of $\varepsilon_{1,t}$ on $w_{t-1}$. Using \cref{lem:negligible,lem:stoch_order,lem:second_mom}, it is straightforward to show that
	\[D_T\hat{b} = \left\lbrace D_T^{-1}\sum_{t=2}^{T-h}w_{t-1}w_{t-1}'D_T^{-1}  \right\rbrace^{-1} \left\lbrace D_T^{-1}\sum_{t=2}^{T-h} \tilde{w}_{t-1} \varepsilon_{1,t} + O_p(T^{-1/2}) \right\rbrace = O_p(1).\]
	By the Frisch-Waugh theorem and sample orthogonality of $\hat{\varepsilon}_{1,t}$ and $w_{t-1}$, we may write
	\begin{align}
		\hat{\beta}_h &= \theta_{h,T} + \frac{\frac{1}{T}\sum_{t=2}^{T-h} (y_{t+h}-\theta_{h,T} \hat{\varepsilon}_{1,t}) \hat{\varepsilon}_{1,t}}{\frac{1}{T}\sum_{t=2}^{T-h} \hat{\varepsilon}_{1,t}^2} \nonumber \\
		&= \theta_{h,T} + \frac{\frac{1}{T}\sum_{t=2}^{T-h} (y_{t+h}- \theta_{h,T} \varepsilon_{1,t}- \tau\varepsilon_{1,t-1}-  y_{t-1}) \hat{\varepsilon}_{1,t}}{\frac{1}{T}\sum_{t=2}^{T-h} \hat{\varepsilon}_{1,t}^2}. \label{eq:proof_lp_estim}
	\end{align}
	\cref{lem:negligible,lem:second_mom} and $D_T\hat{b}=O_p(1)$ yield $\frac{1}{T}\sum_{t=2}^{T-h} \hat{\varepsilon}_{1,t}^2 \stackrel{p}{\to} E(\varepsilon_{1,t}^2)=1$. We can therefore focus on the numerator in the fraction in \eqref{eq:proof_lp_estim}, which we decompose as
	\begin{equation} \label{eq:proof_lp_decomp}
		\frac{1}{T}\sum_{t=2}^{T-h} (y_{t+h}- \theta_{h,T} \varepsilon_{1,t}- \tau\varepsilon_{1,t-1}- y_{t-1}) \varepsilon_{1,t} + \frac{1}{T}\sum_{t=2}^{T-h} (y_{t+h}- \theta_{h,T} \varepsilon_{1,t}- \tau\varepsilon_{1,t-1}-y_{t-1}) (\hat{\varepsilon}_{1,t}-\varepsilon_{1,t}).
	\end{equation}
	We first show that the first term above equals the right-hand side of \eqref{eq:lem_lp}. Iteration on the DGP \eqref{eq:DGP_simple} implies
	\[y_{t+h}- \theta_{h,T} \varepsilon_{1,t}- \tau\varepsilon_{1,t-1}- y_{t-1} = \sum_{\ell=0}^h (b_{\ell,h}\varepsilon_{1,t+\ell}+\varepsilon_{2,t+\ell}) + \frac{\alpha}{\sqrt{T}}\sum_{\ell=0}^h \varepsilon_{1,t+\ell-2}\mathbbm{1}(\ell \neq 2).\]
	The desired conclusion then follows from
	\[\frac{1}{T}\sum_{t=2}^{T-h} \frac{\alpha}{\sqrt{T}}\sum_{\ell=0}^h \varepsilon_{1,t+\ell-2}\mathbbm{1}(\ell \neq 2)\varepsilon_{1,t} = O_p(T^{-1}),\]
	which can be easily verified with Chebyshev's inequality, using that $\varepsilon_{1,t+\ell}\varepsilon_{1,t}$ is a serially uncorrelated process for $\ell \neq 0$.
	
	We finish the proof by showing that the second term in \eqref{eq:proof_lp_decomp} is $O_p(T^{-1})$, i.e.,
	\[-\frac{1}{T}\sum_{t=2}^{T-h}\left\lbrace\sum_{\ell=0}^h (b_{\ell,h}\varepsilon_{1,t+\ell}+\varepsilon_{2,t+\ell}) + \frac{\alpha}{\sqrt{T}}\sum_{\ell=0}^h \varepsilon_{1,t+\ell-2}\mathbbm{1}(\ell \neq 2) \right\rbrace w_{t-1}'\hat{b} = O_p(T^{-1}).\]
	This follows from $D_T\hat{b} = O_p(1)$ and \cref{lem:negligible,lem:stoch_order,lem:stoch_order2}.
\end{proof}

\begin{lem} \label{lem:var}
	Define $A_0 \equiv \begin{psmallmatrix} 0 & 0 \\ \tau & 1 \end{psmallmatrix}$. We have
	\begin{align*}
		\hat{A} - A_0 &= \frac{1}{T}\sum_{t=2}^T \begin{pmatrix}
			\varepsilon_{1,t} \\
			\varepsilon_{1,t} + \varepsilon_{2,t}
		\end{pmatrix} \varepsilon_{1,t-1}e_1' + o_p(T^{-1/2}), \\
		\hat{\kappa} - 1 &= \frac{1}{T}\sum_{t=2}^{T} \varepsilon_{1,t}\varepsilon_{2,t} + o_p(T^{-1/2}).
	\end{align*}
\end{lem}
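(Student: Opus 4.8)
The plan is to express both $\hat A$ and $\hat\kappa$ as sample averages of products of the shocks with the truncated regressor $\tilde w_{t-1}$, and then to invoke \cref{lem:negligible}, Chebyshev's inequality, and the law of large numbers for the stationary process $\tilde w_t$, in the same spirit as the proof of \cref{lem:lp}. For the first display I would start from the OLS identity
\[
\hat A-A_0=\left(\tfrac1T\sum_{t=2}^T(w_t-A_0w_{t-1})w_{t-1}'\right)\left(\tfrac1T\sum_{t=2}^T w_{t-1}w_{t-1}'\right)^{-1},
\]
and observe that one iteration of the DGP \eqref{eq:DGP_simple} yields the exact identity $w_t-A_0w_{t-1}=(\varepsilon_{1,t},\,\varepsilon_{1,t}+\varepsilon_{2,t})'+\tfrac{\alpha}{\sqrt T}\varepsilon_{2,t-1}e_2$, which no longer depends on $w_{t-1}$. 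It then remains to: (i) replace the trailing $w_{t-1}'$ by $\tilde w_{t-1}'$, the difference $\tfrac1T\sum(w_t-A_0w_{t-1})(y_{t-1}-\tilde y_{t-1})$ being $O_p(T^{-1})$ by \cref{lem:negligible}; (ii) replace $\tfrac1T\sum w_{t-1}w_{t-1}'$ by the (positive definite) matrix $E(\tilde w_t\tilde w_t')$, which by \cref{lem:negligible}, Cauchy--Schwarz and the law of large numbers perturbs the inverse by only $o_p(1)$; and (iii) note that every stochastic factor that this inverse post-multiplies is $O_p(T^{-1/2})$, so the substitutions in (i)--(ii) contribute only $o_p(T^{-1/2})$ to $\hat A-A_0$.

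The one step requiring genuine care is the $\tfrac{\alpha}{\sqrt T}\varepsilon_{2,t-1}e_2$ piece. Its sample cross-moment with $\tilde w_{t-1}'=(\varepsilon_{1,t-1},\tilde y_{t-1})$ is $\tfrac{\alpha}{\sqrt T}e_2\bigl(\tfrac1T\sum\varepsilon_{2,t-1}\varepsilon_{1,t-1},\ \tfrac1T\sum\varepsilon_{2,t-1}\tilde y_{t-1}\bigr)$: the first coordinate is a mean-zero average, hence $O_p(T^{-1/2})$, so after the $\tfrac{\alpha}{\sqrt T}$ prefactor it is $o_p(T^{-1/2})$; the second coordinate converges to $E(\varepsilon_{2,t-1}\tilde y_{t-1})=\sigma_2^2$ --- read off the $k=0$ term of $\tilde y_{t-1}=\sum_{k\ge0}\rho^k(\varepsilon_{1,t-1-k}+\varepsilon_{2,t-1-k})$ --- leaving the \emph{non-negligible} deterministic term $\tfrac{\alpha\sigma_2^2}{\sqrt T}e_2e_2'$ plus an $o_p(T^{-1/2})$ remainder. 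Collecting (i)--(iii) together with this computation and factoring $E(\tilde w_t\tilde w_t')^{-1}$ to the right gives the first display. All the mean-zero sample averages that arise are $O_p(T^{-1/2})$ by Chebyshev's inequality, because the relevant summands ($\varepsilon_{j,t}\tilde w_{t-1}$, $\varepsilon_{1,t-1}\varepsilon_{2,t-1}$, and so on) are serially uncorrelated with finite variance under the moment assumptions of \cref{prop:biasvar_simple}.

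For the second display, recall from \cref{sec:biasvar} that $\hat\kappa$ --- the VAR estimate of the impact response of $y$ to $\varepsilon_1$ --- equals $(\tfrac1T\sum_{t=2}^T\hat u_{1,t}\hat u_{2,t})/(\tfrac1T\sum_{t=2}^T\hat u_{1,t}^2)$ in terms of the VAR residuals $\hat u_t\equiv w_t-\hat A w_{t-1}$, so that $\hat\kappa-1=(\tfrac1T\sum\hat u_{1,t}(\hat u_{2,t}-\hat u_{1,t}))/(\tfrac1T\sum\hat u_{1,t}^2)$. Using the first display (or merely $(\hat A-A_0)_{1\cdot}=O_p(T^{-1/2})$ and $(\hat A-A_0)_{2\cdot}=O_p(T^{-1/2})$) together with one iteration of \eqref{eq:DGP_simple}, I would write $\hat u_{1,t}=\varepsilon_{1,t}-(\hat A-A_0)_{1\cdot}w_{t-1}$ and --- after the $\rho y_{t-1}$ terms cancel between the two residuals --- $\hat u_{2,t}-\hat u_{1,t}=\varepsilon_{2,t}+\tfrac{\alpha}{\sqrt T}\varepsilon_{2,t-1}-\bigl((\hat A-A_0)_{2\cdot}-(\hat A-A_0)_{1\cdot}\bigr)w_{t-1}$. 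The denominator converges to $E(\varepsilon_{1,t}^2)=1$; in the numerator the leading term is $\tfrac1T\sum\varepsilon_{1,t}\varepsilon_{2,t}$, while the cross term $\tfrac{\alpha}{\sqrt T}\tfrac1T\sum\varepsilon_{1,t}\varepsilon_{2,t-1}$ equals $\tfrac{\alpha}{\sqrt T}\cdot O_p(T^{-1/2})$ and every term carrying a factor $(\hat A-A_0)_{j\cdot}w_{t-1}$ is $o_p(T^{-1/2})$ by combining that $O_p(T^{-1/2})$ rate with \cref{lem:negligible} and a Chebyshev bound on the matching average against $\tilde w_{t-1}$. Since $\tfrac1T\sum\varepsilon_{1,t}\varepsilon_{2,t}=O_p(T^{-1/2})$, dividing by $1+o_p(1)$ perturbs it only by $o_p(T^{-1/2})$, which yields the claim. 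I expect the bulk of the effort --- and the only genuinely non-routine point --- to lie in the first display: tracking the fact that the $O(T^{-1/2})$ ``local misspecification'' term $\tfrac{\alpha}{\sqrt T}\varepsilon_{2,t-1}$ feeds a deterministic $O(T^{-1/2})$ contribution into $\hat A$ through its correlation with the lagged regressor $\tilde y_{t-1}$, whereas (as the second display records) it leaves no trace at order $T^{-1/2}$ in $\hat\kappa$.
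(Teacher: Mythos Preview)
Your argument for the first display is essentially the paper's proof: the same OLS identity, the same replacement of $w_{t-1}$ by $\tilde w_{t-1}$ via \cref{lem:negligible}, and the same computation $E(\varepsilon_{2,t-1}\tilde w_{t-1})=\sigma_2^2 e_2$ to extract the deterministic bias term.

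For the second display your route is correct but genuinely different from the paper's. The paper observes that, by the Frisch--Waugh theorem, $\hat\kappa=\hat\Sigma_{21}/\hat\Sigma_{11}$ coincides with the coefficient on $\varepsilon_{1,t}$ in a regression of $y_t$ on $\varepsilon_{1,t}$ and $w_{t-1}$, which is precisely the impact LP estimate $\hat\beta_0$; the claim then follows in one line from \cref{lem:lp} at $h=0$. Your direct residual expansion instead writes out $\hat u_{1,t}$ and $\hat u_{2,t}-\hat u_{1,t}$ and bounds each cross term individually. This works---every term you flag as $o_p(T^{-1/2})$ really is, including the quadratic-in-$(\hat A-A_0)$ piece you do not mention explicitly but which is $O_p(T^{-1})$ by the same logic---but it duplicates algebra that \cref{lem:lp} already absorbed. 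The paper's shortcut is shorter and makes the connection $\hat\kappa=\hat\beta_0$ transparent; your approach is more self-contained and does not require recognizing that identity.
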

\begin{proof}
	By appealing repeatedly to \cref{lem:negligible,lem:stoch_order,lem:second_mom,lem:stoch_order2}, we get
	\begin{align*}
		\hat{A}-A_0 &= \left(\frac{1}{T}\sum_{t=2}^T \begin{pmatrix}
			\varepsilon_{1,t} \\
			\varepsilon_{1,t} + \varepsilon_{2,t} + \frac{\alpha}{\sqrt{T}}\varepsilon_{1,t-2}
		\end{pmatrix} w_{t-1}'D_T^{-1}T^{1/2} \right)\left(D_T^{-1}\sum_{t=2}^T w_{t-1}w_{t-1}' D_T^{-1} \right)^{-1}D_T^{-1}T^{1/2} \\
		&= \Bigg(\frac{1}{T}\sum_{t=2}^T \begin{pmatrix}
			\varepsilon_{1,t} \\
			\varepsilon_{1,t} + \varepsilon_{2,t}
		\end{pmatrix} \tilde{w}_{t-1}'D_T^{-1}T^{1/2} + \frac{\alpha}{T}e_2\sum_{t=2}^T \varepsilon_{1,t-2}
		\tilde{w}_{t-1}'D_T^{-1} + O_p(T^{-1}) \Bigg) \lbrace e_1e_1' + o_p(1) \rbrace \\
		&= \Bigg(\frac{1}{T}\sum_{t=2}^T \begin{pmatrix}
			\varepsilon_{1,t} \\
			\varepsilon_{1,t} + \varepsilon_{2,t}
		\end{pmatrix} \tilde{w}_{t-1}'D_T^{-1}T^{1/2} + O_p(T^{-1}) \Bigg)\lbrace e_1e_1' + o_p(1) \rbrace \\
		&= \Bigg(\frac{1}{T}\sum_{t=2}^T \begin{pmatrix}
			\varepsilon_{1,t} \\
			\varepsilon_{1,t} + \varepsilon_{2,t}
		\end{pmatrix} (\varepsilon_{1,t-1}e_1'+T^{-1/2}\tilde{y}_{t-1}e_2') + O_p(T^{-1}) \Bigg)\lbrace e_1e_1' + o_p(1) \rbrace \\
		&= \frac{1}{T}\sum_{t=2}^T \begin{pmatrix}
			\varepsilon_{1,t} \\
			\varepsilon_{1,t} + \varepsilon_{2,t}
		\end{pmatrix} \varepsilon_{1,t-1}e_1'+\frac{1}{T^{3/2}}\sum_{t=2}^T \begin{pmatrix}
			\varepsilon_{1,t} \\
			\varepsilon_{1,t} + \varepsilon_{2,t}
		\end{pmatrix} \tilde{y}_{t-1}e_2' \times o_p(1) + o_p(T^{-1/2}) \\
		&= \frac{1}{T}\sum_{t=2}^T \begin{pmatrix}
			\varepsilon_{1,t} \\
			\varepsilon_{1,t} + \varepsilon_{2,t}
		\end{pmatrix} \varepsilon_{1,t-1}e_1' + O_p(T^{-1/2}) \times o_p(1) +  o_p(T^{-1/2}).
	\end{align*}
	This proves the first statement of the lemma.
	
	Next, by the Frisch-Waugh Theorem, $\hat{\kappa} \equiv \hat{\Sigma}_{21}/\hat{\Sigma}_{11}$ equals the coefficient on $\varepsilon_{1,t}$ in an OLS regression of $y_t$ on $\varepsilon_{1,t}$ and $w_{t-1}$. In other words, $\hat{\kappa}$ equals the impact LP estimate $\hat{\beta}_0$. The second statement of the lemma then follows from \cref{lem:lp} applied to $h=0$.
\end{proof}

\subsection{Proof of \texorpdfstring{\cref{prop:biasvar_simple}}{Proposition \ref{prop:biasvar_simple}}}

We derive the asymptotic distributions of the LP and VAR estimators in that order.

\paragraph{LP.} 
It follows from \cref{lem:lp} and a standard martingale central limit theorem that
\begin{equation*}
	\sqrt{T}(\hat{\beta}_h-\theta_{h,T}) \stackrel{d}{\to} N(0,\avar_\text{LP}),
\end{equation*}
where
\begin{align*}
	\avar_\text{LP} &= E(\varepsilon_{1,t}^2) E\left(\left\lbrace \sum_{\ell=0}^h (b_{\ell,h}\varepsilon_{1,t+\ell} + \varepsilon_{2,t+\ell}) \right\rbrace^2\right) \\
	&= \sum_{\ell=0}^h b_{\ell,h}^2 + (h+1)\sigma_2^2 \\
	&= \lbrace 1+(h-1)(1+\tau)^2)\rbrace \mathbbm{1}(h \geq 1) + (h+1)\sigma_2^2.
\end{align*}

\paragraph{VAR.}
We derive the asymptotic distribution of $\hat{\delta}_h$ by appealing to the delta method. Let $f_h(A,\kappa) \equiv e_2'A^h\gamma$, where $\gamma=(1,\kappa)'$, so that $\hat{\delta}_h = f_h(\hat{A},\hat{\kappa})$. We need the Jacobians of this transformation with respect to $\ve(A)$ and $\kappa$. In fact, we only require the Jacobians with respect to $\ve(Ae_1)$ and $\kappa$, since \cref{lem:var} implies that the second column of $\hat{A}$ is super-consistent. The Jacobians should be evaluated at $\plim \hat{A} = A_0 \equiv \begin{psmallmatrix} 0 & 0 \\ \tau & 1 \end{psmallmatrix}$ and $\plim \hat{\kappa}=1$. Thus, $\gamma$ should be evaluated at $\gamma_0 \equiv (1,1)'$.

First, for $h\geq 2$, the Jacobian with respect to $Ae_1$ equals \citep[p. 208]{Magnus2007}
\begin{align*}
	\frac{\partial e_2'A^h \gamma}{\partial \ve(Ae_1)
	} \Big|_{A=A_0,\gamma=\gamma_0} &= (\gamma' \otimes e_2') \sum_{j=1}^h \left( (A_0')^{h-j} \otimes A_0^{j-1}\right) (e_1 \otimes I) \\
	&= (\gamma' \otimes e_2') \sum_{j=1}^h \lbrace(A_0')^{h-j}e_1\rbrace \otimes A_0^{j-1} \\
	&= (\gamma' \otimes e_2') (e_1 \otimes A_0 + 0 )\\
	&= (\gamma'e_1 \otimes e_2'A_0) \\
	&= (\tau,1),
\end{align*}
where the third equality uses that $A_0$ is an idempotent matrix, and $A_0'e_1=0$. For $h=1$, the Jacobian with respect to $\ve(Ae_1)$ obviously equals $(0,1)$. So we can write the Jacobian for all $h \geq 1$ as $(\tau \mathbbm{1}(h \geq 2),1)$.

Second, for any $h \geq 1$, the Jacobian with respect to $\kappa$ equals
\[\frac{\partial e_2'A^h \gamma}{\partial \kappa
} \Big|_{A=A_0,\gamma=\gamma_0} = e_2'A_0^he_2 = e_2'A_0e_2 = 1.\]

Next, \cref{lem:var} and a standard martingale central limit theorem imply
\[\sqrt{T}(\hat{A}-A_0)e_1 \stackrel{d}{\to} N(0,\avar(\hat{A}e_1)),\quad \sqrt{T}(\hat{\kappa}-1) \stackrel{d}{\to} N(0,\avar(\hat{\kappa})),\]
where
\begin{align*}
	\avar(\hat{A}e_1) &= \var\left(\begin{pmatrix}
		\varepsilon_{1,t} \\
		\varepsilon_{1,t}+\varepsilon_{2,t}
	\end{pmatrix}\varepsilon_{1,t-1}\right) = \begin{pmatrix}
		1 & 1 \\
		1 & 1+\sigma_2^2
	\end{pmatrix}, \\
	\avar(\hat{\kappa}) &= \var(\varepsilon_{1,t}\varepsilon_{2,t}) = \sigma_2^2.
\end{align*}
Moreover, $\hat{A}e_1$ and $\hat{\kappa}$ are asymptotically independent by \cref{lem:var}, since
\[\cov\left(\begin{pmatrix}
	\varepsilon_{1,t} \\
	\varepsilon_{1,t}+\varepsilon_{2,t}
\end{pmatrix}\varepsilon_{1,t-1},\varepsilon_{1,t}\varepsilon_{2,t}\right) = 0.\]
Given all the preceding ingredients, we can apply the delta method to conclude that
\[\sqrt{T}(\hat{\delta}_h-\theta_{h,T}) = \sqrt{T}(\hat{\delta}_h-e_2'A^h\gamma) + \sqrt{T}(e_2'A^h\gamma-\theta_{h,T}) \stackrel{d}{\to} N(\abias_\text{VAR},\avar_\text{VAR}),\]
where, for $h \geq 1$,
\begin{align*}
	\pushQED{\qed}
	\abias_\text{VAR} &= \sqrt{T}(e_2'A^h\gamma-\theta_{h,T}) = -\alpha \mathbbm{1}(h \geq 2), \\[0.5\baselineskip]
	\avar_\text{VAR} &= (\tau \mathbbm{1}(h \geq 2),1)\begin{pmatrix}
		1 & 1 \\
		1 & 1+\sigma_2^2
	\end{pmatrix}\begin{pmatrix}
		\tau \mathbbm{1}(h \geq 2) \\
		1
	\end{pmatrix} + \sigma_2^2 \\
	&= (1+\tau \mathbbm{1}(h \geq 2))^2+2\sigma_2^2. \qedhere
	\popQED
\end{align*}

\clearpage
\phantomsection
\addcontentsline{toc}{section}{References}
\bibliography{lp_var_simul_ref}

\end{appendices}